\documentclass[11pt]{article}
\usepackage[utf8]{inputenc}
\usepackage[T1]{fontenc}
%
%
\usepackage{authblk}

\usepackage[american]{babel}
\usepackage{amsmath}
\usepackage{amsfonts}
\usepackage{amssymb}
\usepackage{graphicx}
\usepackage{breakcites}

\usepackage{here}

\usepackage{comment}
\usepackage{multirow}
\usepackage{booktabs}
\usepackage{color}
\usepackage{subcaption}
\usepackage{tabularx}

\usepackage{apxproof}
\usepackage{wrapfig}
\usepackage{thmtools, thm-restate}

\usepackage[colorlinks,      								
citecolor=black,  								
linkcolor=black,  								
urlcolor=black,   								
]{hyperref}

\usepackage{tikz}
\usetikzlibrary{shapes,snakes}
\usetikzlibrary{arrows,automata}
\usetikzlibrary{positioning}

\theoremstyle{definition}
\newtheorem{definition}{Definition}

\theoremstyle{plain}
\newtheorem{theorem}{Theorem}
\newtheorem{proposition}{Proposition}
\newtheorem{corollary}{Corollary}
\newtheorem{lemma}{Lemma}

\theoremstyle{remark}
\newtheorem{remark}{Remark}

\usepackage{makeidx}
\makeindex

\bibliographystyle{abbrv}
\usepackage{xspace}

\usepackage{pifont}
\newcommand{\NP}{\textsf{NP}}
\newcommand{\PSPACE}{\textsf{PSPACE}}

\newcommand{\PTIME}{\textsf{P}\xspace}

\newcommand{\FPT}{\textsf{FPT}}
\newcommand{\We}{\textsf{W[1]}}
\newcommand{\cerny}{{\v{C}}ern{\'{y}}}

\newcommand{\ore}[1][w]{\ensuremath{\propto^{l<l}_{#1 @s}}}
\newcommand{\orz}[1][w]{\ensuremath{\propto^{l \le l}_{#1 @s}}}

\newcommand{\orep}[1][w]{\ensuremath{\propto^{l<l}_{#1 @p}}}
\newcommand{\orzp}[1][w]{\ensuremath{\propto^{l \le l}_{#1 @p}}}
\newcommand{\ordp}[1][w]{\ensuremath{\propto^{l<f}_{#1 @p}}}

\DeclareMathOperator{\first}{first}
\DeclareMathOperator{\last}{last}
\DeclareMathOperator{\explore}{\texttt{explore}}
\excludecomment{cancel}

\begin{document}
	
\title{Synchronization under Dynamic Constraints} 

\author[]{Petra Wolf\thanks{The author was supported by DFG-funded project FE560/9-1}}
\affil[]{Universit\"at Trier, Germany,\\ \texttt{wolfp@informatik.uni-trier.de}}	
\date{}
\maketitle

\begin{abstract}
We introduce a new natural variant of the synchronization problem. Our aim is to model different constraints on the order in which a potential synchronizing word might traverse through the states. We discuss how a word can induce a state-order and examine the computational complexity of different variants of the problem whether an automaton can be synchronized with a word of which the induced order agrees with a given relation.
	While most of the problems are \PSPACE-complete we also observe \NP-complete variants and variants solvable in polynomial time. One of them is the careful synchronization problem for partial weakly acyclic automata (which are partial automata whose states can be ordered such that no transition leads to a smaller state),
	which is shown to be solvable in time $\mathcal{O}(k^2 n^2)$ where $n$ is the size of the state set and $k$ is the alphabet-size. The algorithm even computes a synchronizing word as a witness.
	This is quite surprising as the careful synchronization problem uses to be a hard problem for most classes of automata. We will also observe a drop in the complexity if we track the orders of states on several paths simultaneously instead of tracking the set of active states. Further, we give upper bounds on the length of a synchronizing word depending on the size of the input relation and show that (despite the partiality) the bound of the \cerny\ conjecture also holds for partial weakly acyclic automata.
\end{abstract}
\section{Introduction}
\index{dynamic constraints|(}
\index{sync under order}
\index{semi-automaton}
We call $A = (Q, \Sigma, \delta)$ a deterministic partial (semi-) automaton (DPA) if $Q$ is a finite set of states, $\Sigma$ is a finite alphabet, and $\delta \colon Q \times \Sigma \to Q$ is a (potentially partial) transition function. If $\delta$ is defined for every element in $Q \times \Sigma$, we call $A$ a deterministic complete (semi-) automaton (DCA). Clearly, every DCA is also a DPA. We do not specify any start and final states as we are only interested in the transition of states. A DCA~$A = (Q, \Sigma, \delta)$ is \emph{synchronizing} if there exists a word $w \in \Sigma^*$ such that $w$ takes every state to the same state. In that case, we call $w$ a \emph{synchronizing word} for $A$. If we are only interested in synchronizing a subset of states $S \subseteq Q$ we refer to the problem as \emph{subset synchronization}.
\index{synchronizing automaton}

\index{assembly line}
\index{parts orienter}
One of the oldest applications of the intensively studied topic of synchronizing automata is the problem of designing parts orienters, which are robots or machines that get an object in an (due to a lack of expensive sensors) unknown orientation and transform it into a defined orientation~\cite{DBLP:journals/tcs/AnanichevV04}. In his pioneering work, Natarajan \cite{DBLP:conf/focs/Natarajan86} modeled the parts orienters as deterministic complete automata where a state corresponds to a possible orientation of a part and a transition of some letter $a$ from state $q$ corresponds to applying the modifier corresponding to $a$ to a part in orientation $q$. He proved that the synchronization problem is solvable in polynomial time for -- what is later called -- the class of  \emph{orientable automata}~\cite{DBLP:journals/tcs/Ryzhikov19a} if the cyclic order respected by the automaton is part of the input.
Many different classes of automata have since been studied regarding their synchronization behavior. We refer to~\cite{DBLP:conf/lata/Volkov08,beal_perrin_2016,JALC20} for an overview.
The original motivation of designing a parts orienter was revisited in~\cite{DBLP:journals/ijfcs/TurkerY15} where T{\"{u}}rker and Yenig{\"{u}}n modeled the design of an assembly line, which again brings a part from an unknown orientation into a known orientation, where different modifiers have different costs. 
What has not been considered so far is that different modifiers can have different impact on the parts and as we do not know the current orientation we might want to restrict the chronology of applied modifiers. For example, if the part is a box with a fold-out lid, turning it upside-down will cause the lid to open. In order to close the lid one might need another modifier such as a low bar which brushes the lid and closes it again. To specify that a parts orienter should deliver the box facing upward with a closed lid one needs to encode something like: ``When the box is in the state \emph{facing down}, it later needs to be in the state \emph{lid closed}''. But this does not stop us from opening the lid again, so we need to be more precise and encode: ``After \textbf{the last time} the box was in the state \emph{facing down}, it needs to visit the state \emph{lid closed} at least once''. 
We will implement these conditions in our model of a parts orienter by enhancing a given DCA with a relation $R$. We will then consider different ways of how a synchronizing word implies an order on the states and ask whether there exists a synchronizing word whose implied state-order agrees with the input-relation $R$. 
The case-example above will be covered by the first two introduced orders. The third considered order relates to
 the following scenario:
Let us again picture the box with the lid in mind, but this time the box initially contains some water. We would like to have the box in a specific orientation with the lid open but the water should not be shed during orientating. We have a modifier that opens the lid and a modifier which rotates the box. Clearly we do not want the box to face downwards after the lid has been opened. So, we encode: ``As soon as the state \emph{lid open} has been reached, the state \emph{facing downwards} should never be entered again''. 

For every type of dynamic constraint (which we will also call \emph{order}), we investigate the computational complexity of the problem whether a given automaton admits a synchronizing word that transitions the states of the automaton in an order that is conform with a given relation. 
Thereby, we distinguish between tracking all active states simultaneously and tracking each state individually. 
We observe different complexities for different ordering concepts and get a good understanding of which ordering constraints yield tractable synchronization problems and which do not.
The complexity of the problem also depends on how detailed we describe the allowed sequence of states. 
\begin{cancel}
For example, if we provide a strict and total order on the states the problem concerning the third order can be solved in polynomial time. This is quite surprising as the class of automata admitting synchronizing words under this order is equivalent to the class of synchronizable partial weakly acyclic automata and hence the synchronization problem for this class of automata can be decided in polynomial time as well. We further observe quadratic and sub-quadratic length bounds on a shortest synchronizing word concerning the third order. Both results are quite surprising since for most classes of partial automata the problem of synchronizing the automaton without using an undefined transition is \PSPACE-complete and shortest synchronization words can have exponential lower bounds.
\end{cancel}
\section{Related Work}

The problem of checking whether a synchronizing word exists for a given DCA $A=(Q, \Sigma, \delta)$ can be solved in time $\mathcal{O}(|Q|^2|\Sigma|)$, when no synchronizing word is computed, and in time~$\mathcal{O}(|Q|^3)$ when a witnessing synchronizing word is demanded~\cite{DBLP:journals/siamcomp/Eppstein90,DBLP:conf/lata/Volkov08}. In comparison, if we only ask for a subset of states $S \subseteq Q$ to be synchronized, the problem becomes \PSPACE-complete for general DCAs~\cite{DBLP:conf/dagstuhl/Sandberg04}. These two problems have been investigated for several smaller classes of automata involving orders on states. 
Here, we want to mention the class of \emph{oriented} automata whose states can be arranged in a cyclic order which is preserved by all transitions. This model has been studied among others in \cite{DBLP:conf/focs/Natarajan86,DBLP:journals/siamcomp/Eppstein90,DBLP:journals/tcs/AnanichevV04,DBLP:journals/fuin/RyzhikovS18,DBLP:conf/lata/Volkov08}. If the order on the states is linear instead of cyclic, we get the class of \emph{monotone} automata which has been studied in \cite{DBLP:journals/tcs/AnanichevV04,DBLP:journals/fuin/RyzhikovS18}.
\begin{cancel}
The first considered class of this kind is the class of \emph{oriented} automata\footnote{This class is called \emph{monotone} in~\cite{DBLP:journals/siamcomp/Eppstein90} but later it is
called \emph{oriented} in~\cite{DBLP:journals/tcs/AnanichevV04} or \emph{orientable} in~\cite{DBLP:journals/fuin/RyzhikovS18}.} whose states can be arranged in a cyclic order which is preserved by all transitions~\cite{DBLP:journals/fuin/RyzhikovS18,DBLP:conf/lata/Volkov08}. In other words, there exists a cyclic ordering $q_1, q_2, \dots, q_n$ of the states such that for all states $p, q$ in the ordering and all letters $a \in \Sigma$, if $p \leq q$, then $\delta(p, a) \leq \delta(q, a)$. An automaton is called \emph{monotone} if the just described order is linear instead of cyclic. It is shown in~\cite{DBLP:conf/focs/Natarajan86} and~\cite{DBLP:journals/siamcomp/Eppstein90} that both synchronization and subset synchronization can be done in polynomial time for oriented automata if the order respected by the automaton is known.
For oriented automata, a tight upper bound of $(n-1)^2$ on the length of a shortest word synchronizing an arbitrary subset of states is known~\cite{DBLP:journals/siamcomp/Eppstein90,DBLP:journals/tcs/AnanichevV04} and hence the \cerny\ conjecture~\cite{cerny1964,Cer2019} holds for oriented automata.
Since the class of monotone automata is a (proper) subclass of the class of oriented automata, the same upper bound applies for monotone automata. 
In~\cite{DBLP:journals/tcs/AnanichevV04} the length of a synchronizing word for monotone 
automata is narrowed to $n-1$.
 In~\cite{DBLP:journals/tcs/AnanichevV05} the athors showed that this bound also holds for a larger class of automata which they call \emph{generalized monotonic automata}. If one asked for a word of a given \emph{interval} rank instead, in~\cite{DBLP:journals/tcs/AnanichevV04} a quadratic upper bound is given and for an interval rank of $k=2$ it is shown that no linear upper bound exists for monotone automata.

In~\cite{DBLP:journals/fuin/RyzhikovS18} Ryzhikov and Shemyakov gave a direct algorithms that checks subset synchronization of monotone automata in time $\mathcal{O}(n^2k)$ by checking if each pair in the subset can be synchronized. Note that this condition is not sufficient for subset synchronization in general as here the problem is \PSPACE-complete~\cite{DBLP:conf/dagstuhl/Sandberg04}. They also showed that a shortest word synchronizing a monotone automata can be computed in time $\mathcal{O}(n^4k)$ while the {\sc Short-Synchronizing-Word} problem
is \NP-complete for DCAs in general~\cite{Rys80,DBLP:journals/siamcomp/Eppstein90}.
It is shown in~\cite{DBLP:journals/fuin/RyzhikovS18} that finding a synchronizable subset of maximum size can be done in polynomial time for monotone automata while the decision variant of the problem is \PSPACE-complete for general binary DCAs~\cite{DBLP:journals/tcs/Ryzhikov19a}.
\end{cancel}
%
An automaton is called \emph{aperiodic}~\cite{beal_perrin_2016} if there is a non-negative integer $k$ such that for any word $w$ and any state $q$ it holds that $\delta(q, w^k) = \delta(q, w^{k+1})$. 
An automaton is called \emph{weakly acyclic}~\cite{DBLP:journals/tcs/Ryzhikov19a} if there exists an ordering of the states $q_1, q_2, \dots, q_n$ such that if $\delta(q_i, a) = q_j$ for some letter $a \in \Sigma$, then $i \leq j$.
\index{weakly acyclic automaton}
\index{WAA}
In other words, all cycles in a WAA are self-loops.
In Section~\ref{sec:results} we will consider partial WAAs.
The class of WAAs forms a proper subclass of the class of aperiodic automata.
Each synchronizing aperiodic automaton admits a synchronizing word of length at most ${n(n-1)}/{2}$~\cite{DBLP:journals/dmtcs/Trahtman07}, whereas synchronizing WAAs admit synchronizing words of linear lengths~\cite{DBLP:journals/tcs/Ryzhikov19a}.
Asking whether an aperiodic automaton admits a synchronizing word of length at most $k$ is an \NP-complete task~\cite{DBLP:conf/lata/Volkov08} as it is for general DCAs~\cite{Rys80,DBLP:journals/siamcomp/Eppstein90}. 
The subset synchronization problem for WAAs, and hence for aperiodic automata, is \NP-complete~\cite{DBLP:journals/tcs/Ryzhikov19a}. 

Going from complete automata to partial automata normally brings a jump in complexity. For example, the so called \emph{careful synchronization} problem for DPAs asks for synchronizing a partial automata such that the synchronizing word $w$ is defined on all states.
The problem is \PSPACE-complete for DPAs with a binary alphabet~\cite{DBLP:journals/mst/Martyugin14}. It is even \PSPACE-complete for DPAs with a binary alphabet if $\delta$ is undefined for only one pair in $Q\times\Sigma$~\cite{DBLP:conf/wia/Martyugin12}.
The length of a shortest carefully synchronizing word  $c(n)$, for a DPA with $|Q| = n$, differs with $\Omega(3^{\frac{n}{3}}) \leq c(n) \leq \mathcal{O}(4^{\frac{n}{3}}\cdot n^2)$~\cite{DBLP:conf/wia/Martyugin12} significantly from the cubic upper-bound for complete automata. Also for the smaller class of monotone partial automata with an unbounded alphabet size, an exponential lower bound on the length of a shortest carefully synchronizing word is known, while for fixed alphabet sizes of 2 and 3 only a polynomial lower bound is obtained~\cite{DBLP:journals/fuin/RyzhikovS18}.
The careful synchronization problem is \NP-hard for partial monotone automata
over a four-letter alphabet \cite{DBLP:journals/ijfcs/TurkerY15,DBLP:journals/fuin/RyzhikovS18}. It is also \NP-hard for aperiodic partial automata over a three-letter alphabet~\cite{DBLP:journals/tcs/Ryzhikov19a}.
In contrast we show in Section~\ref{sec:results} that the careful synchronization problem is decidable in polynomial time for partial WAAs.

In~\cite{DBLP:journals/tcs/Ryzhikov19a,DBLP:journals/fuin/RyzhikovS18} several hardness and inapproximability results are obtained for WAAs, which can be transferred into our setting as depicted in Section~\ref{sec:results}. 
We will also observe \We-hardness results from the reductions given in~\cite{DBLP:journals/tcs/Ryzhikov19a}. So far, only little is known (see for example~\cite{DBLP:journals/jcss/FernauHV15,DBLP:journals/dmtcs/VorelR15,DBLP:conf/ncma/BruchertseiferF19}) about the parameterized complexity of all the different synchronization variants considered in the literature. 


While synchronizing an automaton under a given order, the set of available (or allowed) transitions per state may depend on the previously visited states on all paths. 
This dynamic 
of allowed transitions of a state depending on the history of chosen transition 
can also be observed in weighted and timed automata \cite{DBLP:conf/fsttcs/0001JLMS14}. 
%
More static constraints given by a second automaton have been discussed in~\cite{DBLP:conf/mfcs/FernauGHHVW19}.

\section{Problem Definitions}
\label{sec:prelim}

\begin{cancel}
A deterministic automaton $A=(Q, \Sigma, \delta)$ that might either be partial or complete is simple called an \emph{automaton}. 
The transition function $\delta$ is generalized to words in the usual way. It is further generalized to sets of states $S \subseteq Q$ as $\delta(S, w) := \{\delta(q, w)\mid q \in S\}$. We will also refer to the set $\delta(S, w)$ as $S.w$. We call a state $q$ \emph{active} regarding a word $w$ if $q \in Q.w$.
We denote by $|S|$ the size of the set $S$.
With $[i..j]$ we refer to the set $\{k \in \mathbb{N}\mid i \leq k \leq j\}$.
For a word $w$ over some alphabet $\Sigma$, we denote by $|w|$ the length of $w$, by $w[i]$ the $i^ \text{th}$ symbol of $w$ (or the empty word $\epsilon$ if $i = 0$) and by $w[i..j]$ we denote the factor of $w$ from symbol $i$ to symbol $j$. 

For an automaton $A = (Q, \Sigma, \delta)$ and a word $w \in \Sigma^*$, we call $w$ a \emph{synchronizing word} for $A$ if $\delta(q, w)$ is defined for every state in $Q$ and $|Q.w| = 1$. If there exists a synchronizing word for $A$, we call $A$ \emph{synchronizing}.

We expect the reader to be familiar with the basic concepts in the fields of complexity theory, approximation theory and parameterized complexity theory. We refer to the textbooks~\cite{DBLP:books/sp/CyganFKLMPPS15,DBLP:books/daglib/0086373,DBLP:books/lib/Ausiello99} as a reference.
\end{cancel}
A deterministic semi-automaton $A=(Q, \Sigma, \delta)$ that might either be partial or complete is called an \emph{automaton}.
The transition function $\delta$ is generalized to words in the usual way. It is further generalized to sets of states $S \subseteq Q$ as $\delta(S, w) := \{\delta(q, w)\mid q \in S\}$. We sometimes refer to $\delta(S, w)$ as $S.w$. We call a state $q$ \emph{active} regarding a word $w$ if $q \in Q.w$. 
If for some $w\in \Sigma^*$, $|Q.w| = 1$  we call $q\in Q.w$ a \emph{synchronizing state}.
We denote by $|S|$ the size of the set $S$.
With $[i..j]$ we refer to the set $\{k \in \mathbb{N}\mid i \leq k \leq j\}$.
For a word $w$ over some alphabet $\Sigma$, we denote by $|w|$ the length of $w$, by $w[i]$ the $i^ \text{th}$ symbol of $w$ (or the empty word $\epsilon$ if $i = 0$) and by $w[i..j]$ the factor of $w$ from symbol $i$ to symbol $j$. 
For each state~$q$, we call the sequence of active states $q.w[i]$ for $0 \leq i \leq |w|$ the \emph{path} induced by $w$ starting at~$q$.
%
%
We expect the reader to be familiar with basic concepts in 
complexity theory, approximation theory and parameterized complexity theory.
We refer to the textbooks~
\cite{DBLP:books/sp/CyganFKLMPPS15,DBLP:books/daglib/0086373,DBLP:books/lib/Ausiello99}.
 as a reference.

We are now presenting different orders $\lessdot_w$ which describe how a word traverses an automaton. We describe how a word implies each of the three presented orders. The first two orders relate the last visits of the states to each other, while the third type of order relates the first visits. We will then combine the order with an automaton $A$ and a relation $R \subseteq Q^2$ given in the input and ask whether there exists a synchronizing word for $A$ such that the implied order of the word agrees with the relation $R$. An order $\lessdot_w$ agrees with a relation $R \subseteq Q^2$ if and only if for all pairs $(p, q) \in R$ it holds that $p \lessdot_w q$, i.e., $R \subseteq\, \lessdot_w$.

For any of the below defined orders $\lessdot_w \subseteq Q \times Q$, we define the problem of \emph{synchronization under order} and \emph{subset synchronization under order} as:
\begin{definition}[\textmd{{\sc Sync-Under-$\lessdot_w$}}]
	\index{\textsc{Sync-Under-$\lessdot_w$}}
	Given a DCA $A = (Q, \Sigma, \delta)$ and a relation $R \subseteq Q^2$. Does there exist a word $w \in \Sigma^*$ such that $|Q.w| = 1$ and $R \subseteq\, \lessdot_w$?
\end{definition}
\begin{definition}[\textmd{{\sc Subset-Sync-Under-$\lessdot_w$}}]
	\index{\textsc{Subset-Sync-Under-$\lessdot_w$}}
	Given a DCA $A = (Q, \Sigma, \delta)$, $S \subseteq Q$, and a relation $R \subseteq Q^2$. Is there a word $w \in \Sigma^*$ with $|S.w| = 1$ and $R \subseteq \lessdot_w$?
\end{definition}
It is reasonable to distinguish whether the order should include the initial configuration of the automaton or if it should only describe the consequences of the chosen transitions. 
In the former case, we refer to the problem as {\sc Sync-Under-$\mathit{0}$-$\lessdot_w$} (starting at $w[0]$), in the latter case as {\sc Sync-Under-$\mathit{1}$-$\lessdot_w$} (starting at $w[1]$), and if the result holds for both variants, we simply refer to is as {\sc Sync-Under-$\lessdot_w$}.
Examples for positive and negative instances of the problem synchronization under order for some discussed variants are illustrated in Figure~\ref{fig:expl}. 
Let $\first(q, w, S)$ be the function returning the
minimum of positions at which the state $q$ appears as an active state over all paths induced by $w$ starting at some state in $S$. Accordingly, let $\last(q, w, S)$ return the maximum of those positions.
 Note that $\first(q, w, S) = 0$ for all states $q \in S$ and $> 0$ for $q \in Q \backslash S$. If~$q$ does not
 appear on a path induced by $w$ on $S$,
then set $\first(q, w, S) := |w|+1$ and $\last(q, w, S) := -1$. In the {\sc Sync-Under-$\mathit{1}$-$\lessdot_w$} problem variant, the occurrence of a state at position 0 is ignored (i.e., if $q$ occurs only at position 0 while reading $w$ on $S$, then $\last(q, w, S) = -1$). In the following definitions let $A = (Q, \Sigma, \delta)$ be a DCA and let $p, q \in Q$. The following relations $\lessdot_w$ are defined for every word $w \in \Sigma^*$.
\begin{definition}[\textmd{Order $l<l$ on sets}]
	\index{order $l < l$}
	\label{def:o1}
	$
	p \ore q :\Leftrightarrow
	\last(p, w, Q) < \last(q, w, Q)
$.
\end{definition}


\begin{definition}[\textmd{Order $l\leq l$ on sets}]
	\index{order $l \leq l$}
	\label{def:o2}
	$
	 p \orz q :\Leftrightarrow 
	\last(p, w, Q) \leq \last(q, w, Q)$.
\end{definition}
The second order differs from the first in the sense that $q$ does not have to appear finally without~$p$, instead they can disappear simultaneously. Further, note that in comparison with order $\ore$, for a pair $(p, q)$ in order $\orz$ it is not demanded that $q$ is active after reading $w$ up to some position $i>0$. This will make a difference when we later consider the orders on isolated paths rather than on the transition of the whole state set. 
It can easily be verified that for any word $w \in \Sigma^*$ and any automaton $A=(Q, \Sigma, \delta)$ the order~$\ore$ is a proper subset of $\orz$.
For the order $\orz$, it makes no difference whether we take the initial configuration into account since states can disappear simultaneously.

So far, we only introduced orders which consider the set of active states as a whole. It did not matter which active state belongs to which path and a state on a path $\tau$ could stand in a relation with a state on some other path $\rho$. But, in most scenarios the fact that we start with the active state set $Q$ only models the lack of knowledge about the \emph{actual} current state. In practice only one state $q$ is active and hence any constraints on the ordering of transitioned states should apply to the path starting at $q$. Therefore, we are introducing variants of order~1 and 2 which are defined on paths rather than on series of state sets.

\begin{definition} [\textmd{Order $l<l$ on paths}]
	$$p \orep q :\Leftrightarrow 
	\forall r \in Q \colon 
	\last(p, w, \{r\}) < \last(q, w, \{r\}).$$
\end{definition}

\begin{definition} [\textmd{Order $l \leq l$ on paths}]
$$p \orzp q :\Leftrightarrow 
	\forall r \in Q \colon 
	\last(p, w, \{r\}) \leq \last(q, w, \{r\}).$$
\end{definition}
The orders $\orep$ and $\orzp$ significantly differ since the synchronization problem (starting at position $1$) for $\orep$ is in \NP\ while it is \PSPACE-complete for $\orzp$.
%

While the previously defined orders are bringing ``positive'' constraints to the future transitions of a word, in the sense that the visit of a state $p$ will demand for a later visit of the state $q$ (as opening the lid demands closing the lid later in our introductory example), we will now introduce an order which yields ``negative'' constraints. The third kind of order demands for a pair of states $(p, q)$ that the (first) visit of the state $q$ forbids any future visits of the state $p$ (like do not turn the box after opening the lid). This stands in contrast to the previous orders where we could made up for a ``forbidden'' visit of the state $p$ by visiting $q$ again.
The order $l<f$ will only be considered on paths since when we consider the state set~$Q$, every pair in $R$ would already be violated in position~0.
%
\begin{definition}[\textmd{Order $l<f$ on paths}]
	\index{order $l<f$}
	\label{def:o3}
$$p \ordp q :\Leftrightarrow \forall r \in Q:
	\last(p, w, \{r\}) < \first(q, w, \{r\}).
	$$
\end{definition}
Note that $\ordp$ is not transitive; e.g., for $R=\{(1,2), (2,3)\}$ we are allowed to go from 3 to~1 as long as we have not transitioned from 1 to 2 yet.
For the order $l<f$, we will also consider the special case of $R$ being a strict total order (irreflexive, asymmetric, transitive, and total).
\begin{definition}[\textmd{{\sc Sync-Under-Total-$\ordp$}}]
	\index{\textsc{Sync-Under-Total-$\ordp$}}
	\label{def:O3total}
	Given a DCA $A = (Q, \Sigma, \delta)$, a strict and total order $R \subseteq Q^2$. 
	Is there a word $w \in \Sigma^*$ with $|Q.w| = 1$ and $R \subseteq\, \ordp$?
\end{definition}
%
%
The orders on path could also be stated as LTL formulas of some kind which need to be satisfied on every path induced by a synchronizing word $w$ and our hardness results transfer to the more general problem whether a given DCA can be synchronized by a word such that every path induced by $w$ satisfies a given LTL formula. The orders on sets could be translated into LTL formulas which need to be satisfied on the path in the powerset-automaton starting in the state representing $Q$.
\index{LTL}

Using the temporal operators globally $\square$, finally $\lozenge$, and until $\mathcal{U}$, we can express the orders $\propto^{l \leq l}$, $\propto^{l < l}$, and $\propto^{l < f}$ as follows, see~\cite{DBLP:conf/podc/MannaP89,DBLP:conf/icalp/ChangMP92} for details on these operators. 
For instance, the order $p \propto^{l \leq l}q$ can be expressed as $\square (\lozenge q \vee \square \neg p)$, meaning that globally it holds that finally $q$ holds or globally $p$ does not hold; the order $p \propto^{l < l}q$ can be expressed as $\lozenge (q \wedge \square \neg p)$, meaning that finally $q$ holds and from there on $p$ does not hold anymore; and the order $p \propto^{l < f}q$ can be expressed as $\neg q\, \mathcal{U} (\square \neg p)$, meaning that there is a position from which on $p$ does not hold anymore and before that $q$ does not hold.

Despite the similarity of the chosen orders and their translated LTL formulas we need different constructions for the considered orders as the presented attempts mostly do not transfer to the other problems. 
Therefore, it is not to be expected that a general construction for restricted LTL formulas can be obtained.
Our aim is to focus on restricting the order in which states appear and disappear on a path in the automaton or on a path in the powerset-automaton (remember the introductory example).
Hence, we have chosen the stated definitions in order to investigate the complexity of problems where the LTL formula is always of the same type, i.e., comparing only the last or first appearances of states on a path. We leave it to future research to investigate other types of LTL formulas. In order to express synchronizability of Kripke structures, an extension to CTL has been introduced in~\cite{DBLP:conf/icalp/Chatterjee016}. Note that synchronization of Kripke structures is more similar to D3-directing words~\cite{DBLP:journals/actaC/ImrehS99} for unary NFAs as in contrast to general DFAs the labels of the transitions are omitted in Kripke structures.

Finally, we introduce two problems from which we will reduce from in the next section.
\begin{definition}[\textmd{{\sc Careful Sync}  (\PSPACE-complete~\cite{DBLP:journals/mst/Martyugin14})}]
	\index{\textsc{Careful Sync}}
	Given a DPA $A=(Q, \Sigma, \delta)$. Is there a word $w\in \Sigma^*$, s.t.\ $|Q.w| = 1$ and $w$ is defined on all $q\in Q$? 
\end{definition}
\begin{definition}[\textmd{{\sc Vertex Cover}  (NP-complete~\cite{DBLP:books/daglib/0086373})}]
	\index{\textsc{Vertex Cover}}
	Given a graph $G=(V, E)$ and an integer $k\leq |V|$.
	Is there a vertex cover $V'\subseteq V$ of size $|V'|\leq k$? A vertex cover is a set of states that contains at least one vertex incident to every edge.
\end{definition}
\begin{figure}[tb]
	\centering
	\scalebox{0.92}{
	\begin{tabular}{lll}
		\begin{minipage}{0.4\textwidth}
				\begin{tikzpicture}[->,>=stealth',shorten >=1pt,auto,node distance=2cm,
semithick]
\node[state] (1) {1};
\node[state] (2) [right of = 1] {2};
\node[state] (3) [below right of =2]{3};
\node[state] (4) [left of = 3] {4};
\node[state] (5) [left of = 4]{5};
\path
(1) edge [loop left] node {$a$} (1)
	edge node {$b$} (2)
(2) edge node [left] {$a$} (3)
	edge node [left]{$b$} (4)
(4) edge [bend right] node {$a$} (3)
	edge node [right]{$b$} (1)
(3) edge [bend right] node {$a$} (4)
	edge [bend right] node [right] {$b$} (2)
(5) edge node {$a$} (4)
	edge node {$b$} (1);
\end{tikzpicture}
		\end{minipage}
 &
		\begin{tabular}{c|ccccc}
			&1 & 2 & 3 & 4 & 5\\\hline
			$b$&2 & 4 & 2 & 1 & 1\\
			$a$&3 & 3 & 3 & 1 & 1\\
			$a$&4 & 4 & 4 & 1 & 1\\
			$b$&1 & 1 & 1 & 2 & 2\\
			$b$&2 & 2 & 2 & 4 & 4\\
			$a$&3 & 3 & 3 & 3 & 3\\
		\end{tabular}
	&
	\hspace{-1em}
	\renewcommand{\arraystretch}{1.2} 
	\begin{tabular}{lcc}
		 & \checkmark & \ding{55}\\
		$\ore$ & $(1, 2)$ & $(2, 4)$\\
		$\orz$ & $(2, 4)$ & $(2,1)$\\
		$\orep$ & $(1, 2)$ & $(4,5)$\\
		$\orzp$ & $(5,5)$ & $(2,4)$\\
		$\ordp$ & $(5, 2)$ & $(4,3)$\\
	\end{tabular}
	\end{tabular}
}
\caption{DCA $A$ (left) with all paths induced by $w=baabba$ (middle) and relations $R$ consisting of single pairs forming a positive, resp.~negative, instance for versions of {\sc Sync-Under-$\lessdot_w$} (right).}
\label{fig:expl}
\end{figure}

%
%
\begin{table}[tb]\centering
	\scalebox{0.87}{
	\begin{tabular}{m{0.5cm}r|cccc|ccc}
		\multicolumn{6}{c}{\emph{Synchronization}} & \multicolumn{3}{c}{\emph{Subset Synchronization}}\\
		\multicolumn{2}{l}{Order} & $l<l$ & $l\leq l$ & $l<f$ & \hspace{-0.5cm}$l$$<$$f$-tot
		&
		$l<$$/$$\leq l$ & $l<f$ & \hspace{-0.5cm}$l$$<$$f$-tot
		\\\hline
		\multirow{2}{2.5em}{Set} &
		$\mathit{0}$ & \PSPACE-c & \PSPACE-c & -- & \hspace{-0.3cm}-- &
		\PSPACE-c & -- & \hspace{-0.3cm}--\\
		&$\mathit{1}$ & \PSPACE-c & \PSPACE-c & -- & \hspace{-0.3cm}--&
		\PSPACE-c & -- & \hspace{-0.3cm}--\\\hline
		\multirow{2}{2.5em}{Path} &
		$\mathit{0}$ & in \NP & \NP-hard & \PSPACE-c & \hspace{-0.3cm}\PTIME&
		\PSPACE-c &\PSPACE-c & \hspace{-0.3cm}\NP-c\\
		&$\mathit{1}$ & in \NP & \PSPACE-c & \PSPACE-c & \hspace{-0.3cm}\NP-c &
		\PSPACE-c &\PSPACE-c &\hspace{-0.3cm}\NP-c
	\end{tabular}}
\caption{Overview of the complexity for synchronization (on the left), and subset synchronization under order (on the right) for relations $\ore$, $\orep$, $\orz$, $\orzp$, and $\ordp$ (tot.\ is short for total).}
\label{tbl:results}
\end{table}
%
%
%
%
%
\section{Main Results}
\label{sec:results}
We now investigate the complexity of the introduced problems. An overview on the obtained results is given in Table~\ref{tbl:results}. 
\begin{theorem}
	\label{thm:inPSPACE}
	For all orders $\,\lessdot\ \in \{\ore, \orep, \orz, \orzp, \ordp\}$, the problem {\sc Sync-Under-$\lessdot$} is contained in \PSPACE. Further, it is \FPT\  with parameter $|Q|$.
\end{theorem}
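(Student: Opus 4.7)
The plan is to reduce each of the five problems to reachability in a configuration graph whose nodes carry exactly enough information both to update the configuration deterministically upon reading a letter and to decide acceptance from the configuration alone. A configuration consists of a \emph{dynamic} component (the current active set, respectively the current position of each path) together with a \emph{witness} component storing, for every pair in $R$, the current truth status of the constraint imposed by that pair.

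Concretely, for the set-based orders $\ore$ and $\orz$ a configuration is a pair $(S,\sigma)$ where $S = Q.w[1..t]$ and $\sigma\colon R \to \{0,1,2\}$ records, for each $(p,q) \in R$, whether ``$\last(p) \lessdot \last(q)$'' is vacuously satisfied (neither state visited yet), currently satisfied, or currently violated; for $\orz$ the update rule and the meaning of value $0$ are adjusted accordingly. For the path-based orders $\orep, \orzp, \ordp$ the dynamic component is a function $f\colon Q \to Q$ giving the current position $r.w[1..t]$ of the path starting in each $r$, and the witness component is a function $\sigma\colon Q \times R \to \{0,1,2\}$ recording the analogous status for each pair on each path. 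In every case, the initial configuration is determined by $Q$ (depending on whether we are in the $\mathit{0}$- or $\mathit{1}$-variant), the update upon reading a letter $a$ is deterministic and computable in $\mathrm{poly}(|Q|)$ time from the old configuration and $\delta(\cdot,a)$, and the accepting configurations are those with $|S| = 1$ (resp.\ $f$ constant) and no entry of $\sigma$ in the violated state.

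For containment in \PSPACE, I would observe that a configuration has bit-size $O(|Q|^3 \log |Q|)$, hence polynomial in the input. An \NPSPACE\ procedure then guesses the letters of $w$ one by one while updating the current configuration in place; a binary counter of polynomial size bounded by the total number of configurations ensures termination. Savitch's theorem delivers \PSPACE. For fixed-parameter tractability in $p = |Q|$, I would explicitly build the configuration graph. The number of configurations is bounded by $g(|Q|) := 2^{|Q|} \cdot 3^{|Q|^2}$ in the set cases and $|Q|^{|Q|} \cdot 3^{|Q|^3}$ in the path cases, each a function of $|Q|$ alone; every node has $|\Sigma|$ outgoing edges and each edge is computable in $\mathrm{poly}(|Q|)$ time. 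A breadth-first search from the initial configuration to any accepting one decides the instance in time $g(|Q|)\cdot|\Sigma|\cdot\mathrm{poly}(|Q|) = g(|Q|)\cdot\mathrm{poly}(n)$ for input size $n$, which is \FPT.

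The main obstacle is the careful design of the three-valued status $\sigma$: for each of the five orders one must choose a finite abstraction of the history such that (i) the update upon reading a letter depends only on the previous configuration and the letter, and (ii) the value of $\sigma$ at the end of reading $w$ certifies exactly whether $R \subseteq \lessdot_w$. Once the right semantics has been fixed per order, correctness is a straightforward bookkeeping verification and the complexity bounds follow from the sizes computed above.
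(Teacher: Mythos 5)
Your proposal is correct and follows essentially the same route as the paper: both reduce the problem to reachability in an enhanced powerset/configuration graph whose nodes combine the current active-state information with a finite per-pair record of the constraint status, then obtain \PSPACE\ by nondeterministically guessing the word with a step counter plus Savitch's theorem, and \FPT\ in $|Q|$ by explicitly searching the configuration graph, whose size is a function of $|Q|$ alone times a polynomial. The only differences are bookkeeping choices (a three-valued status per pair and a per-start-state path tracker $f\colon Q\to Q$, versus the paper's sets of ``active pairs'' attached to each currently active state), which do not affect correctness or the complexity bounds.
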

\begin{proof}
	Let $A=(Q, \Sigma, \delta)$ be a DCA and $R\subseteq Q^2$. 
	We decide if there exists a synchronizing word $w\in \Sigma^*$ for $A$ with $R\subseteq \lessdot_w$ for $\lessdot_w \in \{\ore, \orep, \orz,$ $ \orzp, \ordp\}$ by performing reachability tests in an enhanced version of the powerset-automaton $\mathcal{P}(A) = (\mathcal{P}(Q), \Sigma, \delta^{\mathcal{P}})$ for~$A$. 
	Therefore, we enhance $\mathcal{P}(A)$
	with the information about the set of active pairs in $R$ in every state.
	Here, a pair in $R$ is active during the transition of a word if it constrains which states might be, or need to be visited in the future. For instance, in the example in Figure~\ref{fig:expl} concerning the order $\ore$, the pair $(2,4)$ is active while reading the prefix $ba$, since the state 2 has appeared as an active state while the state 4 has not appeared without 2 as an active state yet. It is not active after reading $baa$, since now 4 is active without 2 and hence the pair $(2,4)$ is satisfied and does not demand for further state visits. The pair becomes active again after reading $baab$ since again 2 became active demanding for the state 4 to become active without 2 again.
	
	For the orders $\ore$ and $\orz$, we will enhance each state $\hat{q}\subseteq Q$ in $\mathcal{P}(A)$  with the information on the set of $\emph{active pairs}$ in $R$ in the configuration represented by $\hat{q}$. For that, the state $\hat{q}$ will be copied $2^{|R|}$ times. For the orders $\orep$, $\orzp$, $\ordp$, we enhance every state $\hat{q} \subseteq Q$ in $\mathcal{P}(A)$ by a set $S_t$ of active pairs in~$R$ \emph{for each state} $t\in \hat{q}$. Here, the state $\hat{q}$ will be copied up to $|Q|\cdot2^{|R|}$ times, and the size of the automaton $\mathcal{P}(A)$ is bounded by $2^{|Q|} |Q|2^{|R|} = 2^{\mathcal{O}(|Q|^2)}$. Hence, for every pair of start and end state the length of a shortest path connecting them is bounded by $2^{\mathcal{O}(|Q|^2)}$.
	
	First, we clarify when a pair $(p, q) \in R$ is called active in a set state $\hat{q}\subseteq Q$, respectively in a state $t\in \hat{q}$, of the automaton $\mathcal{P}(A)$ by defining the transition function $\delta^{\mathcal{P}}$: For each $\hat{q}\subseteq Q$, $\sigma \in \Sigma$, we set $E:= \{\delta(r, \sigma) \mid r\in \hat{q}\}$ and,
	for each $S\subseteq R$ we set:
	\begin{itemize}
	\item\textbf{$\ore$:} $\delta^\mathcal{P}((\hat{q}, S), \sigma) = (E, \left(S \cup \{(p, q) \in R \mid p \in E\}\right) \backslash \{(p, q)\in S \mid q \in E \wedge p \notin E \}
	)$,
	\item\textbf{$\orz$:} $\delta^\mathcal{P}((\hat{q}, S), \sigma) = (E, \left(S \cup \{(p, q) \in R \mid p \in E\}\right) \backslash \{(p, q)\in S\mid q \in E \})
	$, 
	\end{itemize}
	for each $\hat{q} = \{q_1, q_2, \dots, q_k\}$ and $S = \{(q_1,S_1), (q_2,S_2), \dots, (q_k, S_k)\}$ we set:
	\begin{itemize}
	\item \textbf{$\orep$:} $\delta^\mathcal{P}((\hat{q}, S), \sigma) = (E, \{(\delta(q_1, \sigma), S_{\delta(q_1, \sigma)}'), (\delta(q_2, \sigma), S_{\delta(q_2, \sigma)}'), \dots, $ $(\delta(q_k, \sigma), S_{\delta(q_k, \sigma)}')\})$ with $S_{\delta(q_i, \sigma)}' := \bigcup_{\{S_j \mid (q_j, S_j) \in S \wedge \delta(q_j, \sigma) = \delta(q_i, \sigma)\}}\left(S_j \cup \right.$ $\left.\{(p, q) \in R \mid p = \delta(q_i,\sigma)\} \right) \backslash \{(p, q)\in R \mid q = \delta(q_i, \sigma)\} $,
	\item \textbf{$\orzp$:} $\delta^\mathcal{P}((\hat{q}, S), \sigma) = (E, \{(\delta(q_1, \sigma), S_{\delta(q_1, \sigma)}'), (\delta(q_2, \sigma), S_{\delta(q_2, \sigma)}'), \dots, $ $(\delta(q_k, \sigma), S_{\delta(q_k, \sigma)}')\})$ with $S_{\delta(q_i, \sigma)}' := \bigcup_{\{S_j \mid (q_j, S_j) \in S \wedge \delta(q_j, \sigma) = \delta(q_i, \sigma)\}}\left(S_j \cup\right.$ $\left. \{(p, q) \in R \mid p = \delta(q_i,\sigma)\} \right) \backslash \{(p, q)\in R \mid q = \delta(q_i, \sigma)\} $,
	\item \textbf{$\ordp$:} $\delta^\mathcal{P}((\hat{q}, S), \sigma) =(E, \{(\delta(q_1, \sigma), S_{\delta(q_1, \sigma)}'), (\delta(q_2, \sigma), S_{\delta(q_2, \sigma)}'), \dots, $ $(\delta(q_k, \sigma), S_{\delta(q_k, \sigma)}')\})$ with $S_{\delta(q_i, \sigma)}' := \bigcup_{\{S_j \mid (q_j, S_j) \in S \wedge \delta(q_j, \sigma) = 
%
%
%
	 	 \delta(q_i, \sigma)\}}\left( S_j \right) \cup \{(p, q) \mid q =\delta(q_i, \sigma)\}$, if $\{(\delta(q_i, \sigma), q) \mid q \in Q\} \cap S_{\delta(q_i, \sigma)}' = \emptyset$ for all $(\delta(q_i, \sigma), S_{\delta(q_i, \sigma)}')$. Otherwise, the transition yields to the error state~$(\emptyset, \emptyset)$. 
  \end{itemize}
	Generally speaking, the transition function $\delta^\mathcal{P}$ updates the set of active states according to $\delta$ and further updates the set of active pairs $S$ according to the newly visited states. Thereby, regarding the orders 1 and 2, the visit of a new state can activate additional pairs or satisfy some pairs $(p, q)$ and hence remove them from $S$ while regarding to the third order a state visit can only activate more pairs. Here, a transition yields to the error state $(\emptyset, \emptyset)$ if it would violate any active pair.

	We set $\text{Singl} := \{(\{p\}, \emptyset)\mid p \in Q\}$.
	Depending on the order, we define different start and final states for the automaton $\mathcal{P}(A)$.:	
	\begin{itemize}
	\item \textbf{$\ore$:} Start state:  $(Q, R)$, final states: \text{Singl}.
	\item\textbf{$\orz$:} Start state: $(Q, \emptyset)$, final states: \text{Singl}.
	\item\textbf{$\orep$:} Start state: $(Q, \{(q_1, R\backslash\{(p, q)\in R \mid q = q_1\}),\dots, (q_{|Q|},$ \linebreak[5]$R\backslash\{(p, q)\in R \mid q = q_{|Q|}\})\})$ for the case including $i=0$, and $(Q, \{(q_1, R), (q_2, R), \dots, (q_{|Q|}, R)\})$ otherwise; final states:~\text{Singl}.
	\item\textbf{$\orzp$:} Start state: $(Q, \{(q_1, S_1),\dots, (q_{|Q|},S_{|Q|})\})$ with $S_i := \{(p, q) \in R \mid p = q_i\}$ for the case including $i=0$, and $(Q, \emptyset)$ otherwise; final states: \text{Singl}.
	\item\textbf{$\ordp$:} Start state: $(Q,  \{(q_1, S_1),\dots, (q_{|Q|},S_{|Q|})\})$ with $S_i := \{(p, q) \in R \mid q = q_i\}$ for the case including $i=0$, and $(Q, \emptyset)$ otherwise; final states: $(\{p\}, *)$, every state with a singleton state set and an arbitrary set of active pairs.
	\end{itemize}
	
	In each case, the automaton $A$ is synchronizable by a word $w$ with $R \subseteq \lessdot_w$ if and only if the language accepted by $\mathcal{P}(A)$ is non-empty.
	This can be checked by non-deterministically stepwise guessing a path from the start state to some final state. Since each state contains only up to $|Q|+1$ bit-strings of length up to $|Q|^2$ a state of $\mathcal{P}(A)$ can be stored in polynomial space. Hence, we can decide non-emptiness of $L(\mathcal{P}(A))$ in non-deterministic polynomial space and according to Savitch theorem~\cite{DBLP:journals/jcss/Savitch70} we can also do this using deterministic polynomial space. Furthermore, since the size of $\mathcal{P}(A)$ is bounded by $2^{\mathcal{O}(|Q|^2)}$ a recursive search for a path from the start state to any final state can be done in time $2^{\mathcal{O}(|Q|^2)}$ which gives us an \FPT\  algorithm in the parameter~$|Q|$.
\end{proof}
\label{apx:inPSPACE}
\label{sec:apxSyncINTOProof}
After giving a general \PSPACE upper bound, we now focus on lower bounds. First, we focus on the problem {\sc Sync-Under-$\orz$} and present a reduction from the \PSPACE-complete problem of {\sc Careful Sync} for DPAs. 
Since this problem is already \PSPACE-complete for binary DPAs with one undefined transition~\cite{DBLP:conf/wia/Martyugin12}, and the number of undefined transitions directly correlates to the size of the relation $R$, we get the following result:
\begin{theorem}
	\label{thm:O2-PSPACE-one-trans}
	{\sc Sync-Under-$\orz$} is \PSPACE-complete, even for $|R| = 1$ and $|\Sigma| = 2$.
\end{theorem}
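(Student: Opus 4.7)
The plan is to establish \PSPACE-hardness by reduction from {\sc Careful Sync} for binary partial automata, which is \PSPACE-complete by Martyugin~\cite{DBLP:journals/mst/Martyugin14}. Membership in \PSPACE\ is already settled by Theorem~\ref{thm:inPSPACE}, whose argument makes no assumption on $|R|$ or $|\Sigma|$.

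Given a DPA $A = (Q, \{a, b\}, \delta)$, I would build a DCA $A' = (Q \cup \{p, q\}, \{a, b\}, \delta')$ with the following gadget design. All originally undefined transitions of $A$ are redirected to the new state $p$, so that $p$ becomes active at some step $i > 0$ of a run on $Q'$ exactly when an originally-undefined transition is used at step $i$. The transitions out of $p$ and $q$ are chosen so that both states leave the gadget after a single letter and merge into a designated state $s_0 \in Q$ of $A'$. Crucially, no state in $Q$ has a transition into $q$, so $q$ is active only at step $0$, giving $\last(q, w, Q') = 0$ for every word $w$. The relation is the singleton $R = \{(p, q)\}$, whose $\orz$-constraint then forces $\last(p, w, Q') \le 0$, i.e.\ $p$ may not reappear after step $0$, which means $w$ uses no originally-undefined transition of $A$.

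For the forward direction, any careful sync word $w$ of $A$ is extended by prepending a single letter $\sigma$ that is fully defined on $Q$ (or by padding appropriately if no such $\sigma$ exists, using the careful sync itself); the resulting word synchronizes $A'$ to the same target as $w$ does for $A$, while $p$ and $q$ are active only at step $0$, so the pair $(p, q)$ is satisfied. Conversely, any word $w$ that synchronizes $A'$ while respecting $R$ must — by the argument above — avoid every originally-undefined transition of $A$, and hence, restricted to $Q$, it is a careful sync word of $A$.

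The main obstacle is trajectory compatibility of the gadget states: once $p$ and $q$ have been absorbed into the designated $s_0 \in Q$ after one step, the remainder of $w$ must still take $s_0$'s shifted trajectory to the same sync state as $Q$ reaches, and a naïve choice of $s_0$ does not automatically enjoy this property. The binary alphabet rules out the usual trick of a dedicated auxiliary letter. The fix is to engineer the pilot prefix (and, if necessary, an appended synchronizing tail chosen uniformly for $A$) so that $s_0$'s shifted trajectory is captured by $Q$'s trajectory while still preserving the single-pair constraint, which is possible because only one extra letter of padding is needed and the constraint $\last(p) \leq 0$ is robust under prepending a letter defined on $Q$. Verifying both implications under these binary-alphabet restrictions — and in particular checking that no side effect of the pilot accidentally makes $p$ reappear — constitutes the bulk of the technical work.
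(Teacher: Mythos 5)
Your reduction is, at its core, the same as the paper's: both reduce from {\sc Careful Sync} for binary DPAs, add two fresh states (a ``trap collector'' $p$ that receives all originally undefined transitions and an unreachable witness $q$), and use the single pair $R=\{(p,q)\}$ so that $\last(q,w,Q')\le 0$ forces $p$ never to reappear after position $0$, i.e., no undefined transition is ever used. The only substantive difference is the choice of outgoing transitions for $p$ and $q$, and that is exactly where your write-up stalls. The paper sets $\delta'(p,\gamma)=\delta'(q,\gamma)=\delta'(t,\gamma)$ for one fixed \emph{existing} state $t\in Q$ and every letter $\gamma$; after the first letter of a careful sync word $w$ the two new states then sit precisely where $t$ sits, so $w$ itself (no pilot prefix, no tail) synchronizes $A'$, and your ``trajectory compatibility'' obstacle never arises. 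Your gadget instead sends $p$ and $q$ to a fixed target $s_0$ independently of the letter read, which is why reading $w$ alone can fail: $\delta(s_0,w[2..|w|])$ need not be defined even though $\delta(s_0,w)$ is.

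That said, the fix you gesture at does go through, and the ``bulk of the technical work'' you defer is really a one-line observation that you should make explicit: a careful sync word $w$ is defined on and synchronizes \emph{every} state of $Q$, so after prepending any letter $\sigma$ defined on all of $Q$ (take $\sigma=w[1]$ when $w\neq\epsilon$) the active set becomes $\delta(Q,\sigma)\cup\{s_0\}\subseteq Q$, and the subsequent $w$ collapses it to the synchronizing state without ever activating $p$; no appended tail is needed, and the constraint is preserved since neither $p$ nor $q$ is active after position $0$. You should also dispose of the degenerate case $|Q|=1$ (where $\epsilon$ carefully synchronizes $A$ but your $A'$ need not be synchronizable under $R$) by special-casing it. The backward direction as you sketch it is fine and matches the paper's argument.
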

\begin{proof}
%
\index{reduction from \textsc{Careful Sync}}
	We reduce from the \PSPACE-complete {\sc Careful Sync} problem for DPAs, see \cite{DBLP:conf/wia/Martyugin12,DBLP:journals/mst/Martyugin14}. Let $A = (Q, \Sigma, \delta)$ be a DPA. We construct from $A$ a DCA $A'= (Q' = Q\cup \{q_{\circleddash}, r\}, \Sigma, \delta')$ with $q_{\circleddash}, r \notin Q$. For every pair $q \in Q, \sigma \in \Sigma$ for which $\delta(q, \sigma)$ is undefined, 
	we define the transition $\delta'(q, \sigma) = q_{\circleddash}$. 
	On all other pairs $\delta'$ agrees with $\delta$.
	Further, for some arbitrary state $t\in Q$ and for all $\gamma \in \Sigma$ we set $\delta'(q_{\circleddash}, \gamma) = \delta'(t, \gamma)$ (note that this can be $q_{\circleddash}$ itself) and $\delta'(r, \gamma) = \delta'(t, \gamma)$. 
	We set the relation $R$ to $R:= \{(q_{\circleddash}, r)\}$.
	
	Assume there exists a word $w \in \Sigma^*$, $|w| = n$ that synchronizes $A$ without using an undefined transition.
	Then, $\delta(q, w[1])$ is defined for all states $q \in Q$.
	The letter $w[1]$ acts on $A'$ in the following way: 
	(1) $\delta'(r, w[1]) = \delta'(q_\circleddash, w[1]) = \delta(t, w[1])$ which is defined by assumption; 
	(2) $\delta'(Q, w[1]) \subseteq Q$ since $\delta(q, w[1])$ is defined for all states $q \in Q$. 
	The combination of (1)-(2) yields $\delta'(Q', w[1]) \subseteq Q$. We further constructed $\delta'$ such that $\delta'(Q', w[1]) = \delta(Q, w[1])$.
	Since $\delta(q, w[2..n])$ is defined by assumption for every $q \in \delta(Q, w[1])$, $\delta'$ agrees with $\delta$ on $w[2..n]$ for every $q \in \delta(Q, w[1])$. This means especially that while reading $w[2..n]$ in $A'$ on the states in $\delta'(Q', w[1])$ 
	the state $q_\circleddash$ is not reached
	and that $\delta'(Q', w) = \delta(Q, w)$. 
	Therefore, $w$ also synchronizes the automaton $A'$. 
	The state $q_\circleddash$ is 
	only active in the start configuration where no letter of $w$ is read yet and is not active anymore while reading $w$. The same holds for $r$, hence $R = \{(q_\circleddash, r)\} \subseteq\ \orz$. 
	
	For the other direction, assume there exists a word $w \in \Sigma^*$, $|w| = n$ that synchronizes~$A'$ with $(q_\circleddash, r)\in\ \orz$.
	Then, $w$ can be partitioned into $w= uv$ with $u, v \in \Sigma^*$ where $r$ is not active while reading the factor $v$ in $w$.
	The only position of $w$ in which $r$ is active due to the definition of $\delta'$ is before any letter of $w$ is read. Hence, we can set $u = \epsilon$ and $v = w$.
	As $(q_\circleddash, r)\in\ \orz$ it holds for all $i \in [1..n]$ that $q_\circleddash \notin\delta'(Q', v[1..i])$. Hence, $\delta'(q, v)$ is defined for every state $q \in Q$. Since $\delta'$ and $\delta$ agree on the definition range of~$\delta$ it follows that $v$ also synchronizes the state set $Q$ in $A$ without using an undefined transition.
\end{proof}
\begin{remark}
	\label{rm:13}
	The construction works for both variants (with and without 0) of the problem. It can further 
	be adapted for the order $\ore$ (both variants) by 
	introducing a copy $\hat{q}$ of every state in $Q \cup \{r\}$ and setting $\delta'(\hat{q}, \sigma) = q$ for every $\sigma \in \Sigma$, $q \in Q\cup \{r\}$. For all other transitions, we follow the above construction.
	We keep $R := \{(q_\circleddash, r)\}$. Since $r$ is left after $w[2]$ for any word $w\in \Sigma^*$ with $|w|\geq 2$ in order to satisfy $R$ 
	the state $q_\circleddash$
	needs to be left with $w[1]$ 
	such that afterwards $r$ is active without 
	$q_\circleddash$. Note that $q_\circleddash$ has not been copied.
\end{remark}
\begin{corollary}
	\label{cor:o1sPSPACE}
	{\sc Sync-Under-$\ore$} is \PSPACE-complete even for $|R| = 1$ and $|\Sigma| = 2$.
\end{corollary}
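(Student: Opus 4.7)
Given that Theorem~\ref{thm:inPSPACE} already places {\sc Sync-Under-$\ore$} in \PSPACE, my plan is to establish the matching hardness bound (with $|R|=1$ and $|\Sigma|=2$) by executing the adaptation of the reduction of Theorem~\ref{thm:O2-PSPACE-one-trans} that is sketched in Remark~\ref{rm:13}. Starting from an instance $A=(Q,\Sigma,\delta)$ of {\sc Careful Sync}, I would form the same enlarged automaton $A'$ with the added states $q_{\circleddash}$ and $r$ and the same choice $R=\{(q_{\circleddash},r)\}$, and then additionally introduce, for every $q\in Q\cup\{r\}$, a fresh copy $\hat{q}$ with $\delta'(\hat{q},\sigma):=q$ for all $\sigma\in\Sigma$. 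The state $q_{\circleddash}$ is deliberately not copied. Since no new letters are introduced and only $|Q|+1$ new states are added, we stay within $|\Sigma|=2$ and a polynomial-time reduction, and the relation still has $|R|=1$.

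The crucial invariant is that the copy $\hat{r}$ now keeps $r$ alive for exactly one additional step: for any word $w$ with $|w|\geq 1$ we have $r\in Q'.w[1]$ because $\delta'(\hat{r},w[1])=r$, whereas from position $2$ onwards nothing maps into $r$, so $\last(r,w,Q')=1$. Because $q_{\circleddash}$ has no copy and can only be reactivated by an originally undefined transition of $A$, the requirement $\last(q_{\circleddash})<\last(r)=1$ is equivalent to demanding that no proper prefix of $w$ of length $\geq 1$ triggers any undefined transition of $A$; this is exactly the careful synchronization condition.

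For the forward direction, if $w$ carefully synchronizes $A$, then after $w[1]$ every copy $\hat{q}$ collapses into $q$, so the active set in $A'$ becomes a subset of $Q\cup\{r\}$ and the correctness argument of Theorem~\ref{thm:O2-PSPACE-one-trans} applies verbatim: $w$ synchronizes $A'$, $q_{\circleddash}$ is vacated at position $1$, and the invariant yields $\last(q_{\circleddash})\leq 0<1=\last(r)$, so $R\subseteq\ore[w]$. For the converse, any word $w$ witnessing {\sc Sync-Under-$\ore$} on $A'$ must satisfy $\last(q_{\circleddash})<\last(r)\leq 1$, which forbids activating $q_{\circleddash}$ after position $0$; by the argument of Theorem~\ref{thm:O2-PSPACE-one-trans}, $w$ is then a careful synchronizing word for $A$.

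The main point to double-check is that the same construction handles both the $\mathit{0}$-variant and the $\mathit{1}$-variant of the problem uniformly, since the statement of the corollary does not discriminate. In the $\mathit{0}$-variant, $\last(q_{\circleddash})=0$ while $\last(r)=1$; in the $\mathit{1}$-variant, position $0$ is ignored so $\last(q_{\circleddash})=-1$ while $\last(r)=1$ (thanks to $\hat{r}$). In both cases the strict inequality is met exactly when $w$ is careful on $A$, and the rest of the corner-case verification (very short $w$, the fact that nothing outside $\hat{r}$ maps into $r$, and that synchronization of $A'$ reduces to synchronization of $A$ after the first letter collapses the copies) is routine.
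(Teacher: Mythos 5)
Your overall strategy is exactly the paper's: membership via Theorem~\ref{thm:inPSPACE}, hardness via the copy-construction sketched in Remark~\ref{rm:13} on top of the reduction of Theorem~\ref{thm:O2-PSPACE-one-trans}. The converse direction of your argument is sound. The gap is in the forward direction, where you claim that a careful synchronizing word $w$ for $A$ itself witnesses the {\sc Sync-Under-$\ore$} instance on $A'$ and that ``the correctness argument of Theorem~\ref{thm:O2-PSPACE-one-trans} applies verbatim''. It does not: after reading $w[1]$ the copies re-populate the \emph{entire} set $Q$, so the active set is exactly $Q\cup\{r\}$ rather than $\delta(Q,w[1])$, and careful synchronization of $A$ by $w$ only guarantees that $w[2]$ is defined on $\delta(Q,w[1])$, not on all of $Q$. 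Reading $w[2]$ may therefore send some state of $Q$ to $q_{\circleddash}$ at position $2$, which both destroys $\last(q_{\circleddash})<\last(r)=1$ and can prevent $w$ from synchronizing $A'$ at all. Concretely, take $Q=\{1,2,3\}$ with $\delta(1,a)=\delta(2,a)=1$, $\delta(3,a)=3$, $\delta(1,b)=\delta(3,b)=1$ and $\delta(2,b)$ undefined: $w=ab$ carefully synchronizes $A$, yet in $A'$ the letter $b$ read at position $2$ reactivates $q_{\circleddash}$ from the copy-revived state $2$.

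The repair is cheap but must be stated: use $w[1]\,w$ as the witness (any letter defined on all of $Q$ would do as a prefix, and $w[1]$ is one). After the first letter the active set is $Q\cup\{r\}$; from there the full careful word $w$ runs entirely inside $Q$ (with $r$ and $q_{\circleddash}$ having already merged into the shadow of $t$), never revisits $q_{\circleddash}$ or $r$, and collapses to the singleton $\delta(Q,w)$, giving $\last(q_{\circleddash})\le 0<1=\last(r)$ in both the $\mathit{0}$- and $\mathit{1}$-variants. Relatedly, your assertion that avoidance of $q_{\circleddash}$ after position $0$ is ``exactly the careful synchronization condition'' should be weakened to a one-sided implication: because of the copy-induced restart it is strictly stronger than carefulness of $w$ (it additionally forces $w[2]$ to be defined on all of $Q$), which is harmless for the converse direction but is precisely the reason the forward direction needs the modified word.
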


\begin{remark}
	The reduction presented in the proof of Theorem~\ref{thm:O2-PSPACE-one-trans} can also be applied to show the \PSPACE-completeness of {\sc Sync-Under-$\mathit{1}$-$\orzp$}. 
	Since the state~$r$ cannot be reached from any other state, the state $q_\circleddash$
	needs to be left with the first letter of any synchronizing word and must not become active again on any path. The rest of the argument follows the proof of Theorem~\ref{thm:O2-PSPACE-one-trans}. Note that the construction only works for {\sc Sync-Under-$\mathit{1}$-$\orzp$}.
	If we consider {\sc Sync-Under-$\mathit{0}$-$\orzp$} 
	the problem might become easier. But it is at least \NP-hard.
\end{remark}
\begin{cancel}
\begin{corollary}
The problem {\sc Sync-Under-$\mathit{1}$-$\orzp$} is \PSPACE-complete.
\end{corollary}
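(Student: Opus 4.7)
The plan is to reuse the reduction from the proof of Theorem~\ref{thm:O2-PSPACE-one-trans} verbatim and to check that the single pair $R=\{(q_\circleddash,r)\}$ still encodes careful synchronizability of the input DPA when read through the path order $\orzp$ in its $\mathit{1}$-variant. Membership in $\PSPACE$ is immediate from Theorem~\ref{thm:inPSPACE}, so only hardness needs argument.

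Given the constructed DCA $A'=(Q\cup\{q_\circleddash,r\},\Sigma,\delta')$, I would unfold the condition $R\subseteq\, \orzp$: for each starting state $s\in Q'$ one needs $\last(q_\circleddash,w,\{s\})\le \last(r,w,\{s\})$ with position $0$ ignored. Since $r$ has no incoming transitions from any other state and its outgoing transitions copy those of the reference state $t\in Q$, the state $r$ never appears at any position $i\ge 1$ on any path in $A'$, so $\last(r,w,\{s\})=-1$ for every $s\in Q'$. The condition therefore collapses to: $q_\circleddash$ must not appear on any path at any position $i\ge 1$. This is exactly the hook the proof of Theorem~\ref{thm:O2-PSPACE-one-trans} exploits: $q_\circleddash$ arises on the path from $s\in Q$ precisely when an undefined transition of $\delta$ would be taken, whereas the paths from $q_\circleddash$ and from $r$ both merge with the path from $t$ after one letter and therefore stay inside $Q$ as long as $w$ is defined along $t$ and its successors. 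Hence a word synchronizes $A'$ with $R\subseteq\, \orzp$ if and only if it carefully synchronizes $A$, and the same bookkeeping as in the earlier proof (paths from $q_\circleddash$ and $r$ merging with the $t$-path after one step) shows that both implications really produce a synchronizing word for the other automaton.

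The main obstacle, and the reason the argument is bound to the $\mathit{1}$-variant, is that in the $\mathit{0}$-variant position $0$ of the path from $q_\circleddash$ already contributes $\last(q_\circleddash,w,\{q_\circleddash\})\ge 0$, while $\last(r,w,\{q_\circleddash\})=-1$ still holds, so the relation is violated before any letter is read, regardless of $w$. A separate construction would be needed for {\sc Sync-Under-$\mathit{0}$-$\orzp$}, which the paper only establishes to be $\NP$-hard.
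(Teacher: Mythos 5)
Your proposal is correct and follows exactly the paper's own route: it reuses the reduction of Theorem~\ref{thm:O2-PSPACE-one-trans} and exploits that $r$ is unreachable, so the single pair $(q_\circleddash,r)$ forces $q_\circleddash$ to never be active at any position $i\geq 1$ on any path, which is precisely careful synchronizability. Your explicit observation of why the argument breaks for the $\mathit{0}$-variant (the path starting at $q_\circleddash$ violates the pair at position $0$) also matches the paper's remark.
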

\end{cancel}
%
%
\begin{theorem}
	\label{O20PNP-hard}
	The problem {\sc Sync-Under-$\mathit{0}$-$\orzp$} is \NP-hard.
\end{theorem}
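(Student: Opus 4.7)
The plan is to establish \NP-hardness by a polynomial-time reduction from an \NP-complete problem, with {\sc Vertex Cover} being the natural candidate given its definition just before this theorem. From an instance $(G=(V,E),k)$, I would build a DCA $A=(Q,\Sigma,\delta)$ together with a relation $R \subseteq Q^2$ so that $G$ has a vertex cover of size at most $k$ if and only if $A$ admits a synchronizing word whose induced order, in the $\mathit{0}$-variant of $\orzp$, agrees with $R$.

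The decisive feature I would exploit is that in the $\mathit{0}$-variant every pair $(p,q) \in R$ forces $q$ to actually occur on the path induced by $w$ starting at $p$: position $0$ is counted as a visit of $p$ on its own path, so $\last(p,w,\{p\}) \ge 0$ and hence we must have $\last(q,w,\{p\}) \ge 0$ as well. This converts each pair in $R$ into a reachability requirement, which I would use to encode edge coverage. Concretely, I would introduce a letter $a_v$ per vertex $v$, an edge state $z_e$ per edge $e$, and a common target $c$; transitions would be set so that $z_e$ moves to $c$ (and then stays there) precisely on reading $a_u$ or $a_v$ for $e=\{u,v\}$, while all vertex and auxiliary states carry self-loops under the vertex letters and are funneled to $c$ by a final reset letter. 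The pairs $(z_e,c)\in R$ then oblige any feasible word to contain, for every edge, at least one endpoint letter, i.e.\ to realize a vertex cover given by the set of distinct letters used. To enforce the size bound $k$, I would attach a budget gadget: a chain of $k+1$ states that advances on every \emph{fresh} occurrence of a vertex letter, together with additional pairs in $R$ which become unsatisfiable once the chain reaches its end. The backward direction is then routine: a vertex cover $V' \subseteq V$ of size at most $k$ yields a synchronizing word by concatenating the letters $a_v$ for $v \in V'$ in any order followed by the reset letter, and one checks that every pair in $R$ is satisfied on every path.

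The main obstacle I expect is taming the path-based semantics of $\orzp$: because the order must hold along \emph{every} induced path --- including those that start inside the edge or counter gadgets --- the construction has to avoid spurious violations on unintended starting states. The standard remedies are self-loops at auxiliary states under every letter and the addition of dummy $R$-pairs that are automatically satisfied from those states, but verifying that the interaction between the coverage gadget, the budget chain, and the final reset does not leak complexity from the synchronization subproblem itself (which is \PSPACE-hard in the careful-synchronization sense) is the delicate part. If the fresh-letter counter turns out to be awkward to realize cleanly, a robust fallback is to reduce instead from {\sc 3-SAT}: use letters $a_{x_i}, a_{\bar x_i}$ per literal, a variable gadget whose pairs in $R$ force a binary choice, and a clause gadget whose state must be reached on every path, which happens exactly when at least one satisfying literal letter occurs in the word.
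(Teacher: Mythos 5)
Your proposal follows essentially the same route as the paper: a reduction from {\sc Vertex Cover} that exploits exactly the feature you identify, namely that in the $\mathit{0}$-variant a pair $(p,q)\in R$ forces $q$ to be visited on the path starting at $p$, so that edge-pairs $(e_{ij},\hat{e}_{ij})$ encode coverage and a bounded chain with one extra pair encodes the budget. The one point where your sketch would run into trouble is the budget gadget: you propose a chain that advances on every \emph{fresh} occurrence of a vertex letter, but a deterministic automaton of polynomial size cannot remember which letters have already been seen, so ``fresh'' is not implementable as a simple chain. The paper sidesteps this entirely by letting the chain $q_1,\dots,q_{k+2}$ advance on \emph{every} vertex letter and checking the budget at the moment the reset letter $p$ is first read (the pair $(q_1,r)$ is satisfiable only if at most $k$ vertex letters precede the first $p$, since $r$ is reachable only from $q_{k+1}$ via $p$); this suffices because in the forward direction one may always list a size-$k$ cover without repetition, and in the backward direction at most $k$ letter occurrences certainly yield at most $k$ distinct vertices. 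With that substitution your construction is the paper's, including the role of the ``false way'' state (your $z_e$ must be sent on the reset letter to a state from which $\hat{e}_{ij}$, respectively $c$, is unreachable, so that an uncovered edge violates $R$ rather than merely delaying synchronization); your 3-SAT fallback is not needed.
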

\begin{proof}
	\index{reduction from \textsc{Vertex Cover}}
	We give a reduction from {\sc Vertex Cover}. We refer to Figure~\ref{fig:VC} 
	for a schematic illustration.
	 Let $G(V, E)$ be a graph and let $k \in \mathbb{N}$. We construct from $G$ a DCA $A = (Q, \Sigma, \delta)$ in the following way. 
	We set $\Sigma = V \cup \{p\}$ for some $p \notin V$. 
	We start with $Q = \{f, r, s\}$ where~$s$ is a sink state, meaning $\delta(s, \sigma) = s$ for all $\sigma\in \Sigma$, $f$ will be the ``false way'' and $r$ will be the ``right way''. We set $\delta(r, p) = \delta(f, p) = s$ and $\delta(r, v) = r$, $\delta(f, v) = f$ for all other~$v\in \Sigma$. 
	For every edge $e_{ij} \in E$ connecting some vertices $v_i, v_j \in V$, we create two states $e_{ij}$ and $\hat{e}_{ij}$ and set $\delta(e_{ij}, v_i) = \delta(e_{ij}, v_j) = \hat{e}_{ij}$, $\delta(e_{ij}, p) = f$. 
	For all other letters, we stay in~$e_{ij}$. For the state $\hat{e}_{ij}$, we stay in~$\hat{e}_{ij}$ for all letters except~$p$. For $p$, we set $\delta(\hat{e}_{ij}, p) = s$.
	We further create 
	for $ 1 \leq i \leq k+2$ the states $q_i$ with the transitions $\delta(q_i, v) = q_{i+1}$ for $i \leq k+1$ and $v \in V$, $\delta(q_i, p) = q_i$ for $i \leq k$, and $\delta(q_{k+1}, p) = r$, $\delta(q_{k+2}, p) = s$, $\delta(q_{k+2}, v) = q_{k+2}$ for~$v \in V$. 
	We set $R := \{(q_1, r)\} \cup \{(e_{ij}, \hat{e}_{ij})\mid e_{ij}\in E\}$.
	
	If there exists a vertex cover of size $k' < k$ for $G$, then there also exists a vertex cover of size $k$ for $G$. Therefore, assume $V'$ is a vertex cover for $G$ of size $k$. Then, the word $wpp$ where $w$ is any non-repeating listing of the vertices in $V'$ is a synchronizing word for~$A$ with $R \subseteq\, \orzp[wpp]$. 
	Since $q_1$ cannot be reached from any other state, the pair $(q_1, r)\in R$ is trivially satisfied for each path starting in any state other than $q_1$. Hence, we only have to track the appearances of $q_1$ and $r$ on the path starting in $q_1$. Since $w$ lists the states in the vertex cover $V'$ it holds that $|w| = k$ and hence $q_1.w = q_{k+1}$. Further, $q_1.wp = r$ and $q_1.wpp = s$. Hence, the pair $(q_1, r)$ is satisfied on the path starting in $q_1$ as well as on all paths. 
	It remains to show that $wpp$ is indeed a synchronizing word and that all pairs in $R$ of the form $(e_{ij}, \hat{e}_{ij})$ are satisfied. 
	For every state $e_{ij}$ representing an edge $e_{ij}$, the state $\hat{e}_{ij}$ is reached if we read a letter corresponding to a vertex incident to it. Since $V'$ is a vertex cover, the word $w$ contains for each edge $e_{ij}$ at least one vertex incident to it. Hence, for each edge $e_{ij}.w = \hat{e}_{ij}$ and $e_{ij}.wpp = s$. 
	Since each state $e_{ij}$ is not reachable from any other state it follows that all pairs $(e_{ij}, \hat{e}_{ij})$ are satisfied by $wpp$ on all paths.
	It is easy to see that for all other states $q \in Q$ it holds that $q.wpp = s$.

	For the other direction, assume there exists a synchronizing word $w$ for $A$ with $R \subseteq\, \orzp $.
	By the construction of $A$ the word $w$ must contain some letters $p$. Partition $w$ into $w = upv$ where $p$ does not appear in $u$. 
	Since $R \subseteq\, \orzp$ the pair $(q_1, r)$ in $R$ enforces $|u| \leq k$ since otherwise the only path on which $q_1$ appears (namely the one starting in~$q_1$) will not contain the state $r$ as for any longer prefix $u$ it holds that $q_1.u =q_{k+2}$ and $r$ is not reachable from~$q_{k+2}$.
	The other pairs of the form $(e_{ij}, \hat{e}_{ij}) \in R$ enforces that $u$ encodes a vertex cover for $G$. Assume this is not the case, then there is some state $e_{ij}$ for which $e_{ij}.u = e_{ij}$. But then, $e_{ij}.up = f$ and from $f$ the state $\hat{e}_{ij}$ is not reachable, hence the pair $(e_{ij}, \hat{e}_{ij})$ is not satisfied on the path starting in $e_{ij}$. Therefore, $u$ encodes a vertex cover of size at most~$k$. 
\end{proof}
\begin{figure}[tb]
	\centering
	\scalebox{0.8}{
	\begin{tikzpicture}[->,>=stealth',shorten >=1pt,auto,node distance=1.8cm,
semithick,state/.style={circle, draw, minimum size=1cm}]
\node[state] (e1) {$e_{ij}$};
\node[state] (e1') [right of = e1] {$\hat{e}_{ij}$};

\node (dot1) at (0,-1) {$\vdots$};
\node (dot1') at (2,-1) {$\vdots$};
%

\node[state] (f) [below left of=e1] {$f$};
\node[state] (s) [below right of =e1'] {$s$};

\node[state] (qk2) [right of=s] {$q_{k+2}$};
\node[state] (qk1) [right of=qk2] {$q_{k+1}$};
\node[state] (qk) [right of=qk1] {$q_{k}$};
\node (dot2) [right of=qk] {$\cdots$};
\node[state] (q1) [right of=dot2] {$q_1$};
\node[state] (r) at (5.4,0) {$r$};

\path
(e1) edge node {$v_i, v_j$} (e1')
edge [loop above] node {} (e1)
edge node [above left] {$p$} (f)
(e1') edge [loop above] node {} (e1')
edge node [above right] {$p$} (s)
(f) edge [loop below] node {} (f)
edge [bend right] node {$p$} (s)
(s) edge [loop below] node {} (s)
(q1) edge [loop below] node {$p$} (q1)
(q1) edge node {} (dot2)
(dot2) edge node {} (qk)
(qk) edge [loop below] node {$p$} (qk)
(qk) edge node {} (qk1)
(qk1) edge [bend right] node [above] {$p$} (r)
(qk1) edge node {} (qk2)
(qk2) edge [loop below] node {} (qk2)
(qk2) edge node {$p$} (s)
(r) edge [bend right] node [above] {$p$} (s)
(r) edge [loop above] node {} (r)
;

\end{tikzpicture}}
\caption{Schematic illustration of the reduction from {\sc Vertex cover} (see Theorem~\ref{O20PNP-hard}). For each state, the transition without a label represents all letters which are not explicitly listed as an outgoing transition from that state.}
\label{fig:VC}
\end{figure}

If we consider $\orep$, the two variants of the order (with and without position $i=0$) do not differ since for a pair $(p, q)$, regardless of whether $p$ is reached, the state $q$ must be reached on every path. Hence, whenever we leave $q$ we must be able to return to it, so it does not matter if we consider starting in $q$ or not. In comparison with {\sc Sync-Under-$\mathit{1}$-$\orzp$}, the problem {\sc Sync-Under-$\orep$} is solvable in polynomial time using non-determinism.

\begin{theorem}
	\label{O1PNP}
	The problem {\sc Sync-Under-$\orep$} is in \NP.
\end{theorem}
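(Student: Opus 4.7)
The plan is to prove NP membership by showing that whenever the instance is YES, a synchronizing word of length $\mathcal{O}(|Q|^3)$ witnesses this. The NP algorithm then guesses such a word $w'$ and verifies in polynomial time by simulating $A$ on each starting state, computing all last-occurrence positions and checking $R \subseteq\, \orep[w']$ together with $|Q.w'| = 1$. The structural argument below only serves to justify the length bound; the actual witness is just $w'$.

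First I would rephrase $p \orep q$ as ``on every path from every $r$, the state $q$ is visited and the last occurrence of $q$ strictly postdates the last occurrence of $p$''. If $R$ contains a cycle the instance is trivially NO and can be rejected in polynomial time. Otherwise I fix any topological order $q_1, q_2, \ldots, q_k$ of $\mathrm{Snd}(R) := \{q \mid \exists p\colon (p, q) \in R\}$ satisfying $(q_i, q_j) \in R \Rightarrow i < j$, and for each $q_j$ set $\mathrm{Before}(q_j) := \{p \mid (p, q_j) \in R\}$.

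The core idea is a decoupling: write the desired word as $w' = u' v'$ where $u'$ synchronizes $A$ to some state $s$ and $v'$ is a walk in $A$ starting at $s$. Since every path from every $r$ coincides after reading $u'$, a position-by-position analysis shows that $R \subseteq\, \orep[w']$ holds whenever (i) $Q.u' = \{s\}$ and (ii) on the single $v'$-path from $s$, each $q \in \mathrm{Snd}(R)$ appears and for every $(p, q) \in R$ the last $p$-occurrence strictly precedes the last $q$-occurrence; any $p$ that never appears on the $v'$-path has its last $w'$-occurrence strictly below $|u'|$, dominated by any $q$ visited on the $v'$-part (whose $w'$-position is $\geq |u'|$). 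For condition~(i), the classical $\mathcal{O}(|Q|^3)$ cubic bound of Eppstein furnishes a synchronizing word ending at $s$ whenever $s = Q.w$ for some synchronizing $w$; if the standard construction ends at a different sync state $s^*$, I append a shortest path from $s^*$ to $s$ in the underlying graph of $A$ (which exists since $s$ is reachable from every state via $w$).

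For the tail I set $F_i := \{q_1, \ldots, q_i\} \cup \bigcup_{j \leq i} \mathrm{Before}(q_j)$ and build $v' = v^{(1)} v^{(2)} \cdots v^{(k)}$, where $v^{(1)}$ is a shortest walk from $s$ to $q_1$ in $A$ and $v^{(i+1)}$ is a shortest walk from $q_i$ to $q_{i+1}$ whose intermediate states lie in $Q \setminus F_i$. Each piece has length at most $|Q|$, giving $|v'| \leq |Q|^2$ and $|w'| = \mathcal{O}(|Q|^3)$. Existence of each piece given a satisfying $w$ follows by applying $R \subseteq\, \orep[w]$ to the $w$-path starting at $s$ itself: between the last visits of $q_i$ and $q_{i+1}$ that walk must avoid $F_i$, since each $q_j$ with $j \leq i$ has already had its last occurrence by position $\ell_{q_i}$ and each $p \in \mathrm{Before}(q_j)$ for $j \leq i$ has its last occurrence strictly before $\ell_{q_j} \leq \ell_{q_i}$. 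The main obstacle I anticipate is the careful boundary analysis in the decoupling step, especially the cases in which some $p$ or $q$ coincides with $s$ (so it is visited exactly at the seam between $u'$ and $v'$) or appears only in the $u'$-part of the path from some $r$; all such cases reduce to condition~(ii) thanks to the chosen topological order and the avoidance sets $F_i$, and this verification is where the length-preserving reduction must be done with care.
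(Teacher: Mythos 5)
Your proposal is essentially the paper's own argument in a different wrapper: the paper's Lemma~\ref{lem:O1PP} likewise decomposes a witness into a word that first merges all of $Q$ into a single state followed by a tour that visits the second components of $R$ one after another while deleting (i.e.\ avoiding) the states that may no longer reappear; the paper guesses the \emph{ordering} of those second components as its NP certificate and reconstructs the word, whereas you bound the length of the word and guess the word itself. The one step that fails as literally written is ``fix \emph{any} topological order of $\mathrm{Snd}(R)$'': your existence argument for the pieces $v^{(i+1)}$ reads off the segment of the $w$-path from $s$ between the last visits of $q_i$ and $q_{i+1}$, which presupposes $\ell_{q_1}<\ell_{q_2}<\dots<\ell_{q_k}$, and an arbitrary topological extension of $R$ restricted to $\mathrm{Snd}(R)$ need not satisfy this (schematically, $s\to q_1\to q_2$ with $q_2$ absorbing and $q_1,q_2$ $R$-incomparable: the order $q_2,q_1$ admits no walk $v^{(2)}$ from $q_2$ to $q_1$ at all). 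The repair is immediate and is implicit in your own wording: take --- or let the nondeterministic machine guess --- the order of last occurrences of $\mathrm{Snd}(R)$ on the $w$-path from $s$, which is a valid topological order precisely because $R\subseteq\,\orep[w]$; with that substitution the piecewise construction, the $\mathcal{O}(|Q|^3)$ length bound, and hence membership in \NP\ all go through.
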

\begin{proof}
	Recall that in the problem {\sc Sync-Under-$\orep$}, for every pair of states $(p, q) \in R$ and every state $r \in Q$, it is demanded that $q$ appears somewhere on a path induced by the sought synchronizing word $w$, 
	starting in $r$. Hence, a precondition for the existence of $w$ is that for every pair $(p_i, q_i) \in R$ the states $q_i$ must be reachable \emph{from any state} in~$Q$.
	More precisely, under the order $\orep$ only the last appearance of each state on a path is taken into account. Hence, a prohibited visit of a state can later be compensated by revisiting all related states in the correct order. 
	Thus, it is sufficient to first synchronize all pairs of states and then transition the remaining state through all related states in the demanded order.
	The next Lemma~\ref{lem:O1PP} proves this claim and shows that these properties can be checked in non-deterministic polynomial time.
%
\end{proof}
\begin{lemma}
	\label{lem:O1PP}
	Let $A= (Q, \Sigma, \delta)$ be a DCA and let $R \subseteq Q^2$. The automaton $A$ is synchronizable by a word $w \in \Sigma^*$ with $R\subseteq\, \orep$ if and only if the following holds:\\
	(1) For every pair of states $q_i, q_j \in Q$, there exists a word $w_{ij}$ such that $q_i.w_{ij} = q_j.w_{ij}$.\\
	(2) 
	For every state $r \in Q$, there exists a word $w_r$ such that if we consider $\orep$ only on the path induced by $w_r$, which starts in $r$, it holds that $p \orep[w_r] q$ for every pair $(p, q) \in R$. 
	
	Conditions (1) and (2) can be proved in polynomial time using non-determinism.
\end{lemma}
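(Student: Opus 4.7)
My plan is as follows. The forward direction is immediate: a synchronizing word $w$ with $R \subseteq \orep$ simultaneously witnesses (1) (it synchronizes every pair of states) and (2) (take $w_r := w$ for every $r \in Q$; the required order holds on every individual path, hence in particular on the one starting at $r$).

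For the reverse direction, I would argue as follows. Condition (1), pairwise synchronizability, implies full synchronizability of $A$ by the classical iterative pair-merging argument; let $u$ be any synchronizing word and $s$ the unique state in $Q.u$. Using condition (2) for $r := s$, fix a corresponding $w_s$. I claim that $w := u \cdot w_s$ synchronizes $A$ and satisfies $R \subseteq \orep[w]$. Synchronization is clear, since $Q.w = \{s.w_s\}$. For the order, the key observation is that once the prefix $u$ has been read all paths have merged at $s$: for any starting state $t \in Q$, the sub-path of $t$ under $w$ from position $|u|$ onward coincides with the $s$-path under $w_s$. By (2), the last occurrence of $q$ on the $s$-path under $w_s$ is strictly after that of $p$ (and in particular $q$ does appear), so on the $t$-path the last $q$-visit occurs at position $|u| + \last(q, w_s, \{s\}) \geq |u|$, while the last $p$-visit lies either in the prefix $u$ (at position at most $|u|$) or in the suffix at $|u| + \last(p, w_s, \{s\})$; in either case strictly smaller.

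For the complexity claim, condition (1) is verified in deterministic polynomial time by running the standard pair synchronizability test on each of the $\binom{|Q|}{2}$ pair-product automata. Condition (2) is verified in non-deterministic polynomial time by guessing, for each $r \in Q$, a candidate word $w_r$ and simulating the path from $r$ in polynomial time. The main obstacle I expect is bounding $|w_r|$ polynomially: I would obtain this by decomposing any satisfying walk into at most $|R|+1$ sub-walks that each realise the final visit to one of the states occurring in $R$, then replacing each sub-walk by a shortest path (of length at most $|Q|$) in $A$, while restricting, towards the end, to the sub-automaton that avoids the already ``exhausted'' first components of $R$. This yields a bound of $O(|Q|\cdot|R|)$ on $|w_r|$, so that the non-deterministic guess is of polynomial size and the overall check fits into \NP.
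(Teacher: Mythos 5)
Your proposal is correct and follows essentially the same route as the paper: the same forward direction, the same reverse-direction construction $w = u\cdot w_s$ (iterated pairwise merging followed by the witness word for the merged state, using that $\orep$ only depends on last appearances and that the $s$-path under $w_s$ is a common suffix of every path), and the same verification of condition (1) via reachability in the pair automaton. The only cosmetic difference is in checking condition (2): the paper guesses the order of the last visits of the second components of $R$ and builds $w_r$ deterministically by breadth-first search between consecutive milestones while deleting exhausted first components, whereas you guess $w_r$ itself and justify a polynomial length bound by exactly that decomposition.
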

\begin{proof}
	Assume $A$ can be synchronized by a word $w \in \Sigma^*$ such that $R \subseteq\, \orep$. Then, $|Q.w|=1$ and hence also $|\{q_i, q_j\}.w| = 1$ for every $q_i, q_j \in Q$.
	Since $R \subseteq\, \orep$ the condition (2) already holds by definition on every path induced by $w$.
	
	For the other direction, assume condition (1) and (2) hold. Then, we can construct a synchronizing word $w=w_pw_r$ with $R \subseteq\, \orep$ in the following way:
	Start with $w_p = \epsilon$ and the set of active states $Q_\text{act} := Q$.\par
	\noindent\framebox{\parbox{\dimexpr\linewidth-2\fboxsep-2\fboxrule}{\textbf{Step 1:} If $|Q_\text{act}| = 1$ continue with Step 2, otherwise pick two arbitrary states $q_i, q_j \in Q_\text{act}$. Set $w_p := w_pw_{ij}$ and update $Q_\text{act}:= Q_\text{act}.w_{ij}$. Repeat this step.\par
	\textbf{Step 2:} We now have $Q_\text{act} = \{r\}$ for some state $r\in Q$. Return $w_pw_r$.\par}}
	
	The algorithm terminates after at most $n=|Q|$ repetitions of Step~1 since by condition~(1) in each iteration at least two states are merged. Obviously $w_pw_r$ is synchronizing for $A$. 
	Further, condition (2) gives us that  
	for every pair of states $(p, q) \in R$ the last appearance of $q$ is on the path $\tau$ induced by $w_r$ starting in $r$. This path appears on every path starting in any state $s \in Q$ as a suffix, because of $s.w_p = r$. 
	Since for $p \orep q$ the order $\orep$ only considers the last appearance of $q$ it follows that $R \subseteq\, \orep[w_r]$ since every pair in $R$ holds on the path $\tau$. 
	
	The words $w_{ij}$ in (1) can be found by determining reachability in the squared automaton $A \times A$ from the state $(q_i, q_j)$ to any singleton state\footnote{For more details see the algorithm in~\cite{DBLP:journals/siamcomp/Eppstein90} which solves general synchronizability of a DCA in polynomial time.}.
	The words $w_r$ in (2) can be computed in polynomial time using non-determinism.  
	Let $B := \{q \in Q \mid \exists p \in Q \colon (p, q) \in R\}$ be the set of all second components of pairs in $R$ with $m= |B|$. We guess an ordering $q_{i_1}, q_{i_2}, \dots, q_{i_m}$ of the states in $B$ corresponding to the last appearances of them on the path starting in $r$, induced by $w_r$. We compute the word $w_r$ in the following way, starting with $w_r = \epsilon$:\par
	\noindent\framebox{\parbox{\dimexpr\linewidth-2\fboxsep-2\fboxrule}{
	\textbf{Step 1:}
	Check whether $q_{i_1}$ is reachable from $r$ by some word $v$ using breadth-first search. If so, delete all states $p$ with $(p, q_{i_1}) \in R$ from $A$ and set $w_r := v$, otherwise return false.\par
	\textbf{Step 2:}
	For each $k$ with $1  \leq k < m$, check whether $q_{i_{k+1}}$ is reachable from $q_{i_k}$ by some word $v_k$ using breadth-first search. If so, delete all states $p$ with $(p, q_{i_{k+1}}) \in R$ from $A$ and set $w_r := w_rv_k$, otherwise return false.\par
	\textbf{Step 3:}
	Return $w_r$.}}
	
	If we guessed correctly, the algorithm returns a word $w_r$ that satisfies condition (2).
\end{proof}
\begin{remark}
	The \NP-hardness proof for {\sc Sync-Under-$\mathit{0}$-$\orzp$} in Theorem~\ref{O20PNP-hard}
	and the \NP-membership proof for {\sc Sync-Under-$\orep$} in Theorem~\ref{O1PNP} do not work for the respectively other problem since concerning $\orep$ the larger states need to be reached on \emph{every} path and not only on a path containing the corresponding smaller state as it is the case concerning~$\orzp$.
\end{remark}
\begin{theorem}
	\label{thm:o30pspace}
	The problem {\sc Sync-Under-$\mathit{0}$-$\ordp$} is \PSPACE-complete.
\end{theorem}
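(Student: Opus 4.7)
The plan is to obtain \PSPACE-hardness by a reduction from \textsc{Careful Sync} for DPAs, which is \PSPACE-complete~\cite{DBLP:journals/mst/Martyugin14}; containment in \PSPACE{} is already covered by Theorem~\ref{thm:inPSPACE}. Given a DPA $A = (Q, \Sigma, \delta)$ I would construct a DCA $A' = (Q \cup \{q_\circleddash\}, \Sigma, \delta')$ with a fresh ``sink-like'' state $q_\circleddash$ as follows. For $q \in Q$, $\sigma \in \Sigma$ set $\delta'(q,\sigma) := \delta(q,\sigma)$ if defined and $\delta'(q,\sigma) := q_\circleddash$ otherwise, as in the proof of Theorem~\ref{thm:O2-PSPACE-one-trans}. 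The crucial (and non-standard) choice is the behavior of $q_\circleddash$: fix an arbitrary $t^* \in Q$ and set $\delta'(q_\circleddash,\sigma) := \delta(t^*,\sigma)$ whenever defined, and $\delta'(q_\circleddash,\sigma) := q_\circleddash$ otherwise, so that $q_\circleddash$ \emph{shadows} $t^*$ exactly as long as $t^*$'s transitions remain defined. The relation is $R := \{(q_\circleddash, s) \mid s \in Q\}$.

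For the forward direction, suppose $w$ carefully synchronizes $A$ to some $t \in Q$. Then every $Q$-state travels through $Q$ under $w$ and ends at $t$, and because $w$ is defined on $t^*$ at every step, the shadowing rule gives $q_\circleddash.w = t^*.w = t$; hence $Q'.w = \{t\}$. Moreover, carefulness ensures $q_\circleddash$ does not appear on any path starting in $s \in Q$, so $\last(q_\circleddash, w, \{s\}) = -1 < 0 = \first(s, w, \{s\})$; the pairs in $R$ are also easily verified on the path from $q_\circleddash$ itself. Thus $w$ is a valid witness for $(A',R)$.

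For the converse, suppose $w$ synchronizes $A'$ with $R \subseteq\ \ordp$. The key step uses the variant-$\mathit{0}$ convention: for each $s \in Q$, evaluating the pair $(q_\circleddash,s) \in R$ at the starting state $r = s$ gives $\first(s, w, \{s\}) = 0$ and therefore $\last(q_\circleddash, w, \{s\}) < 0$, i.e., $q_\circleddash$ does not appear at any position on the $s$-path in $A'$. This is precisely the statement that $\delta$ is defined on $s$ along every prefix of $w$. Quantifying over $s \in Q$ shows that $w$ is careful in $A$, and since $Q'.w$ is a singleton contained in $Q$ (as every $s$-path stays in $Q$), $w$ synchronizes $A$ as well.

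The main obstacle is reconciling two opposing requirements: the reduction must force carefulness via $R$ in the backward direction, yet still allow $A'$ to be synchronizable in the forward direction even though the newly added $q_\circleddash$ would naturally form an isolated sink. The shadowing definition of $\delta'(q_\circleddash,\cdot)$ resolves this tension cleanly, because under any careful $w$ the trajectory of $q_\circleddash$ coincides with that of $t^*$ and merges into the common sync target, while a non-careful $w$ is ruled out by the pair $(q_\circleddash,s)$ for the offending $s$. It is worth noting that variant~$\mathit{0}$ is essential: in variant~$\mathit{1}$ the equality $\first(s, w, \{s\}) = 0$ is unavailable, so the pairs $(q_\circleddash,s)$ no longer enforce carefulness on their own.
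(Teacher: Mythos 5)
Your reduction is correct, and it takes a genuinely different (and leaner) route than the paper's. The paper also reduces from {\sc Careful Sync} and also uses a trap state $q_\circleddash$ with the pairs $\{q_\circleddash\}\times Q$ to enforce carefulness, but there $q_\circleddash$ is a sink for all letters of $\Sigma$; to let it merge with the rest, the paper adds a fresh letter $c$ (acting as the identity on $Q$ and sending $q_\circleddash$ to a fixed $t\in Q$) together with two gadget states $r,s$ and the extra pair $(s,r)$, whose role is to force every valid synchronizing word to begin with $c$. You avoid the entire gadget by letting $q_\circleddash$ \emph{shadow} a fixed state $t^*$ — the same device the paper itself uses in the proof of Theorem~\ref{thm:O2-PSPACE-one-trans} for $\orz$ — so that $q_\circleddash$ merges with $t^*$ after one letter of any careful word, and carefulness in the converse direction follows directly from evaluating $(q_\circleddash,s)$ on the path starting at $s$, where $\first(s,w,\{s\})=0$ under the variant-$\mathit{0}$ convention. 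What your approach buys: only one extra state, no extra letter (so \PSPACE-hardness is preserved for binary alphabets, which the paper's construction does not immediately give), and a shorter correctness argument; you also correctly identify that the construction is tied to variant $\mathit{0}$. The only blemish is the degenerate case $|Q|=1$ (a one-state DPA is carefully synchronized by $\epsilon$, while your $A'$ may not be synchronizable under $R$ if no letter is defined on the unique state); this is harmless and should just be excluded by assuming $|Q|\ge 2$, which guarantees any careful synchronizing word is nonempty and makes the forward direction go through as you state it.
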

\begin{proofsketch}
	\index{reduction from \textsc{Careful Sync}}
	We reduce from {\sc Careful Sync}. As in the proof of Theorem~\ref{thm:O2-PSPACE-one-trans} we take every undefined transition $\delta(q, \sigma)$ to the new state $q_\circleddash$. We further enrich the alphabet by a letter $c$ and use $c$ to take $q_\circleddash$ into $Q$. We use the relation $R$ and extra states $r, s$ to enforce that $c$ is the first letter of any synchronizing word, and that afterwards $q_\circleddash$ is not reached~again. 
\end{proofsketch}
\begin{proof}
	We give a reduction from the problem {\sc Careful Sync}. Let $A = (Q, \Sigma, \delta)$ be a DPA with $c \notin \Sigma$. We construct from $A$ the DCA $A' = (Q', \Sigma \cup \{c\}, \delta')$ where $Q' = Q \cup \{q_\circleddash, r, s\}$ with $Q \cap \{q_\circleddash, r, s\} = \emptyset$.
	For every undefined transition $\delta(q, \sigma)$ with $q \in Q$, $\sigma \in \Sigma$ in $A$, we define the transition $\delta'(q, \sigma) = q_\circleddash$ in $A'$. 
	We set $\delta'(q, c) = q$ for $q \in Q$ and for some state $t$ in $Q$ we set $\delta(q_\circleddash, c) = \delta'(r, c) = \delta'(s, c) = t$. For all other letters $\gamma \in \Sigma$, we set $\delta'(q_\circleddash, \gamma) = q_\circleddash$ and $\delta'(r, \gamma) = \delta'(s, \gamma) = s$.
	On all other transitions $\delta'$ agrees with $\delta$.
	We set the relation $R$ to $R := \{(s, r)\} \cup \{q_\circleddash\} \times Q$.
	
	Assume there exists a word $w \in \Sigma^*$, $|w| = n$ that synchronizes $A$ without using an undefined transition, then $cw$ synchronizes $A'$ and $R \subseteq\, \ordp[cw]$.
	In the automaton $A'$ the letter $c$ transitions the state set $\{q_\circleddash, r, s\}$ into $Q$. 
	As $c$ is the identity on the states in $Q$, we have $\delta'(Q', c) = Q$. 
	Since $\delta'$ agrees with $\delta$ on all defined transitions of $\delta$ and $\delta(q, w)$ is by assumption defined for all states $q \in Q$ we have $\delta'(Q', cw) = \delta(Q, w) = \{p\}$ for some state $p \in Q$ and $\delta'(Q', cw[1..i]) \subseteq Q$ for all $i \leq |w|$.
	It remains to show that $R$ is consistent with~$\ordp[cw]$. 
	As $q_\circleddash$ is left with the prefix $c$ and 
	is not
	reached while reading~$w$ the subset $\{q_\circleddash\} \times Q$ of $R$ is fulfilled.
	The prefix $c$ also causes the states $r$ and $s$ to transition into the state $t$ (instead of $s$), and since $s$ is not reachable from $Q$ it is not the case that $s$ appears after $r$ on any path induced by $cw$.
	
	For the other direction, assume there exists a word $w$ that synchronizes $A'$ with $R \subseteq\, \ordp$. 
	As $(s,r) \in R$ the path induced by $w$ which starts in $r$ must not contain the state~$s$. Hence, the first symbol of $w$ must be the letter $c$ as otherwise $r$ transitions into~$s$. As $\delta'(Q', c) = Q$ all path labeled with $c$, starting in a state in $Q'$, end in a state in $Q$. For every state $q \in Q$,
	$(q_\circleddash, q)$
	 is contained in $R$. 
	Hence, $q_\circleddash$ must not
	appear on a paths labeled with $w$ starting in a state in $Q'$ after reading the first letter $c$ of $w$. This means that the word $w[2..|w|]$ synchronizes the state set $Q$ in the automaton $A'$ without leaving the state set $Q$ or using a transition which is undefined in $A$.
	Since $\delta'$ agrees with~$\delta$ on all defined $\delta$-transitions, $w[2..|w|]$ carefully synchronizes the state set $Q$ in $A$. 
\end{proof}
\label{apx:o30pspace}
\begin{remark}
	In the presented way, the reduction relies on taking the initial configuration at position $i=0$ into account but we can adapt the construction to prove \PSPACE-completeness of {\sc Sync-Under-$\mathit{1}$-$\ordp$} by copying every state in $Q$ and the state $r$. Denote a copy of a state $q$ with $q'$. Then, for each letter $\sigma$, we set $\delta'(q', \sigma) = q$, for any copied state including $r'$. Note that the copied states are not reachable from any state. Now, after the first transition $w[1]$ (which can be arbitrary), we have a similar situation as previously considered for $w[0]$. The state $r$ is active and forces the next letter to be the letter~$c$; all states in $Q$ are active; reading the letter $c$ will cause all states $q_\sigma$ to be left and never be reached again.
\end{remark}
\begin{cancel}
\begin{corollary}
	The problem {\sc Sync-Under-$\mathit{1}$-$\ordp$} is \PSPACE-complete.
\end{corollary}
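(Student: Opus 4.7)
The plan is to combine \PSPACE-membership, which follows from Theorem~\ref{thm:inPSPACE} (whose proof handles both the $\mathit{0}$ and $\mathit{1}$ variants of $\ordp$), with a hardness reduction obtained by adapting the construction in the proof of Theorem~\ref{thm:o30pspace} as outlined in the preceding remark. Concretely, starting from the automaton $A'$ built there from a DPA $A$, I would add a fresh copy $q'$ of each state $q \in Q \cup \{r\}$ and set $\delta'(q', \sigma) = q$ for every letter $\sigma \in \Sigma \cup \{c\}$. No transition enters a primed state, so each primed state is active only at position $0$, which the $\mathit{1}$-variant ignores. The relation $R = \{(s, r)\} \cup \{q_\circleddash\} \times Q$ is kept unchanged.

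For the forward direction, given a careful synchronizing word $w$ for $A$ and any $\sigma \in \Sigma$, I would verify that $\sigma c w$ synchronizes $A'$ with $R \subseteq \ordp[\sigma c w]$: the letter $\sigma$ discards all primed states, and after $\sigma c$ the active set lies in $Q$ because $c$ sends each of $q_\circleddash, r, s$ to $t$ while fixing every state of $Q$; the word $w$ then carefully synchronizes $Q$ inside $A$, hence also inside $A'$. A path-by-path inspection shows that $s$ is unreachable from $Q$, so $(s, r)$ is trivially respected, and that $q_\circleddash$, once left via $c$ at position $2$, is never revisited at positions $\geq 3$, so every pair $(q_\circleddash, q)$ is respected.

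For the backward direction, suppose $w'$ synchronizes $A'$ with $R \subseteq \ordp[w']$. The path starting at $r'$ reaches $r$ at position $1$, so $\first(r, w', \{r'\}) = 1$; the pair $(s, r) \in R$ therefore forbids $s$ from appearing on this path at any position $\geq 1$. Since $\delta'(r, \gamma) = s$ for every $\gamma \in \Sigma$, this forces $w'[2] = c$. After reading $w'[1]c$ the active set is contained in $Q$. For every $q \in Q$, the path starting at $q'$ is in $q$ at position $1$ and, since $c$ fixes $Q$, still in $q$ at position $2$, giving $\first(q, w', \{q'\}) \leq 2$; the pair $(q_\circleddash, q) \in R$ therefore rules out $q_\circleddash$ being active at any position $\geq 3$ on any path starting in $Q'$. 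Hence $w'[3..|w'|]$ uses only defined transitions of $A$ on every state of $Q$, and deleting its occurrences of $c$ (which act as the identity on $Q$) yields a careful synchronizing word for $A$.

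The main obstacle I expect lies in the backward direction, in carefully propagating the constraints $(q_\circleddash, q) \in R$ from the copy-paths (where $q$ appears as early as position $1$) to every path through $A'$, so that $q_\circleddash$ is globally avoided from position $3$ onwards. The universal quantification over starting states in the definition of $\ordp$ is what makes this propagation possible, but one has to be a little careful with paths that temporarily sit in $q_\circleddash$ at position $1$ and then escape via the forced $c$ at position $2$, verifying that the first-occurrences they generate lie strictly above the corresponding last-occurrences.
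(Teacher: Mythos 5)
Your proposal is correct and follows essentially the same route as the paper, which establishes membership via Theorem~\ref{thm:inPSPACE} and hardness by exactly the adaptation you describe (copying each state of $Q\cup\{r\}$, letting an arbitrary first letter collapse the copies, and forcing $w'[2]=c$ via the pair $(s,r)$). One step should be reworded, though: the exclusion of $q_\circleddash$ from positions $\geq 2$ on an arbitrary path $\tau$ does not follow from the copy-path inequality $\first(q,w',\{q'\})\leq 2$, since that inequality only constrains the path starting at $q'$; the correct mechanism is to apply the pair $(q_\circleddash,\tau[2])$ to the path $\tau$ itself — because the active set at position $2$ is contained in $Q$, one has $\first(\tau[2],w',\{\tau[0]\})\leq 2$, so $\last(q_\circleddash,w',\{\tau[0]\})\leq 1$ on that very path. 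The copies are needed only to keep $r$ active at position $1$ (forcing $c$) and to guarantee that the active set at position $2$ is all of $Q$, so that $w'[3..|w'|]$ (with its $c$'s deleted) carefully synchronizes $A$ from the full state set.
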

\end{cancel}
In the above reduction from {\sc Careful Sync} the size of $R$ depends on $|Q|$.
 Hence, the question whether {\sc Sync-Under-$\ordp$} is \PSPACE-hard for $|R|=1$ is an interesting topic for further research.
We will now see that when $R$ is a strict and total order on $Q$, 
the problem of synchronizing under $\ordp$ (a.k.a.~{\sc Sync-Under-Total-$\ordp$}) 
 becomes tractable. 
\begin{theorem}
	\label{thm:O3totalNP}
	Let $A = (Q, \Sigma, \delta)$, $R$ be an instance of {\sc Sync-Under-Total-$\ordp$}. 
	A shortest synchronizing word $w$ for $A$ with $R \subseteq\, \ordp$ has length
	 $|w|\leq \frac{|Q|(|Q|-1)}{2}+1$. 
\end{theorem}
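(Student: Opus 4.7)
The plan is a potential-function argument that exploits the total-order structure of $R$. Enumerate $Q=\{q_1,\ldots,q_n\}$ with $q_1<_R q_2<_R\cdots<_R q_n$ and set $\mathrm{rk}(q_i):=i$. The first observation is a monotonicity lemma: along every path induced by any word $w$ with $R\subseteq\ordp$, two consecutive distinct states $s=r.w[1..t-1]\neq s'=r.w[1..t]$ must satisfy $s<_R s'$. Indeed, since $R$ is total, $s'<_R s$ would put $(s',s)\in R$, so $\ordp$ would demand $\last(s',w,\{r\})<\first(s,w,\{r\})$; but positions $t$ and $t-1$ on this very path witness $s'$ and $s$, contradicting the inequality. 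Applied to the starting state $q_n$, this immediately forces the synchronizing state of any such $w$ to be $q_n$.

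With monotonicity in hand, define the potential
\[
\Phi(t):=\sum_{r\in Q}\mathrm{rk}\!\left(r.w[1..t]\right).
\]
Then $\Phi(0)=1+2+\cdots+n=\tfrac{n(n+1)}{2}$ and $\Phi(|w|)=n\cdot n=n^2$, since every path ends at $q_n$. The next step is to argue that for a shortest synchronizing $w$, every position strictly increases $\Phi$. If instead some letter $w[t]$ acts as the identity on the entire current active set $Q.w[1..t-1]$, then deleting $w[t]$ yields a shorter word whose induced paths are obtained from those of $w$ by collapsing two identical consecutive positions: the first and last occurrences of each state $p$ shift by either $0$ or $1$ in a way I would verify by a short case analysis (never is $\first(q,w,\{r\})=t$ possible, since $q$ already appears at $t{-}1$, and the only case where $\last$ shifts while $\first$ does not still preserves the strict inequality because both bounds differ by at least $2$). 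This contradicts minimality, so some path must move at step $t$, and by monotonicity its rank strictly increases, giving $\Phi(t)\geq\Phi(t-1)+1$.

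Combining these observations yields $|w|\leq\Phi(|w|)-\Phi(0)=n^{2}-\tfrac{n(n+1)}{2}=\tfrac{n(n-1)}{2}$, which is within the stated bound. The main obstacle is the deletion-preserves-order argument in the second paragraph: one has to verify carefully that after removing a globally-identity letter, every pair $(p,q)\in R$ still has $\last(p,w',\{r\})<\first(q,w',\{r\})$. Everything else is bookkeeping once monotonicity has pinned down the synchronizing state and made the potential monotone-strictly-increasing along any shortest witness.
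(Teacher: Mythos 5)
Your proof is, in substance, the same token/potential argument the paper gives, and for the variant {\sc Sync-Under-Total-$\mathit{0}$-$\ordp$} it is correct and rather more carefully justified than the paper's one-sentence version: totality of $R$ forces every step of every path to be weakly increasing in the order, a shortest witness cannot contain a letter acting as the identity on the whole active set, and hence $\Phi$ increases by at least one per letter, giving $|w|\le n^2-\frac{n(n+1)}{2}=\frac{n(n-1)}{2}$. Your deletion case analysis is also sound in substance; note only that the dangerous case is the one where $\first(q,w,\{r\})$ shifts down while $\last(p,w,\{r\})$ does not (not the other way round, as your parenthetical suggests), and it is exactly your ``the two positions differ by at least $2$'' observation that rescues strictness there.

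The genuine gap is that the theorem (by the paper's naming convention, and this is why the bound carries the $+1$) also covers {\sc Sync-Under-Total-$\mathit{1}$-$\ordp$}, where occurrences at position $0$ are ignored. There your monotonicity lemma fails at $t=1$: the first letter may move every state arbitrarily, the synchronizing state need not be $q_n$, and $\Phi(1)$ can be far below $\Phi(0)$ (e.g.\ almost all states mapped near the bottom of the order), so telescoping from position $1$ only yields $|w|\le 1+\bigl(n^2-\Phi(1)\bigr)$, which can be as large as $n(n-1)+1$. The multiplicity-counting potential is too weak here; the fix is to measure the remaining budget over \emph{distinct} active states: $S_1=Q.w[1]$ consists of $k$ distinct states whose ranks, sorted increasingly, are at least $1,2,\dots,k$, so $\sum_{p\in S_1}\bigl(n-\mathrm{rk}(p)\bigr)\le\sum_{j=1}^{k}(n-j)\le\frac{n(n-1)}{2}$, and every subsequent non-deletable letter decreases this sum by at least one (choose an injective system of preimages and note that a strictly moved state is not $q_n$). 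That recovers $|w|\le\frac{n(n-1)}{2}+1$ and is precisely the content of the paper's remark that the tokens can be moved at most $0,1,\dots,|Q|-1$ times even after an arbitrary first step.
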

\begin{proof}
	The relation $R$ implies a unique ordering $\sigma$ of the states in $Q$. 
	We put a token in every state which will be moved by applications of letters and think of active states as states containing a token. 
	In the problem variant {\sc Sync-Under-Total-$\mathit{1}$-$\ordp$} the tokens can be moved anywhere in the first step but afterwards - and in the variant {\sc Sync-Under-Total-$\mathit{0}$-$\ordp$} - the tokens can only move to bigger states concerning~$\sigma$. Each letter should move at least one token and the tokens in the $|Q|$ states can only be moved $0, 1, 2, \ldots, |Q|-1$ times, giving a total length bound of $\frac{|Q|(|Q|-1)}{2}+1$ for {\sc Sync-Under-Total-$\mathit{1}$-$\ordp$} and $\frac{|Q|(|Q|-1)}{2}$ for {\sc Sync-Under-Total-$\mathit{0}$-$\ordp$}.
\end{proof}
\begin{cancel}
\begin{proof}
	The relation $R$ implies a unique ordering $\sigma$ of the states in $Q$. 
	We put a token in every state which will be moved by applications of letters and think of active states as states containing a token. 
	In the problem variant {\sc Sync-Under-Total-1-$\ordp$} the tokens can be moved anywhere in the first step but afterwards - and in the variant {\sc Sync-Under-Total-0-$\ordp$} - the tokens can only move to bigger states concerning $\sigma$. Each letter should move at least one token and the tokens in the $n$ states an only be moved $0, 1, 2, \ldots, n-1$ times, giving a total length bound of $\frac{|Q|(|Q|-1)}{2}+1$ for {\sc Sync-Under-Total-1-$\ordp$} and $\frac{|Q|(|Q|-1)}{2}$ for {\sc Sync-Under-Total-0-$\ordp$}.
	
	Since $R$ is total, transitive, and irreflexive, for every word $w$ that synchronizes $A$ with $R \subseteq\, \propto^3_w$ it holds that, if any path induced by $w$ contains a cycle, then this cycle must be a loop, i.e., it consists of only one distinct state. 
	Further, every path $\tau$ induced by $w$ must be a sub-sequence of the ordering $\sigma$ since otherwise some pair in $R$ would be violated. 
	As there are $n$ many paths induced by a word $w$ on $n$ states and a path starting in a state at position $i$ may contain only up to $n-i$ different state changes there are in total only up to $\frac{n(n-1)}{2}$ \emph{different} sets of active states in $A$ while reading $w$.

	If for some position $i$ in $w$ it holds that $\delta(q, w[1..i]) = \delta(q, w[1..(i+1)])$ for all states $q \in Q$, then the $i+1^\text{st}$ letter can be removed from $w$ yielding a shorter word $w'$ which also synchronizes $A$ under the order $R$.
	Repeatedly removing such positions from $w$ yields a word $w'$ of size at most $\frac{n(n-1)}{2}$. 
\end{proof}
\end{cancel}
Note that this length bound is smaller than the bound of the \cerny\ conjecture for $|Q| > 3$~\cite{cerny1964,Cer2019}. The same bound can be obtained for  {\sc Subset-Sync-Under-Total-$\ordp$}.
\begin{cancel}
\begin{corollary}
	\label{thm:O31length}
	Let $A = (Q, \Sigma, \delta)$, $R \subseteq Q^2$ be an instance of the problem {\sc Sync-Under-Total-$\mathit{1}$-$\ordp$} with $n:= |Q|$. If $A$ is synchronizable by a shortest word $w$ such that $R \subseteq\, \ordp$, then $|w|\leq n^2$.
\end{corollary}
\end{cancel}
We will now prove that the problem {\sc Sync-Under-Total-$\mathit{0}$-$\ordp$} is equivalent -- concerning polynomial time many-one-reductions (depicted by $\equiv_p$)~-- to the problem of carefully synchronizing a partial weakly acyclic automaton (PWAA)
 (a PWAA is a WAA where $\delta$ might be only partially defined). 
 The obtained length bound also holds for PWAAs, 
 which is only a quadratic increase w.r.t.~the linear length bound in the complete case~\cite{DBLP:journals/tcs/Ryzhikov19a}.
%
\begin{theorem}
	\label{O3TotalEquivPWAA}
	{\sc Sync-Under-Total-$\mathit{0}$-$\ordp$} $\equiv_p$ {\sc Careful Sync} of PWAAs.
\end{theorem}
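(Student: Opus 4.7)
The plan is to establish the equivalence via two polynomial-time many-one reductions, using the observation that weak acyclicity is dual to the non-decreasing behaviour forced by a total $\ordp$-order.

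\textbf{Direction $\leq_p$.} Given $(A=(Q,\Sigma,\delta),R)$ where $R$ is a strict total order on $Q$ inducing the linear ordering $q_1 < q_2 < \cdots < q_n$, I would construct a PWAA $B=(Q,\Sigma,\delta_B)$ by deleting exactly the non-order-respecting transitions of $A$: set $\delta_B(q_i,\sigma) := \delta(q_i,\sigma)$ whenever $\delta(q_i,\sigma)=q_j$ with $j \geq i$, and leave $\delta_B(q_i,\sigma)$ undefined otherwise. By construction $B$ is a PWAA witnessed by the same ordering. The equivalence is then routine: using a transition $q_i \to q_j$ with $j < i$ places $q_j$ after $q_i$ on the path where it is applied and violates $(q_j, q_i) \in R$; conversely, any word using only the kept transitions produces non-decreasing paths in $A$, which immediately yields $R \subseteq\, \ordp$. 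Hence a word synchronizes $A$ under $R$ iff it carefully synchronizes $B$.

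\textbf{Direction $\geq_p$.} Given a PWAA $B=(Q,\Sigma,\delta_B)$ with WAA ordering $q_1 < \cdots < q_n$, I would add a fresh state $q_<$ as the new minimum of a total order $R$ on $Q' := Q \cup \{q_<\}$, i.e.\ $q_< < q_1 < \cdots < q_n$, and complete the transitions via $\delta_A(q_i,\sigma) := \delta_B(q_i,\sigma)$ when defined, $\delta_A(q_i,\sigma) := q_<$ when not, and $\delta_A(q_<,\sigma) := q_n$ for every $\sigma \in \Sigma$. The pair $(q_<, q_j) \in R$ demands $\last(q_<, w, \{q_j\}) < \first(q_j, w, \{q_j\}) = 0$ for each $j \in [1..n]$, so $q_<$ must not appear at any position on a path starting in some $q_j \in Q$; this rules out exactly the transitions completing the undefined ones of $B$, so any $w$ that synchronizes $A$ under $R$ is careful in $B$. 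Conversely, a careful synchronizing word $w$ of $B$ never triggers a transition to $q_<$, so in $A$ the state $q_<$ only appears on the single path starting in $q_<$, and there only at position~$0$, giving $R \subseteq\, \ordp$.

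\textbf{Main obstacle.} The subtle point is to check that a careful synchronizing word $w$ for $B$ genuinely synchronizes the whole state set $Q'$ of $A$, including $q_<$. Since $q_n$ is maximal in the WAA ordering, all of its defined transitions are self-loops; as $w$ is careful, every letter is defined at $q_n$, forcing $q_n.w = q_n$, so the common synchronizing target of $Q$ in $B$ (and hence in $A$) must be $q_n$. For $q_<$, the letter $w[1]$ takes it to $q_n$, and then $q_n$'s self-loops keep it there, so $q_<.w = q_n$ as well. With this the two directions combine to the claimed polynomial-time equivalence; the length-bound transfer mentioned in the remark follows because $q_<$ is only active at position~$0$ of the synchronization process and therefore does not inflate the bound of Theorem~\ref{thm:O3totalNP} beyond an additive constant.
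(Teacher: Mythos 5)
Your proposal is correct and follows essentially the same route as the paper: deleting the order-violating transitions for one direction, and adding a fresh minimal state $q_<$ that absorbs the undefined transitions and is sent to the maximal state by every letter for the other, with $R$ consisting of the WAA order plus the pairs $(q_<,q)$. Your explicit argument that the synchronizing target must be $q_n$ (via its self-loops under a careful word) fills in a detail the paper only asserts, but the construction and reasoning are the same.
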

\index{\textsc{Careful Sync} of PWAAs}
\begin{proof}
	We prove this statement by reducing the two problems to each other. Let $A=(Q,\Sigma,\delta)$, $R\subseteq Q^2$ be an instance of {\sc Sync-Under-Total-$\mathit{0}$-$\ordp$}. Since $R$ is a strict total order on $Q$, we can order the states according to $R$. We construct from $A$ the PWAA $A'=(Q, \Sigma, \delta')$ by removing all transitions in $\delta$ which are leading backwards in the order. Clearly, $A'$ is carefully synchronizable if and only if $A$ is synchronizable with respect to $R$.
	
	For the other reduction, assume $A=(Q, \Sigma, \delta)$ is a PWAA. Then, we can order the states in $Q$ such that no transition leads to a smaller state. We are constructing from $A$ the DCA $A' = (Q\cup \{q_<\}, \Sigma, \delta')$ and insert $q_<$ as the smallest state in the state ordering. Then, we define in $\delta'$ all transitions $(q, \sigma)$  for $q \in Q, \sigma \in \Sigma$ which are undefined in $\delta$ as~$\delta'(q, \sigma) = q_<$. We take the state $q_<$ with every symbol to the maximal state in the order. Note that the maximal state needs to be the synchronizing state if one exists. We set $R = \{(p,q) \mid p < q \text{ in the state ordering of } Q \text{ in } A\} \cup \{(q_<, q)\mid q \in Q \}$. 
	Every undefined transition $(p, \sigma)$ in $A$ is not allowed in $A'$ at any time, since otherwise the pair $(q_<, p) \in R$ would be violated. The state $q_<$ itself can reach the synchronizing state with any transition. Hence, $A'$ is synchronizable with respect to $R$ if and only if $A$ is carefully synchronizable.
\end{proof}
\noindent{\sffamily\textcolor{darkgray}{\textbf{Remarks on the length bound of synchronizing words for PWAAs:}}}
\index{synchronizing words for PWAAs}
\label{apx:PWAA-length}
In the reduction from \textsc{Careful Sync} of PWAAs to 
{\sc Sync-Under-Total-$\mathit{0}$-$\ordp$} the state set is only increased by one additional state $q_<$. As this state is not reachable from any other state (as otherwise the order would be violated) and is left into the largest state, w.r.t.~the constructed order, with every letter, this state does not contribute to the length of a potential synchronizing word if the number of states is >1. As for all other states, the allowed transitions in the DCA act in the same way as they do in the PWAA, the length bound of a synchronizing word for DCAs w.r.t.~{\sc Sync-Under-Total-$\mathit{0}$-$\ordp$} translates to a length bound of a carefully synchronizing word for PWAAs.
This is quite surprising as in general shortest carefully synchronizing words have an exponential lower bound~\cite{DBLP:conf/wia/Martyugin12}. Further, we show that careful synchronization for PWAAs is in \PTIME while the problem is \PSPACE-complete for general DPAs even if only one transition is undefined~\cite{DBLP:conf/wia/Martyugin12}.
\begin{corollary}
	For every PWAA $A=(Q, \Sigma, \delta)$, a shortest word $w$ carefully synchronizing $A$ has length $|w|\leq \frac{|Q|(|Q|-1)}{2}$.
\end{corollary}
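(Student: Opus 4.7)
The plan is to obtain the corollary as an immediate consequence of Theorem~\ref{O3TotalEquivPWAA}, the length bound of Theorem~\ref{thm:O3totalNP}, and the remark immediately preceding the corollary. Given a PWAA $A$ with $|Q|=n$, the reduction of Theorem~\ref{O3TotalEquivPWAA} produces a DCA $A'=(Q\cup\{q_<\},\Sigma,\delta')$ together with a strict total order $R$ so that carefully synchronizing words for $A$ correspond exactly to synchronizing words for $A'$ under $R$. Plugging $A'$ into Theorem~\ref{thm:O3totalNP} gives length at most $\frac{(n+1)n}{2}$. The stated remark then shaves off the extra $n$ contributions attributable to $q_<$: this state is not reachable from any other state and every letter takes it to the maximum state, so after the first transition it is absorbed into the remainder of the computation and therefore does not increase the number of letters required. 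This is precisely the slack between $\frac{(n+1)n}{2}$ and the desired $\frac{n(n-1)}{2}$.

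An alternative, self-contained argument avoids the detour through the reduction and is arguably cleaner. Fix a weakly acyclic ordering $q_1<q_2<\dots<q_n$ of $Q$. Placing a token in every state and tracking the active set, no transition moves a token from $q_i$ to $q_j$ with $j<i$, so the token originally at $q_i$ can change state at most $n-i$ times while reading any word. Summing over all tokens bounds the total number of token moves performed by an arbitrary word by $\sum_{i=1}^{n}(n-i)=\frac{n(n-1)}{2}$. A shortest carefully synchronizing word $w$ cannot contain a letter that fixes every currently active state: such a letter is globally defined on the active set (carefulness is preserved) and deleting it produces a strictly shorter carefully synchronizing word. Hence each of the $|w|$ letters causes at least one token change, giving $|w|\leq\frac{n(n-1)}{2}$.

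The only point demanding a moment of care is the removability claim in the last step, but it is genuinely routine: if letter $w[i]$ is the identity on $Q.w[1..i-1]$, then $Q.w[1..i]=Q.w[1..i-1]$ and all transitions used by $w[i+1..|w|]$ remain defined on this set, so the shortened word $w[1..i-1]\,w[i+1..|w|]$ is still carefully synchronizing. I do not foresee a real obstacle here; the main content of the corollary is really the combinatorial observation that weak acyclicity forces a quadratic bound even in the partial setting, in sharp contrast to the exponential lower bound $\Omega(3^{n/3})$ known for careful synchronization of general DPAs.
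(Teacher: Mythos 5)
Your proposal is correct and matches the paper's derivation: the corollary is obtained exactly as in your first paragraph, from Theorem~\ref{O3TotalEquivPWAA} together with the token-counting bound of Theorem~\ref{thm:O3totalNP} and the preceding remark that discounts the auxiliary state $q_<$. Your ``self-contained'' second argument is not genuinely different either --- it is the same token argument from Theorem~\ref{thm:O3totalNP} applied directly to the PWAA --- but it is a clean inlining that sidesteps the slightly informal bookkeeping around $q_<$ in the paper's remark, and your justification that a shortest careful word contains no letter acting as the identity on the active set is exactly the missing routine step.
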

We can generalize the length bound obtained in the case when $R$ is a total order on the whole state set to the case that $R$ is only total for a subset of states.
\begin{theorem}
	\label{thm:O3paraLength}
	Let $A = (Q, \Sigma, \delta)$, $R \subseteq Q^2$
	 with $n= |Q|$. Let $Q_1 \subseteq Q$ be such that $R$ restricted to $Q_1\times Q_1$ is a strict and total order. 
Let $p = |Q|-|Q_1|$. 
	For {\sc Sync-Under-$\ordp$}: If $A$ is synchronizable by a shortest word $w$ with $R \subseteq\, \ordp$, then: $|w|\leq (\frac{n(n-1)}{2}+1)\cdot 2^p$.
\end{theorem}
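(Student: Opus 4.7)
The plan is to proceed by induction on $p=|Q|-|Q_1|$. The base case $p=0$ is exactly Theorem~\ref{thm:O3totalNP}: then $R$ is a strict total order on $Q$, giving $|w|\leq\tfrac{n(n-1)}{2}+1=(\tfrac{n(n-1)}{2}+1)\cdot 2^{0}$.

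For the inductive step with $p\geq 1$, fix any state $x\in Q\setminus Q_1$ and let $w$ be a shortest synchronizing word for $A$ with $R\subseteq\ordp$. Let $i$ be the largest index such that $x$ is active on some path induced by $w[1..i]$ (and $i=0$ if $x$ is never active), and write $w=uv$ with $|u|=i$. By choice of $i$, the state $x$ is not active on any path while reading $v$, so $v$ synchronizes the set $Q.u\subseteq Q\setminus\{x\}$ in the sub-automaton of $A$ obtained by deleting $x$; this sub-instance has $n-1$ states and still contains $Q_1$ as an $R$-totally-ordered subset, hence parameter $p-1$, and the inductive hypothesis gives $|v|\leq(\tfrac{(n-1)(n-2)}{2}+1)\cdot 2^{p-1}\leq(\tfrac{n(n-1)}{2}+1)\cdot 2^{p-1}$. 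For the prefix $u$, one extends the total order on $Q_1$ by inserting $x$ at an extremal position (top or bottom of the order, depending on the direction of the $R$-constraints involving $x$); this turns $Q_1\cup\{x\}$ into an $R$-totally-ordered subset and reduces the parameter to $p-1$, so the inductive hypothesis again gives $|u|\leq(\tfrac{n(n-1)}{2}+1)\cdot 2^{p-1}$. Summing the two contributions yields $|w|\leq(\tfrac{n(n-1)}{2}+1)\cdot 2^{p}$.

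The main obstacle is the estimate on the prefix $u$: a single placement of $x$ in the $Q_1$-order need not be compatible with all pairs of $R$ involving $x$ (for example, $R$ might simultaneously contain $(q,x)$ and $(x,q')$ for $q,q'\in Q_1$, forcing $x$ into a specific middle position). The remedy is to exploit the minimality of $w$ to show that the $x$-visits along $u$ organise into a single contiguous block, so $u$ can be further split at the first position $j\leq i$ at which $x$ becomes active on any path: on $u[1..j]$ the state $x$ has not yet appeared and may be safely placed at the top, while on $u[j+1..i]$ all pairs $(x,q')$ with $q'\in Q_1$ have already been discharged and $x$ may be safely placed at the bottom. Each of the two sub-prefixes then falls under the inductive hypothesis with parameter $p-1$, and the resulting extra factor of $2$ is absorbed into $2^p$, preserving the claimed bound.
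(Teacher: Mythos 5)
Your induction has a genuine gap in the treatment of the prefix $u$. The inductive hypothesis bounds the length of a \emph{shortest synchronizing word} for an instance with parameter $p-1$; but $u$ is merely a prefix of $w$ and synchronizes nothing, so the hypothesis simply does not apply to it --- you would need a strengthened statement bounding shortest words realizing an arbitrary configuration change, which you never formulate or prove. The patch you propose makes things worse rather than better: the claim that minimality of $w$ forces the visits of $x$ (across \emph{all} paths simultaneously) into one contiguous block is asserted without argument and is not obviously true; the phrase ``all pairs $(x,q')$ have already been discharged'' misreads $\ordp$, since $(x,q')\in R$ constrains the \emph{future} (once $q'$ has first been visited on a path, $x$ may never be visited again there), so nothing is discharged by time $j$; and the arithmetic does not close --- splitting $u$ into two pieces each bounded by $(\tfrac{n(n-1)}{2}+1)2^{p-1}$ gives $|u|\le(\tfrac{n(n-1)}{2}+1)2^{p}$ and hence $|w|\le(\tfrac{n(n-1)}{2}+1)\cdot 3\cdot 2^{p-1}$, which exceeds the claimed bound. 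There is also a smaller mismatch: after deleting $x$, the word $v$ only synchronizes the \emph{subset} $Q.u$ of a now-partial automaton, so even the bound on $|v|$ needs a subset version of the hypothesis that the stated theorem does not provide.

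The paper avoids all of this with a direct counting argument and no induction. Order $Q_1$ by $R$ and view the active states in $Q_1$ as tokens that, exactly as in Theorem~\ref{thm:O3totalNP}, can only move forward in the order; hence the $Q_1$-part of the configuration changes at most $\tfrac{n(n-1)}{2}+1$ times over the whole word. Between two consecutive such changes the $Q_1$-part is frozen, and since a shortest word never repeats a full configuration, the set of active states in $Q\setminus Q_1$ can take at most $2^{p}$ distinct values in between. Multiplying the two counts gives $(\tfrac{n(n-1)}{2}+1)\cdot 2^{p}$. If you want to rescue a decomposition-style argument, this configuration-counting statement is precisely the lemma you are missing for the prefix.
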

\begin{proof}
	As before, the states in the set $Q_1$ can be ordered according to $R$ and might only be traversed in this order. For every transition of a state in $Q_1$, in the worst case all possible combinations of active states in $Q\backslash Q_1$ might be traversed (once) yielding~$2^p$ transitions with identical active states in $Q_1$ between two transitions of any state in~$Q_1$. 
\end{proof}
We now present an $\mathcal{O}(|\Sigma|^2|Q|^2)$ algorithm for {\sc Sync-Under-Total-$\mathit{0}$-$\ordp$}. The idea is the following: First, we delete all transitions that violate the state order. Then, we start on all states as the set of active states and pick a letter, which is defined on all active states and maps at least one active state to a larger state in the order $R$. We collect the sequence $u$ of applied letters and after each step, we apply the whole sequence $u$ on the set of active states. This is possible as we already know that $u$ is defined on $Q$. We thereby ensure that a state which has become inactive after some iteration never becomes active again after an iteration step and hence $\Sigma_{\text{def}}$ grows in each step and never shrinks. While a greedy algorithm which does not store $u$ runs in $\mathcal{O}(|\Sigma||Q|^3)$, with this trick we get a running time of $\mathcal{O}(|\Sigma|^2|Q|^2)$. As in practice $|Q| \gg |\Sigma|$ this is a remarkable improvement. Note that we can store $u$ compactly by only keeping the map induced by the current $u$ and storing the sequence of letters $\sigma$ from which we can restore the value of $u$ in each iteration.  

	\noindent{\sffamily\textcolor{darkgray}{\textbf{Greedy algorithm for \textsc{Sync-Under-Total-$\mathit{0}$-$\ordp$}}}}
	\label{alg:greedy}
	Let $A=(Q, \Sigma, \delta)$ be a DCA with $|Q| = n$ and $|\Sigma|=m$, and let $R \subseteq Q \times Q$ be a total order.
	We sketch an $\mathcal{O}(mn^3)$ greedy algorithm which computes a synchronizing word $w$ for $A$ with $R \subseteq \ordp$.
	
	First, order the set $Q$ according to $R$. Delete all transitions in $A$ which are leading backwards in the state-ordering obtaining the DPA $A'$. Set $Q_1 = Q$, $w_1 = \epsilon$.
	
	At each step $i$: Check if $|Q_i| = 1$, if so return yes and the word $w_i$. Otherwise, compute $\Sigma_i = \{\sigma \in \Sigma \mid q.\sigma \text{ is defined for all } q \in Q_i\}$. Test if there is at least one letter $\sigma \in \Sigma_i$ that maps a state in $Q_i$ to a larger state. If so, apply this letter to $Q_i$, obtaining $Q_{i+1} = Q_i.\sigma$, set $w_{i+1}=w_i\sigma$, and continue with the next step. If there is no such letter $\sigma$, return no.
	
	By Theorem~\ref{O3TotalEquivPWAA} there exists a synchronizing word $w$ for $A$ with $R \subseteq \ordp$ if and only if there exists a carefully synchronizing word $w$ for $A'$.
	Observe that if $A'$ is carefully synchronizing, then every subset of $Q$ can be synchronized. Hence, if $A'$ is carefully synchronizing, then for every subset $S\subseteq Q$ there exists a letter $\sigma$ which is defined on all states in $S$ and maps at least one state in $S$ to a larger state. Hence, the algorithm will find a carefully synchronizing word and terminate. 
	
	Conversely, if  the algorithm returns no, the set $Q_i$ of active states at the last step is a witness proving that $A'$ is not carefully synchronizing, since no letter can map this subset to a different one, and thus $Q$ cannot be synchronized.
	
	The preprocessing of the algorithm takes time $\mathcal{O}(n \log n + mn)$. Each step of the algorithm takes time $\mathcal{O}(mn)$. The maximum number of steps is $\mathcal{O}(n^2)$, since at each step we move a token on the active states according to a total order by at least one. Hence, the total running time is $\mathcal{O}(mn^3)$.
\begin{theorem}
	\label{thm:O3totalP}
	 {\sc Sync-Under-Total-$\mathit{0}$-$\ordp$} is solvable in quadratic time.
\end{theorem}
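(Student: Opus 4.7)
The plan is to invoke Theorem~\ref{O3TotalEquivPWAA} and reduce the problem to careful synchronization of the PWAA $A'=(Q,\Sigma,\delta')$ obtained from $A$ by deleting every transition that violates the total order $R$. A careful synchronizing word for $A'$ then serves simultaneously as a synchronizing word for $A$ respecting $R$. What remains is to describe a quadratic-time routine that either outputs a careful synchronizing word of $A'$ or certifies that none exists, and to verify both its correctness and the $\mathcal{O}(k^2n^2)$ running time where $n=|Q|$ and $k=|\Sigma|$.

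The algorithm I would present maintains the currently active set $S$ (initially $Q$) and the applied word $u$ (initially $\epsilon$), stored compactly via the induced map $f_u\colon Q\to Q$. At each iteration I scan the alphabet for a letter $\sigma$ that is defined on every state of $S$ and moves at least one state of $S$ strictly forward in the order; such a $\sigma$ is appended to $u$ and $S$ is replaced by $S.\sigma$. Correctness follows from two observations. First, in a PWAA every defined transition either advances a state or fixes it, so if $A'$ is carefully synchronizable from $S$ with $|S|\ge 2$, then the first letter of any careful synchronizing word must be admissible in the above sense; hence the greedy choice never fails prematurely. Second, once a state leaves $S$ it cannot return, since transitions only go forward in the order and $\min\mathrm{rank}(S)$ is monotone non-decreasing. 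Termination and the length bound follow from the potential $\Phi=\sum_{q\in S}\mathrm{rank}(q)$, which strictly increases at each iteration and is bounded by $\mathcal{O}(n^2)$, reproducing the $\binom{n}{2}$ bound of Theorem~\ref{thm:O3totalNP}.

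For the running time, the naive implementation scans all letters each round and pays $\mathcal{O}(kn)$ per iteration, for a total of $\mathcal{O}(kn^3)$. To reach $\mathcal{O}(k^2n^2)$ I would exploit the invariant highlighted just before the statement: since $\min\mathrm{rank}(S)$ is monotone, the set $\Sigma_{\mathrm{def}}$ of letters defined on all of $S$ only grows throughout the execution, so a letter that has once become admissible stays admissible. The idea is to keep $f_u$ up to date in $\mathcal{O}(n)$ per step and, whenever a candidate $\sigma$ is tried, to apply $u$ to the current active set in one shot via $f_u$ rather than letter-by-letter. A careful amortized accounting then charges each of the $k$ letters with $\mathcal{O}(kn)$ work across the entire run, giving the total $\mathcal{O}(k^2n^2)$.

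The main obstacle I anticipate is not the correctness of the greedy step, which follows quickly from the PWAA property, but formalizing the amortized analysis that genuinely saves a factor of $n$. The bottleneck of the naive routine is rescanning $\Sigma$ at every one of the $\Theta(n^2)$ iterations; the key to the speedup is to exploit monotonicity of $\Sigma_{\mathrm{def}}$ together with the compact composite map $f_u$ so that the accounting is ``per letter'' rather than ``per iteration.'' Making this rigorous, while still guaranteeing that an admissible letter is always found when one exists without resorting to full rescans, is the delicate technical step, and is where the trick of maintaining and occasionally reapplying the whole sequence $u$ pays off.
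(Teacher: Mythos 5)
Your overall route is the paper's: delete the backward transitions to obtain a PWAA, run a greedy procedure on the set of active states, and use the stored word $u$ to beat the naive bound. The correctness of the greedy step is essentially fine (with the small repair that the right claim is not that the \emph{first} letter of a careful synchronizing word for $S$ is admissible --- it may fix all of $S$ --- but that \emph{some} letter defined on all of $S$ and advancing some state must exist whenever $|S|\ge 2$, since otherwise $S$ could never be reduced; and that $S.\sigma$ stays synchronizable because any word carefully synchronizing $Q$ also carefully synchronizes every subset). The genuine gap is in the step you lean on for the quadratic bound: the claim that ``once a state leaves $S$ it cannot return'' and hence that $\Sigma_{\mathrm{def}}$ only grows. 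Monotonicity of $\min\mathrm{rank}(S)$ does not give you this. Take $q_1<q_2<q_3$ with $a\colon q_1\mapsto q_1,\ q_2\mapsto q_3,\ q_3\mapsto q_3$ and $b\colon q_1\mapsto q_2,\ q_2\mapsto q_2,\ q_3\mapsto q_3$; from $\{q_1,q_2,q_3\}$ the letter $a$ expels $q_2$, and then $b$ brings $q_2$ back. A letter undefined on $q_2$ would thus enter and later leave $\Sigma_{\mathrm{def}}$, so under the plain greedy neither $S$ nor $\Sigma_{\mathrm{def}}$ is monotone, and you are stuck with a full $\mathcal{O}(kn)$ rescan at each of the $\Theta(n^2)$ iterations, i.e.\ cubic in $n$.

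The reapplication of $u$ is therefore not merely an implementation device for evaluating images quickly; it is what \emph{creates} the monotonicity. In the paper the word actually built is $u_{\mathrm{new}}=u_{\mathrm{old}}\sigma_1 u_{\mathrm{old}}\sigma_2 u_{\mathrm{old}}\cdots$ (each newly chosen letter is followed by a fresh copy of the whole current word), and one proves by induction the invariants that $Q.u$ is defined, $Q.uu=Q.u$, and $Q.u_{\mathrm{new}}\subseteq Q.u_{\mathrm{old}}$; only then is the active set genuinely shrinking as a set, so $\Sigma_{\mathrm{def}}$ grows monotonically, each letter is promoted into $\Sigma_{\mathrm{def}}$ once and for all, and the outer loop runs at most $k$ rounds of cost $\mathcal{O}(kn^2)$ each. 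Your amortized accounting also does not close: charging each of the $k$ letters $\mathcal{O}(kn)$ yields $\mathcal{O}(k^2n)$, not the claimed $\mathcal{O}(k^2n^2)$, which signals that the analysis has not actually been carried out. As written, your argument establishes a correct $\mathcal{O}(kn^3)$ algorithm but not the quadratic bound of the theorem.
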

\begin{proof}
	Let $A = (Q, \Sigma, \delta)$ be a DCA, and let $R \subseteq Q^2$ be a strict and total order on~$Q$.
	Figure~\ref{fig:algo} describes an algorithm that decides in time $\mathcal{O}(|\Sigma|^2|Q|^2)$ whether $A$ is synchronizable with respect to $R$ under the order $\ordp$ (including position $i=0$) on paths.
	Despite the simplicity of the algorithm its correctness is not trivial and is proven in the following lemmas.
\end{proof}
		\begin{figure}[h!]
	\noindent
	\framebox{\parbox{\dimexpr\linewidth-2\fboxsep-2\fboxrule}{
	\textbf{Step 1:} Order all states in $Q$ according to the order $R$. Since $R$ is strict and total the states can be ordered in an array $\{q_1,q_2,\dots,q_n\}$.\par
	\textbf{Step 2:} Delete in the automaton $A$ all transitions which are leading backwards in the state-ordering. If this produces a state with no outgoing arc, abort; return false. 
	\par
	\textbf{Step 3:} Let $q_n$ be the maximal state according to the order $R$. Delete all transitions in $A$ which are labeled with letters $\sigma \in \Sigma$ for which $q_n.\sigma$ is undefined.  If this produces a state with no outgoing transition, abort and return false.\par
	\textbf{Step 4:} Partition the alphabet $\Sigma$ into $\Sigma_\text{def}$, consisting of all letters $\sigma \in \Sigma$ for which $q.\sigma$ is defined for all states $q \in Q$, and $\Sigma_\text{par} := \Sigma \backslash \Sigma_\text{def}$. If $\Sigma_\text{def} = \emptyset$ abort; return~false. \par
	\textbf{Step 5:}
	Compute $\explore(Q, Q, \Sigma_\text{def}, \epsilon)$ which returns $Q_\text{act}$ and $u \in \Sigma_\text{def}^*$. 
	The returned set of active states will equate $Q_\text{trap} = \{ q \in Q \mid q.\Sigma_\text{def} = q\}$.
\par
	\textbf{Step 6:} 
	Set $\Sigma_\text{def} := \Sigma_\text{def} \cup \{\sigma \in \Sigma_\text{par} \mid q.\sigma \text{ is defined for all } q\in Q_\text{act} \}$.
	
	Compute $\explore(Q, Q_\text{act}, \Sigma_\text{def}, u)$ which returns $Q_\text{act}'$ and $u' \in \Sigma_\text{def}^*$. \par
	Set $Q_\text{act} := Q_\text{act}'$, $u := u'$, 
	$\Sigma_\text{par} := \Sigma \backslash \Sigma_\text{def}$.

	Repeat this step until $Q_\text{act}$ does not change anymore ($\equiv$ to $\Sigma_{\text{def}}$ does not change anymore). 
	
	Then, if $Q_\text{act} = \{q_n\}$ return true, otherwise return false.

	\textbf{Procedure $\explore$:} 
	Input: Ordered state set $Q$, set of active states $Q_\text{act}$, alphabet $\Sigma_\text{exp}$ to be explored, word $u$ with $Q.u = Q_\text{act}$.\par
	Initialize a new word $u' := u$.\par
	
	Go through the active states in order. For the current state $q$, test if any $\sigma \in \Sigma_\text{def}$ leads to a larger state, if so, perform the transition $\sigma u$ on all active states and update the set of active states $Q_\text{act}$. Concatenate $u'$ with $\sigma u$. 
	Continue with the next larger active state (not that this can be $q.\sigma u$). If $q_n$ is reached, return $u'$, and the current set of active states $Q_\text{act}$.
}}
\caption{Polynomial time algorithm 
	for {\sc Sync-Under-Total-$\mathit{0}$-$\ordp$} on the input\linebreak[3] $A=(Q, \Sigma, \delta)$, $R\subseteq Q^2$.}
\label{fig:algo}
	\end{figure}
%
%
%
\begin{lemma}
	\label{lem:O3totalRuntime}
	The algorithm in Figure~\ref{fig:algo} terminates on every input $A=(Q, \Sigma, \delta)$ with $m=|\Sigma|$, $n=|Q|$, strict and total order $R \subseteq |Q|^2$ in time $\mathcal{O}(m^2n^2)$.
\end{lemma}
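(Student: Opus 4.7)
The plan is to account for the work in three layers: (i) the one-time preprocessing in Steps~1--4, (ii) a single call to \explore, and (iii) the outer loop of Step~6. The preprocessing is clearly bounded by $\mathcal{O}(mn)$: sorting the states along $R$ costs $\mathcal{O}(n \log n)$, deleting backward transitions scans the whole table in $\mathcal{O}(mn)$, pruning the letters undefined at $q_n$ and partitioning $\Sigma$ into $\Sigma_\text{def}/\Sigma_\text{par}$ also take $\mathcal{O}(mn)$. So the non-trivial accounting lies in (ii) and (iii).

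For a single \explore call, I would carry $u$ not as an explicit word but as its induced map $M_u:Q\to Q$ stored in an array of length $n$, exactly as suggested after Figure~\ref{fig:algo}. Then appending a factor $\sigma u$ to $u'$ updates $M_{u'}$ via the composition $M_{u'}[q]\mapsto M_u[\delta(M_{u'}[q],\sigma)]$ in time $\mathcal{O}(n)$; similarly, applying $\sigma u$ to the current set of active states is $\mathcal{O}(n)$. The main loop of \explore walks through the active states in the $R$-order and, for each current state $q$, performs at most one forward step. Each forward step costs $\mathcal{O}(m)$ for scanning $\Sigma_\text{def}$ to find a letter moving $q$ up, plus $\mathcal{O}(n)$ to update the map and the active-state array. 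Since the walk visits every position of the ordered array at most once, one call to \explore executes at most $n$ forward steps and runs in time $\mathcal{O}(n(m+n)) = \mathcal{O}(mn + n^2)$.

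For the outer loop, the key observation is that each iteration of Step~6 either enlarges $\Sigma_\text{def}$ (and hence shrinks $\Sigma_\text{par}$) by at least one letter, or leaves $Q_\text{act}$ unchanged and triggers termination. The reason is that a letter $\sigma\in\Sigma_\text{par}$ is moved into $\Sigma_\text{def}$ precisely when it has become defined on all states of the current $Q_\text{act}$; if in some iteration no new letter qualifies, then $\Sigma_\text{def}$ is unchanged, \explore is deterministic in its effect on the map, so $Q_\text{act}$ cannot shrink further. Consequently the outer loop performs at most $|\Sigma_\text{par}|\le m$ iterations, each invoking \explore and doing an $\mathcal{O}(mn)$ scan to refresh $\Sigma_\text{def}$. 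Multiplying the per-iteration cost $\mathcal{O}(mn+n^2)$ by $\mathcal{O}(m)$ iterations, and adding the preprocessing, yields the overall bound $\mathcal{O}(m^2 n^2)$.

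The main obstacle I expect is giving a clean invariant-style argument for the two claims silently used above: that \explore visits each position in the $R$-order at most once (so its forward-step count is $O(n)$ rather than $O(n^2)$), and that $Q_\text{act}$ is monotone across iterations of Step~6 in the sense needed for the termination argument. Both follow from the fact that the stored word $u$ always satisfies $Q.u=Q_\text{act}$ and from the acyclic structure induced by $R$ after Step~2, but spelling out the bookkeeping of ``current active state'' after each in-place application of $\sigma u$ will be the most delicate part of the formal write-up.
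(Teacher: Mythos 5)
Your proof is correct and follows essentially the same decomposition as the paper's: $\mathcal{O}(mn)$ preprocessing, a bounded per-call cost for $\explore$, and an outer loop with at most $|\Sigma|$ iterations because $\Sigma_\text{def}$ only grows (equivalently, $Q_\text{act}$ only shrinks). Your per-call bound of $\mathcal{O}(mn+n^2)$ for $\explore$, obtained by maintaining $u$ as its induced map rather than as an explicit word, is in fact tighter than the paper's stated $\mathcal{O}(mn^2)$ and is precisely what makes the final $\mathcal{O}(m^2n^2)$ arithmetic close cleanly.
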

\begin{proof}
	Step~1 can be performed in time $\mathcal{O}(n \log n)$ using the Quicksort-algorithm.
	Step~2 to Step 5 take time $\mathcal{O}(mn)$ each. The procedure $\explore$ takes time $\mathcal{O}(mn^2)$. The number of iterations in Step 6 is bounded by $|\Sigma_\text{part}|$ as $\Sigma_{\text{def}}$ is applied exhaustively on $Q_\text{act}$ and by invariant (2) of Lemma~\ref{lem:O3Pyes} we have $Q'_\text{act} \subseteq Q_\text{act}$, This yields
	a total run-time of $\mathcal{O} (m^2 n^2)$.
\end{proof}
\begin{lemma}
	\label{lem:O3Pyes}
	If the algorithm in Figure~\ref{fig:algo} returns true on the input $A=(Q, \Sigma, \delta)$, strict and total order $R \subseteq |Q|^2$, then $A$, $R$ is a yes instance of {\sc Sync-Under-Total-$\mathit{0}$-$\ordp$}.
\end{lemma}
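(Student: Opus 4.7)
The plan is to prove soundness by maintaining two invariants on the pair $(u, Q_{\text{act}})$ throughout the algorithm and combining them when it returns true.

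\emph{Invariant A (tracking):} After every update in $\explore$ (and in between iterations of Step~6) the equality $Q.u = Q_{\text{act}}$ holds. This is immediate initially, where $u = \epsilon$ and $Q_{\text{act}} = Q$. It is preserved by each step inside $\explore$: when a letter $\sigma \in \Sigma_{\text{def}}$ is chosen, the procedure appends $\sigma u$ to the word and sets $Q_{\text{act}} := Q_{\text{act}}.\sigma u$. Using the previous invariant, $Q.(u \sigma u) = (Q.u).\sigma.u = Q_{\text{act}}.\sigma u$, which matches the new $Q_{\text{act}}$. A side computation also shows that the new $Q_{\text{act}}$ is a subset of the old one, so every letter currently in $\Sigma_{\text{def}}$ remains defined on the tracked active set throughout a single $\explore$ call.

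\emph{Invariant B (monotonicity):} Every transition along the path induced by $u$ starting from any $r \in Q$ is non-decreasing in the order~$R$. This follows from Step~2, which deletes precisely the arcs leading backwards in~$R$: letters chosen by $\explore$ come from $\Sigma_{\text{def}}$, so they are defined in the restricted transition function $\delta'$ on the currently tracked $Q_{\text{act}}$. By Invariant~A, the intermediate state along $u$ starting from any $r\in Q$ at the moment a given letter is applied lies in the then-current $Q_{\text{act}}$. Hence each individual transition is a legal $\delta'$-transition and therefore forward. Consequently the $\delta$- and $\delta'$-transitions agree along $u$, and $u$ is well-defined on all of $Q$ in the original automaton~$A$.

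Combining the two: if the algorithm returns true, then $Q_{\text{act}} = \{q_n\}$ at termination, so by Invariant~A we have $Q.u = \{q_n\}$, i.e., $u$ synchronizes $A$. By Invariant~B, for every $r \in Q$ the path $r, r.u[1], r.u[1..2], \dots$ is non-decreasing in the strict total order~$R$, so no state $p$ on this path can occur after any occurrence of a strictly larger state $q$. With the convention $\last(p, u, \{r\}) = -1$ when $p$ does not occur and $\first(q, u, \{r\}) = |u|+1$ when $q$ does not occur, this yields $\last(p, u, \{r\}) < \first(q, u, \{r\})$ for every $(p, q) \in R$ and every $r \in Q$; note in particular the case $r = q$, where $\first(q, u, \{q\}) = 0$ and $p$ cannot appear on a path starting in $q$ because $p < q$ in~$R$. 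Hence $R \subseteq \ordp[u]$ and $(A, R)$ is a yes-instance of {\sc Sync-Under-Total-$\mathit{0}$-$\ordp$}.

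The main obstacle will be handling the nested $u \sigma u$ structure produced by $\explore$: each append reuses the entire current word $u$, so the letters of~$u$ are applied on state sets that may be strictly smaller than~$Q$ but are still different from the $Q_{\text{act}}$ for which those letters were originally selected. Once Invariant~A guarantees that the image of $Q$ under any prefix of~$u$ lies inside the then-current $Q_{\text{act}}$, this subtlety disappears and the monotonicity conclusion is an immediate consequence of Step~2.
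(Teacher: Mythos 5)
Your proof is correct and takes essentially the same route as the paper: the paper likewise proves by induction on the nested $u_{\text{old}}\sigma_1 u_{\text{old}}\cdots\sigma_i u_{\text{old}}$ structure that $Q.u$ is defined in the forward-restricted automaton and coincides with the tracked $Q_{\text{act}}$, and then concludes that definedness of $Q.u$ forces every transition along $u$ to be forward, whence $R \subseteq \ordp[u]$ and $Q.u = \{q_n\}$. Your explicit unwinding of the $\first$/$\last$ conventions in the final step is slightly more detailed than the paper's, but the substance is identical.
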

\begin{proof}
	For the procedure $\explore$, the following invariant holds: Let $u_\text{old}$ be the word $u$ before the execution of $\explore$ and let $u_\text{new}$ be the one after the execution of $\explore$. Then, it holds for all executions of $\explore$ that (1) $Q.u_\text{new}$ is defined, (2) $Q.u_\text{new} \subseteq Q.u_\text{old}$, and (3) $Q.u_\text{new}u_\text{new} = Q.u_\text{new}$.
	We prove the invariant by induction. 
	First, note that the word $u$ computed by $\explore$ in Step 5 is defined on all states in $Q$ since it only consists of letters which are defined on all states. Since we go through the states in order during the execution of $\explore$ and we only proceed with the next larger state if (1) we where able to leave the current one towards a larger state or if (2) the current state cannot be left with any of the letters in $\Sigma_\text{def}$, it holds that $Q.uu = Q.u$. Also, trivially $Q.u \subseteq Q$.
	
	Next, consider some later execution of $\explore$.
	The new word computed by $\explore$ is of the form $u_\text{new} :=  u_\text{old} \sigma_1 u_\text{old} \sigma_2 u_\text{old} \dots \sigma_i u_\text{old}$ for some $0 \leq i \leq |Q|$. The induction hypothesis tells us that (1) $Q.u_\text{old}$ is defined. Since $Q.u_\text{old}\sigma_1$ is defined (since $\sigma_1 \in \Sigma_\text{def}$) and $Q.u_\text{old}\sigma_1 \subseteq Q$ it holds that $Q.u_\text{old}\sigma_1u_\text{old}$ is defined. Further, since $u_\text{old}$ brings all states to the set $Q.u_\text{old}$ it also brings a subset of $Q$ to a subset of $Q.u_\text{old}$. Using the induction hypothesis (3) we get by an induction on $i$ that $Q.u_\text{new}$ is defined and $Q.u_\text{new} \subseteq Q.u_\text{old}$. Since in the execution of $\explore$ we only proceed with the next larger state if we exhaustively checked all possible transitions for the current state and since $Q.u_\text{old}u_\text{old} = Q.u_\text{old}$ it follows that $Q.u_\text{new}u_\text{new} = Q.u_\text{new}$.
	
	If the algorithm in the proof of Theorem~\ref{thm:O3totalP} terminates and returns yes, it also returns a synchronizing word $u$. By the invariant proven above, we know that $Q.u$ is defined. This means that $u$ never causes a transition of a larger state to a smaller state and hence $\ordp[u]$ agrees with $R$. 
	During the execution of the algorithm we track the set of active states~$Q_\text{act}$ (starting with~$Q$) and only return true if $Q_\text{act}$ contains only the in $R$ largest state~$q_n$. Since $R$ is a total order, every $q \in Q$ is smaller than $q_n$ and hence $q_n$ cannot be left. Therefore, $q_n$ needs to be the single synchronizing state of $A$ and $u$ is a synchronizing word for~$A$.
%
%
\end{proof}
\begin{lemma}
	\label{lem:O3Pno}
	If the algorithm in Figure~\ref{fig:algo} returns false on the input $A=(Q, \Sigma, \delta)$ and a strict and total order $R \subseteq |Q|^2$, then $A$ is not synchronizable under the order~$\ordp[w]$ with respect to the input order $R$ .
\end{lemma}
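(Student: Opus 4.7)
The plan is to prove the statement by contraposition: assuming that $A$ admits a synchronizing word $w$ with $R \subseteq \ordp[w]$, I will show that the algorithm cannot return false. Since the algorithm can abort at Steps 2, 3, or 4, or terminate in Step 6 with $Q_\text{act} \neq \{q_n\}$, I would treat each case separately. A preliminary observation driving every case is that, because $R$ is total and strict and is respected by $w$, every letter occurring in $w$ must induce only forward transitions (or self-loops) on the currently active states, and in particular must satisfy $q_n.\sigma = q_n$, since $q_n$ is the unique possible synchronizing state.

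Failures in Steps 2, 3, and 4 are quickly ruled out using this observation. If Step 2 aborts, some state $q$ has only backward outgoing transitions and thus cannot be moved by any letter compatible with $R$; if $q \neq q_n$ this contradicts that $w$ eventually brings $q$ to $q_n$, while if $q = q_n$ it contradicts that every letter of $w$ must fix $q_n$. Failures in Steps 3 and 4 correspond analogously to the non-existence of any $R$-respecting letter that could serve as $w[1]$ or as a later letter applied to some currently active state, again contradicting the existence of $w$.

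The main case, and the principal obstacle, is Step 6 stabilizing with $Q_\text{act} \neq \{q_n\}$. Here I would first extract from the loop's termination the following \emph{stability property}: every letter $\sigma$ which is defined on all of $Q_\text{act}$ acts as the identity on $Q_\text{act}$. Indeed, after the last entrance into Step 6, the set $\Sigma_\text{def}$ is extended to contain every such $\sigma$, and the subsequent call to $\explore$ failed to move any state of $Q_\text{act}$ forward; combined with the deletion of backward arcs in Step 2, this forces each such $\sigma$ to be a self-loop on every $q \in Q_\text{act}$.

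I would then complete the argument by an induction on $i$ showing $Q.w[1..i] \supseteq Q_\text{act}$ for every prefix of the hypothetical synchronizing word $w$. The base case $i=0$ is trivial since $Q_\text{act} \subseteq Q$, and in the inductive step the letter $w[i+1]$ is by assumption defined on $Q.w[1..i] \supseteq Q_\text{act}$, so by the stability property it fixes $Q_\text{act}$ pointwise. Using that $q_n = q_n.u \in Q.u = Q_\text{act}$ and the assumption $Q_\text{act} \neq \{q_n\}$ we obtain $|Q_\text{act}| \geq 2$, contradicting $Q.w = \{q_n\} \supseteq Q_\text{act}$. The delicate point is justifying the stability property rigorously, namely that $\Sigma_\text{def}$ truly contains \emph{every} letter defined on $Q_\text{act}$ at termination; for this I would appeal to invariant (1) of Lemma~\ref{lem:O3Pyes}, which guarantees that the current $u$ is defined on all of $Q$, so that prepending $u$ to any candidate letter yields a word applicable from $Q$ and thus legitimately considered by $\explore$.
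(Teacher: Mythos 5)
Your proposal is correct and follows essentially the same route as the paper's proof: it rules out the abort cases in Steps 2--4 by noting that backward transitions and letters not fixing $q_n$ can never occur in an $R$-respecting word, and handles the stabilized Step 6 case by showing that every letter defined on all of $Q_\text{act}$ must act as the identity there, so the fully active set $Q_\text{act}$ can never shrink. Your explicit induction on prefixes of $w$ merely makes rigorous what the paper states informally ("this set cannot be left when all states of the set are active simultaneously"), so no substantive difference remains.
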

\begin{proof}
	The algorithm returns false in the following cases.\par
	(1) All outgoing transitions of some state $q$ are deleted in Step 2. In that case, every transition of $q$ leads to a smaller state. As this would violate the order $R$, we cannot perform any of those transitions. Hence, $q$ cannot be left. (The case that $q = q_n$ is treated in (2).)\par
	(2) Since $q_n$ is the largest state, it cannot be left. Hence, $q_n$ will be active the whole time. Therefore, any transition which is not defined for $q_n$ cannot be taken at all since $q_n$ is active during the whole synchronizing process. Hence, we can delete these transitions globally. If this creates a state which cannot be left anymore, this state cannot be synchronized. \par
	(3) The execution of $\explore$ returns two identical sets of active states $Q_\text{act}$ in a row. Let $\Sigma_\text{def}$ be the explored alphabet of the last execution of $\explore$. Then, $\Sigma_\text{def}$ contains all letters $\sigma$ from $\Sigma$ for which $q.\sigma$ is defined on all states $q \in Q_\text{act}$ and none of them leads some state in $Q_\text{act}$ to a larger state. Since the relation $R$ forbids cycles, for all $\sigma \in \Sigma_\text{def}$ and all $q \in Q_\text{act}$ $q.\sigma = q$ and hence this set cannot be left when all states of the set are active simultaneously. 
	Since all states are active at the beginning of the algorithm, also all states in $Q_\text{act}$ are active and since this set cannot be left with any transition which does not cause an undefined transition for all states in the set, the state set cannot be synchronized at all.
\end{proof}
%
\begin{corollary}
	The careful synchronization problem for PWAA is in \PTIME.
\end{corollary}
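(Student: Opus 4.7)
The plan is to obtain this corollary as an immediate consequence of the two preceding results. First, I would invoke Theorem~\ref{O3TotalEquivPWAA}, which establishes the polynomial-time many-one equivalence between {\sc Careful Sync} of PWAAs and {\sc Sync-Under-Total-$\mathit{0}$-$\ordp$}. Concretely, given a PWAA $A = (Q, \Sigma, \delta)$, one topologically sorts the states (which takes linear time since the automaton is weakly acyclic and all cycles are self-loops), prepends a new minimal state $q_<$, completes all undefined transitions by redirecting them to $q_<$, and defines $R$ to be the resulting strict total order together with the pairs $\{(q_<, q)\mid q \in Q\}$. This construction can clearly be carried out in time polynomial in $|Q|$ and $|\Sigma|$.

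Next, I would apply Theorem~\ref{thm:O3totalP} to the resulting instance of {\sc Sync-Under-Total-$\mathit{0}$-$\ordp$}, which is solvable in time $\mathcal{O}(|\Sigma|^2 |Q|^2)$ by the algorithm in Figure~\ref{fig:algo}. Composing the polynomial-time reduction with the polynomial-time decision procedure yields a polynomial-time algorithm for {\sc Careful Sync} of PWAAs, which is exactly what the corollary claims.

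There is essentially no obstacle here beyond bookkeeping: the equivalence of Theorem~\ref{O3TotalEquivPWAA} is a one-to-one correspondence between carefully synchronizing words for $A$ and synchronizing words for the constructed DCA respecting $R$, so correctness transfers directly. The only minor remark worth making explicit is that the state set grows only by the single state $q_<$, so the resulting runtime remains quadratic in $|Q|$ (as already observed in the length-bound discussion preceding Theorem~\ref{O3TotalEquivPWAA}). Hence the overall algorithm for carefully synchronizing a PWAA runs in time $\mathcal{O}(|\Sigma|^2 |Q|^2)$, matching the bound advertised in the abstract.
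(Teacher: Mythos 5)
Your proposal is correct and is exactly the derivation the paper intends: compose the reduction of Theorem~\ref{O3TotalEquivPWAA} (adding $q_<$, completing undefined transitions into it, and taking $R$ to be the induced total order) with the quadratic-time algorithm of Theorem~\ref{thm:O3totalP}. The runtime observation matches the $\mathcal{O}(k^2 n^2)$ bound advertised in the abstract.
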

\index{\textsc{Careful Sync} of PWAAs}
If we allow one unrestricted transition first ({\sc Sync-Under-Total-$\mathit{1}$-$\ordp$}) the problem is related to the subset synchronization problem of complete WAAs which is \NP-complete~\cite{DBLP:journals/tcs/Ryzhikov19a}. Together with the quadratic length bound of a synchronizing word of  {\sc Sync-Under-Total-$\mathit{1}$-$\ordp$} (which implies membership of {\sc Sync-Under-Total-$\mathit{1}$-$\ordp$} in \NP), we get:
\begin{theorem}
	\label{thm:to1NPc}
	The problem {\sc Sync-Under-Total-$\mathit{1}$-$\ordp$} is \NP-complete.
\end{theorem}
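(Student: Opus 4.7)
The plan is to establish the two directions separately.

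For \NP-membership, I invoke the length bound from Theorem~\ref{thm:O3totalNP}: any synchronizing witness can be taken to have length at most $\frac{|Q|(|Q|-1)}{2}+1$. A nondeterministic algorithm guesses such a word $w$, simulates all $|Q|$ paths in parallel in polynomial time, and verifies both $|Q.w|=1$ and, for every $(p,q)\in R$ and every $r\in Q$, that $\last(p,w,\{r\})<\first(q,w,\{r\})$ with position~$0$ ignored.

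For \NP-hardness I reduce from {\sc Subset Sync} on complete weakly acyclic automata, which is \NP-complete by~\cite{DBLP:journals/tcs/Ryzhikov19a}. Given an instance $(A=(Q,\Sigma,\delta),S)$ with WAA-order $q_1<\cdots<q_n$, I first adjoin two fresh $\Sigma$-sinks $t_1,t_2$ placed at the top of the order; this renders $A$ globally non-synchronizable while leaving the subset-sync answer for $S$ unchanged. I then build $A'=(Q',\Sigma',\delta')$ with $Q'=Q\cup\{p,t_1,t_2\}$, $\Sigma'=\Sigma\cup\{c\}$, and total order $R\colon p<q_1<\cdots<q_n<t_1<t_2$. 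The fresh letter $c$ acts as the identity on $S$, sends every state of $(Q\setminus S)\cup\{t_1,t_2\}$ down to $p$, and sends $p$ up to a fixed $s^\star\in S$; on the original alphabet $p$ mimics $s^\star$, that is $\delta'(p,\tau)=\delta(s^\star,\tau)$.

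The $(\Leftarrow)$ direction is direct: if $w\in\Sigma^\ast$ satisfies $|S.w|=1$, then $cw$ synchronizes $A'$, because after $c$ the active set is $S\cup\{p\}$, the mimicry of $s^\star$ yields $Q'.cw=S.w$, and all transitions applied after position~$1$ lie in $\Sigma$ and hence obey the WAA-order~$R$. The main obstacle is the $(\Rightarrow)$ direction, where one must force the first letter of any sync word to be $c$. This is where the sinks $t_1,t_2$ become essential: they cannot be moved by $\Sigma$-letters, so $A'$ is not synchronizable using $\Sigma^\ast$ alone, and any later application of $c$ to a configuration still containing a $t_i$ (or any state of $Q\setminus S$) would map it down to $p$ and strictly decrease the order, violating $R$ from position~$1$ onward. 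Hence $c$ must occur first, after which the problem collapses to finding $w\in\Sigma^\ast$ with $|S.w|=1$, completing the reduction.
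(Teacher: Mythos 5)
Your proof is correct and follows essentially the same route as the paper: \NP-membership via the quadratic length bound of Theorem~\ref{thm:O3totalNP}, and \NP-hardness by reduction from subset synchronization of complete weakly acyclic automata using a fresh letter $c$ that activates $S$ as the first letter, a bottom state that mimics a fixed $s^\star\in S$ on $\Sigma$, and an unreachable $\Sigma$-sink at the top of the order to force $c$ to come first. The only differences are cosmetic: where the paper re-populates $S$ after reading $c$ via $\Sigma$-fixed copy states $\hat{q}$ (one per $q\in S$), you let $c$ act as the identity on $S$ directly, and you use two top sinks where the paper's single sink $q_>$ (unreachable, hence never the synchronizing state) already suffices.
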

\index{reduction from WAA \textsc{Sync-From-Subset}}
\begin{proof}
			\label{prf:to1NPc}
			We reduce from the \NP-complete problem: Given a complete weakly acyclic automaton $A=(Q, \Sigma, \delta)$ and a subset $S \subseteq Q$, does there exist word $w \in \Sigma^*$ such that $|S.w|=1$. 
		We construct from $A$ an automaton $A'=(Q', \Sigma \cup \{c\}, \delta')$ with $c \notin \Sigma$ in the following way. 
			A schematic illustration of the construction is depicted in Figure~\ref{fig:syncsubset}.
		We start with $Q' = Q$. W.l.o.g., assume $|S| \geq 2$. For each state $q \in S$, we add a copy $\hat{q}$ to $Q'$. Further, we add the states $q_<$ and $q_>$. 
		Let $q_1, q_2, \dots, q_n$ be an ordering of the states in $Q$ such that $\delta$ follows this ordering.
		The transition function $\delta'$ agrees with $\delta$ on all states in $Q$ and letters in~$\Sigma$. For a copied state $\hat{q}$, we set $\delta'(\hat{q}, \sigma) = \hat{q}$ for all $\sigma \in \Sigma$ and $\delta'(\hat{q}, c) = q$. For every state $q \in Q$, we set $\delta'(q, c) = q_<$.
		Let $q_s$ be some state in~$S$. Then 
		for all $\sigma \in \Sigma$ we set $\delta'(q_<, \sigma) = \delta(q_s, \sigma)$, $\delta'(q_<, c) = q_s$ and $\delta'(q_>, \sigma) = q_>$, $\delta'(q_>, c) = q_s$.
		Then, we set $R = \{(q_i, q_j)\mid i < j \}$ for all states in~$Q$. Further, for every copied state $\hat{q_k}$ we extend $R$ by the sets: $\{(\hat{q_k}, q_k)\}$, $\{(q_i, \hat{q_k}), (\hat{q_k}, q_j) \mid i < k, k < j\}$, and $\{(\hat{q_i}, \hat{q_k}), (\hat{q_k}, \hat{q_j})\mid i < k < j\}$ for all copied states $\hat{q_i}, \hat{q_j}$. For the states $q_<, q_>$, we add $\{(q_<, q) \mid q\neq q_< \in Q'\}$ and $\{(q, q_>) \mid q\neq q_> \in Q'\}$ to $R$.
		
	Assume, $w\in \Sigma^*$ synchronizes the set $S$ in $A$. W.l.o.g., assume $w \neq \epsilon$. Then, $cw$ synchronizes the automaton $A'$ such that $R \subseteq\, \ordp$ (where position $i=0$ is not taken into account). We have in $A'$ that $Q'.c = S \cup \{q_<\}$. Since the initial configuration is not taken into account all transitions are allowed as the first letter of a synchronizing word and hence $R \subseteq\, \ordp[c]$.
	From now on, no transition which leads backwards in the order is allowed. We constructed $R$ such that for the states in $Q$ all transitions inherited from~$\delta$ are valid at any time. 
	Since $w \in \Sigma^*$, all transitions induced by $w$ are valid for states in $Q$ in $A'$. The only active state outside of $Q$ is $q_<$ which mimics transitions of the active state $q_s$ with the next letter $w[1] \in \Sigma$ and hence $q_s$ and $q_<$ are synchronized in the next step. Note that with any word from $\Sigma^*$ no state outside of $Q$ is reachable from a state in $Q$. Hence, $Q'.cw = S.w$.
	
	For the other direction, assume $w \in \Sigma'^*$ synchronizes $A'$ and $R \subseteq\, \ordp$ (where position $i=0$ is not taken into account). Then, the first letter of $w$ needs to be the letter $c$. Otherwise, the state $q_>$ stays in $q_>$. This state cannot be left later anymore since after the first transition the pair $(q_s, q_>)$ in $R$ is active and forbids a transition out of $q_>$. As the state $q_>$ cannot be reached from any other state we caused an active non-synchronizing trap-state. 
	
	For the letter $c$, we have $Q'.c = S \cup \{q_<\}$. 
	After the first transition for all active states in~$Q$, the transition by letter $c$ is not allowed anymore as it would yield the states to reach the state $q_<$ which is smaller than any state in $Q$. 
	For all letters $\sigma \neq c$, the state $q_<$ simulates the active state $q_s$ and hence synchronizes with it with the letter $w[2]$. Starting from $Q$ we stay in $Q$ with all $\sigma \in \Sigma$ and simulate the automaton $A$.
%
	Hence, any word that synchronizes the set $S \cup \{q_<\}$ in $A'$ also synchronizes the set $S$ in $A$.
	\end{proof}
\begin{figure}	
	\centering
	\scalebox{0.8}{
	\begin{tikzpicture}[->,>=stealth',shorten >=1pt,auto,node distance=2cm,semithick,state/.style={circle, draw, minimum size=.9cm}]

\node[state] (q<) {$q_<$};
\node[state] (q1) [right of=q<] {$q_1$};
\node[state] (q2) [right of=q1] {$q_2$};
\node[state] (q3) [right of=q2] {$q_3$};
\node[state] (q4) [right of=q3] {$q_4$};
\node[] (dot) [right of=q4] {$\cdots$};
\node[state] (qn) [right of=dot] {$q_n$};
\node[state] (q>) [right of=qn] {$q_>$};

\node[state] (q2s) at (4,1.5) {$\hat{q_2}$};
\node[state] (q4s) at (8,1.5) {$\hat{q_4}$};

\path 
(q1) edge node {$c$} (q<)
(q2) edge [bend left] node [pos=0.1, below right] {$c$} (q<)
(q3) edge [bend left] node [pos=0.08, below right] {$c$} (q<)
(q4) edge [bend left] node [pos=0.06, below right] {$c$} (q<)
(qn) edge [bend left] node [pos=0.03, below right] {$c$} (q<)
(q>) edge [bend right=50] node {$c$} (q2)
(q<) edge [bend left=90] node [] {$\Sigma$} (q3)
(q<) edge [bend left=30] node [] {$c$} (q2)
(q>) edge [loop below] node {$\Sigma$} (q>)
(q2s) edge [loop left] node {$\Sigma$} (q2s)
(q4s) edge [loop right] node {$\Sigma$} (q4s)
(q2s) edge node {$c$} (q2)
(q4s) edge node {$c$} (q4)
(q2) edge node [above] {$\Sigma$} (q3)
;
\end{tikzpicture}}
	\caption{Schematic illustration of the reduction from the subset synchronization problem for complete weakly acyclic automata (see Theorem~\ref{thm:to1NPc}). In this example, the subset $S$ contains the states $q_2$ and $q_4$, we picked $q_s = q_2$. Transitions inherited from the original automaton $A$ are not drawn except for the transitions from $q_2$, for illustration they were assumed to lead to $q_3$.}
	\label{fig:syncsubset}
\end{figure}
\begin{theorem}
	\label{thm:subset12}
	{\sc Subset-Sync-Under-Total-$\ordp$} is \NP-complete.
\end{theorem}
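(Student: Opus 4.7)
The plan is to establish \NP-membership first and then prove \NP-hardness separately for the two sub-variants of the problem. For membership, the quadratic length bound asserted right after Theorem~\ref{thm:O3totalNP} for {\sc Subset-Sync-Under-Total-$\ordp$} follows from the same token-counting argument used there: once $R$ is viewed as a linear order on $Q$, each active token can move at most $|Q|-1$ times, so a shortest synchronizing word has length at most $\tfrac{|Q|(|Q|-1)}{2}+1$. A polynomial-time verifier then checks both $|S.w|=1$ and $R\subseteq\,\ordp[w]$, putting the problem in \NP.

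For \NP-hardness of {\sc Subset-Sync-Under-Total-$\mathit{1}$-$\ordp$}, the reduction is immediate: any instance $(A,R)$ of {\sc Sync-Under-Total-$\mathit{1}$-$\ordp$} becomes an equivalent instance $(A,R,S)$ of the subset version by choosing $S:=Q$. Hardness then follows from Theorem~\ref{thm:to1NPc}.

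For \NP-hardness of {\sc Subset-Sync-Under-Total-$\mathit{0}$-$\ordp$}, I would reduce from subset synchronization of \emph{complete} weakly acyclic automata, which is \NP-complete by~\cite{DBLP:journals/tcs/Ryzhikov19a}. Given a complete WAA $A=(Q,\Sigma,\delta)$ with subset $S\subseteq Q$, let $q_1<q_2<\dots<q_n$ be the WAA ordering and set $R:=\{(q_i,q_j)\mid i<j\}$, which is a strict total order. The output instance is $(A,R,S)$. For the forward direction, if $w$ synchronizes $S$ in $A$, then for every pair $(q_i,q_j)\in R$ and every starting state $r\in Q$ I would verify $\last(q_i,w,\{r\})<\first(q_j,w,\{r\})$ by case distinction on $r$: if $r>q_i$ in the order then $q_i$ is unreachable and $\last=-1$; if $r<q_i$ and the path visits $q_i$ at all, WAA-monotonicity forces it to visit $q_i$ before any visit of $q_j$, and to never return to $q_i$ afterwards; and if $q_j$ is not visited at all then $\first=|w|+1$. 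The backward direction is trivial since $R$ only restricts, never enlarges, the set of witnesses, and every word in $\Sigma^*$ automatically respects monotonicity in $A$.

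The main obstacle is the borderline case $r=q_i$ in variant~$\mathit{0}$, where position~$0$ counts and $q_i$ is initially active: one must show $\last(q_i,w,\{q_i\})<\first(q_j,w,\{q_i\})$. If $q_j$ is never reached from $q_i$ this is trivial; otherwise, monotonicity guarantees that the last step at $q_i$ precedes the first transition that actually leaves $q_i$, and from the successor state $q_i$ is no longer reachable (since $j>i$ means $q_j\neq q_i$ and re-entry into $q_i$ is forbidden), so the last visit to $q_i$ strictly precedes the first visit to $q_j$. Once this invariant is in place, the remainder of the argument is bookkeeping.
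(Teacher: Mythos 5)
Your proof is correct and takes essentially the same route as the paper: \NP-membership via the quadratic length bound of Theorem~\ref{thm:O3totalNP} plus guess-and-verify, hardness of the $\mathit{1}$-variant by setting $S:=Q$ in Theorem~\ref{thm:to1NPc}, and hardness of the $\mathit{0}$-variant by transferring the \NP-hardness of subset synchronization of complete weakly acyclic automata. Your explicit reduction for the $\mathit{0}$-variant (checking that the WAA ordering makes $R\subseteq\,\ordp[w]$ hold automatically on every path, including the borderline case $r=q_i$) merely spells out what the paper compresses into its appeal to Theorem~\ref{O3TotalEquivPWAA}.
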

%
\begin{proof}
	By Theorem~\ref{O3TotalEquivPWAA} we know that the automata -- which are synchronizable under the constraint formulated in {\sc Sync-Under-Total-$\mathit{0}$-$\ordp$} -- are precisely the synchronizable weakly acyclic automata. Since every complete weakly acyclic automaton (CWAA) is also a PWAA the \NP-hardness of the subset synchronization problem for CWAAs transfers to the problem {\sc Subset-Sync-Under-Total-$\mathit{0}$-$\ordp$} in our setting. Since the problem {\sc Sync-Under-Total-$\mathit{1}$-$\ordp$} is already \NP-hard it follows by setting $S := Q$ that {\sc Subset-Sync-Under-Total-$\mathit{1}$-$\ordp$} is also \NP-hard.
	The length bound obtained in Theorem~\ref{thm:O3totalNP} also holds for a shortest word $w$ synchronizing a subset $S$ with $w\in \ordp$ if $R$ is a strict and total order.
	This gives membership in \NP\ as we can guess the synchronizing word.
\end{proof}
\begin{corollary}
	Let $A = (Q, \Sigma, \delta)$ with $n= |Q|$, $S\subseteq Q$, and $R \subseteq Q^2$ be a strict and total order on $Q$.
	If $S$ is synchronizable in $A$ by a shortest word $w$ such that $R \subseteq\, \ordp$, then $|w|\leq \frac{n(n-1)}{2} +1$ for {\sc Subset-Sync-Under-Total-$\ordp$}.
\end{corollary}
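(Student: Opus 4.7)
The plan is to replay the token-counting argument used in the proof of Theorem~\ref{thm:O3totalNP}, starting with tokens placed only on the states in $S$ rather than on all of $Q$. Since $R$ is strict and total, we can arrange the states linearly as $q_1,q_2,\dots,q_n$ with $(q_i,q_j)\in R$ iff $i<j$, and the constraint $R \subseteq\, \ordp$ forces every path to move monotonically forward in this order (after the first letter, in the $\mathit{1}$-variant).

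First, I would argue that in a shortest witness $w$, every letter -- except possibly the first in the $\mathit{1}$-variant -- must advance at least one token on some path: otherwise the letter acts as the identity on every currently active state on every path starting in $S$, so it can be deleted without changing $S.w$ or breaking the order condition, contradicting minimality of $|w|$.

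Second, since under $R\subseteq\,\ordp$ a token can only move to strictly larger states in the total order, a token initially placed on $q_i$ can be advanced at most $n-i$ times before being pinned at $q_n$. Summing over the tokens on the (at most $n$) states of $S$, the total number of advancing steps is bounded by $\sum_{i=1}^{n}(n-i) = \frac{n(n-1)}{2}$. Combined with the possible unrestricted first letter of the $\mathit{1}$-variant, this gives $|w|\leq \frac{n(n-1)}{2}+1$ in both variants, as required.

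I do not anticipate a real obstacle: the argument is essentially identical to Theorem~\ref{thm:O3totalNP}. The only nuance is that having tokens only on $S$ rather than on all of $Q$ still leaves the worst-case sum of per-token advances unchanged, so no separate case analysis on $|S|$ is needed.
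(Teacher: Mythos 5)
Your overall strategy---replaying the token count of Theorem~\ref{thm:O3totalNP}---is exactly what the paper does (it simply remarks that the bound of Theorem~\ref{thm:O3totalNP} carries over to the subset case), and your final bound is correct. However, the step where you justify that every non-initial letter of a shortest witness advances a token has a gap, caused by your decision to place tokens only on the states of $S$. The condition $R \subseteq\, \ordp$ is, by definition, a condition on the path starting at \emph{every} $r \in Q$, not only at states of $S$. Consider a letter at position $j$ of $w$ that fixes every state active on the paths starting in $S$ but moves some state $r.w[1..j-1]$ with $r \notin S$. Deleting that letter indeed preserves $S.w$, but it reroutes the path from $r$: the shortened word continues from $r.w[1..j-1]$ with the letter $w[j+1]$, a transition that was never taken along $w$ and may lead to a smaller state in the order, violating $R \subseteq\, \ordp$ for the shortened word. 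So minimality of $w$ does not allow you to delete such a letter, and your claim that every letter advances a token \emph{on an $S$-path} is unsupported.

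The repair is immediate and recovers the paper's argument: place a token on every state of $Q$ (as in Theorem~\ref{thm:O3totalNP}), not just on $S$. A letter of a shortest witness that moves \emph{no} token at all is the identity on every active state of every path from every $r\in Q$; deleting it then changes neither $S.w$ nor the state sequence of any path, so such a letter cannot occur (apart from the unrestricted first letter in the $\mathit{1}$-variant). Since the total order forces every path to be non-decreasing, the token starting at the $i$-th smallest state moves at most $n-i$ times, and the total count is again $\sum_{i=1}^{n}(n-i)=\frac{n(n-1)}{2}$, plus one for the first letter. The bound is unchanged; only the set of tokens you count over must be all of $Q$.
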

\begin{theorem}
	\label{thm:subset3}
	The following subset synchronization problems are \PSPACE-complete for both -$\mathit{0}$- and -$\mathit{1}$-: {\sc Subset-Sync-Under-$\ore$}, -$\orep$, -$\orz$, -$\orzp$, -$\ordp$.
\end{theorem}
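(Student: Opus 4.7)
The plan is to handle all ten variants uniformly. PSPACE-membership follows by a minor extension of the annotated powerset construction used in Theorem~\ref{thm:inPSPACE}; PSPACE-hardness reduces immediately from the classical subset synchronization problem for DCAs, shown \PSPACE-complete in~\cite{DBLP:conf/dagstuhl/Sandberg04}.

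For membership, for each order $\lessdot_w$ in the list I would take the annotated powerset automaton $\mathcal{P}(A)$ constructed in the proof of Theorem~\ref{thm:inPSPACE} and apply two purely cosmetic modifications. First, I add to every state one additional coordinate tracking the current image of $S$ (at most $|Q|$ extra bits per state, so the state encoding still fits into $\mathcal{O}(|Q|^2)$ bits). Second, I redeclare a state accepting as soon as its $S$-coordinate is a singleton, rather than its $Q$-coordinate; the initial state, the transition function, and the seeding of the active-pair annotations are taken verbatim from the case of Theorem~\ref{thm:inPSPACE} corresponding to $\lessdot_w$ and to whichever of -$\mathit{0}$- or -$\mathit{1}$- is under consideration. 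Nonemptiness of the resulting language is then decidable in \NPSPACE, hence in \PSPACE\ by Savitch's theorem.

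For hardness, given any instance $(A, S)$ of the standard subset synchronization problem, I simply output the instance $(A, S, R = \emptyset)$. Since the empty relation is trivially contained in $\lessdot_w$ for every word $w$, every order $\lessdot_w \in \{\ore, \orep, \orz, \orzp, \ordp\}$, and both the -$\mathit{0}$- and -$\mathit{1}$- variants, the constructed instance is positive iff the original one admits a word $w$ with $|S.w| = 1$. A single polynomial-time reduction thus establishes \PSPACE-hardness for all ten problems at once.

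I do not foresee any genuine obstacle. The hardness direction is essentially syntactic: subset synchronization under any of these orders strictly subsumes plain subset synchronization, so no gadget is required. The membership direction only adds a subset-tracking coordinate to a construction already analysed in detail for the full synchronization variant, so only bookkeeping is new.
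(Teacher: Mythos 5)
Your proof is correct and follows essentially the same route as the paper: \PSPACE-hardness by setting $R=\emptyset$ and reducing from the classical subset synchronization problem, and \PSPACE-membership by adapting the annotated powerset construction of Theorem~\ref{thm:inPSPACE}. The one divergence is in how that construction is adapted: the paper changes the start state so that exactly $S$ is initially active, whereas you keep the start configuration $Q$ and add a coordinate tracking $S.w$, accepting once it becomes a singleton --- your version is faithful to the literal definitions (where $\lessdot_w$ is always evaluated on runs from $Q$), while the paper's one-line adaptation implicitly evaluates the order on the run starting from $S$; both variants are implementable in polynomial space, so the \PSPACE-completeness claim is unaffected either way.
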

\begin{proof}
		For the mentioned orders, the subset synchronization problem is trivially \PSPACE-hard which can be observed by setting $R=\emptyset$. In order to show membership in \PSPACE, the start states in the considered powerset-construction in Theorem~\ref{thm:inPSPACE} can be adapted to check reachability from the start configuration where exactly the subset $S$ is active to some final state using polynomial~space.~
	\end{proof}
Several other results can be transferred from~\cite{DBLP:journals/tcs/Ryzhikov19a} to the corresponding version of the {\sc Sync-Under-Total-$\mathit{0}$-$\ordp$} problem, such as inapproximability of the problems of finding a shortest synchronizing word; a synchronizing set of maximal size (here also \We-hardness can be observed); or determining the rank of a given set. 
%
Further, by the observation (in~\cite{DBLP:journals/tcs/Ryzhikov19a}) that, in the construction given in~\cite{Rys80,DBLP:journals/siamcomp/Eppstein90} the automata are WAAs, we immediately get \NP-hardness for finding a shortest synchronizing word for all of our orders (for order $l<l$ and $l\leq l$ set $R = \emptyset$).
\begin{corollary}
	For all considered orders $\lessdot_w$, the problem given a DCA $A=(Q, \Sigma, \delta)$, $k\in \mathbb{N}$, $R\in Q^2$, if there exist a synchronizing word $w\in \Sigma^*$ with $|w|\leq k$ and $R\subseteq \lessdot_w$ is \NP-hard.
\end{corollary}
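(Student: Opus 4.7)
The plan is to reduce from the classical NP-hard problem of deciding whether a DCA admits a synchronizing word of length at most $k$~\cite{Rys80,DBLP:journals/siamcomp/Eppstein90}. The key observation, attributed in~\cite{DBLP:journals/tcs/Ryzhikov19a}, is that the automata produced by the standard reductions for this problem are already weakly acyclic. The WAA structure is what lets us choose $R$ so that the containment $R \subseteq \lessdot_w$ is either vacuous or automatically witnessed by the ordering on states; in both cases the original hardness transfers to our setting.

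For the orders $\ore$, $\orz$, $\orep$, and $\orzp$ (both $\mathit{0}$- and $\mathit{1}$- variants, on sets and on paths), I would simply take $R := \emptyset$. Then $R \subseteq \lessdot_w$ holds trivially for every $w$, so an instance $(A,R,k)$ of our length-bounded problem is a yes-instance if and only if $A$ admits a synchronizing word of length at most $k$. NP-hardness follows verbatim from the classical result.

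For the order $\ordp$, a vacuous $R$ still works in the unrestricted case, but for the total-order variant we need $R$ to be a strict total order on $Q$. In that case I would use the ordering $q_1, q_2, \dots, q_n$ supplied by the WAA structure and set $R := \{(q_i, q_j) \mid i < j\}$. I claim that any synchronizing word $w$ for the WAA automatically satisfies $R \subseteq \ordp$: by the WAA property, every transition from $q_i$ leads to some $q_l$ with $l \geq i$, so along the path induced by $w$ from any starting state $r$ the indices of the visited states are non-decreasing, and the positions at which a fixed $q_i$ is active form a (possibly empty) contiguous prefix. Consequently, as soon as some $q_j$ with $j > i$ first appears on that path, $q_i$ cannot appear again, which gives $\last(q_i, w, \{r\}) < \first(q_j, w, \{r\})$ with the usual conventions $\last = -1$ and $\first = |w|+1$ for absent states. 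Hence the length-bounded problem is equivalent to the length-bounded shortest synchronizing word problem for the underlying WAA, which is itself NP-hard by the same observation.

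The only technical point, and the main (minor) obstacle, is verifying the inequality at the boundary. When $r = q_j$, the state $q_i$ never appears on the path, so the inequality holds trivially via the convention on $\last$; when $r = q_i$, the prefix on which $q_i$ is active may include position~$0$, which is harmless for the $\mathit{0}$- variant and only possibly shortens the prefix for the $\mathit{1}$- variant, keeping the inequality valid. With this bookkeeping, the reduction goes through uniformly and yields NP-hardness of the length-bounded problem for all considered orders.
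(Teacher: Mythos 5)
Your proposal is correct and follows essentially the same route as the paper, which justifies this corollary by the single observation that the Eppstein/Rystsov constructions yield weakly acyclic automata, so that $R=\emptyset$ works for the $l<l$ and $l\leq l$ orders and the WAA state ordering itself serves as the (automatically satisfied) relation for $\ordp$. Your additional bookkeeping for the boundary cases of $\last$/$\first$ is sound (only note that the positions where a fixed state is active form a contiguous \emph{block} of the path, not necessarily a prefix), and it correctly fills in the details the paper leaves implicit.
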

\section{Transferred Results}
\label{sec:transResult}
In \cite{DBLP:journals/tcs/Ryzhikov19a} weakly acyclic automata are considered, which are complete deterministic automata for which the states can be ordered such that no transitions leads to a smaller state in the order. 
We proved in Theorem~\ref{O3TotalEquivPWAA} that the class of automata considered in {\sc Sync-Under-Total-$\mathit{0}$-$\ordp$} is equivalent to the class of partial weakly acyclic automata (PWAA).
In \cite{DBLP:journals/tcs/Ryzhikov19a} the corresponding class of complete weakly acyclic automata is investigated and several hardness results are obtained for different synchronization problems concerning this class of automata. Since complete weakly acyclic automata are a subclass of partial weakly acyclic automata the obtained hardness results easily transfer into our setting. Note that in~\cite{DBLP:journals/tcs/Ryzhikov19a} the approximation results are measured in $n=|Q|$ and not in the size of the input. Hence, the results can be directly transferred despite the fact that in the problem {\sc Sync-Under-Total-$\mathit{0}$-$\ordp$} the input is extended to include the set $R$ of size $|Q|^2$. 
We refer to the decision variant of an optimization problem by the extension \emph{-D} in its name. 
The following results transfer from~\cite{DBLP:journals/tcs/Ryzhikov19a}:
\begin{definition}[\textmd{{\sc Short-Sync-Word-Total-$\mathit{0}$-$\ordp$}}]
	\index{\textsc{Short-Sync-Word-Total-$\mathit{0}$-$\ordp$}}
	Given a DCA $A = (Q, \Sigma, \delta)$, and a strict and total order $R \subseteq Q^2$.
	Output the length of a shortest word $w$ such that $|Q.w| = 1$ and $R \subseteq\, \ordp$.
\end{definition}
\begin{corollary}
	The problem {\sc Short-Sync-Word-Total-$\mathit{0}$-$\ordp$} for $n$-state binary automata cannot be approximated in polynomial time within a factor of $\mathcal{O}(n^{\frac{1}{2} - \epsilon} )$ for any $\epsilon > 0$ unless \PTIME = \NP.
\end{corollary}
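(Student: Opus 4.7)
The plan is to transfer the corresponding inapproximability result for the shortest synchronizing word problem on binary complete weakly acyclic automata, established in~\cite{DBLP:journals/tcs/Ryzhikov19a}, through the polynomial time equivalence of Theorem~\ref{O3TotalEquivPWAA}. Concretely, I would start from the known statement that for binary complete WAAs the length of a shortest synchronizing word cannot be approximated in polynomial time within a factor of $\mathcal{O}(n^{1/2-\epsilon})$ unless $\PTIME=\NP$.

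Next, I would invoke the second direction of the reduction in the proof of Theorem~\ref{O3TotalEquivPWAA}. Given a binary PWAA $A=(Q,\Sigma,\delta)$, that construction produces a binary DCA $A'=(Q\cup\{q_<\},\Sigma,\delta')$ together with a strict total order $R$ on $Q\cup\{q_<\}$ such that synchronizing words for $A'$ under $R$ correspond exactly to carefully synchronizing words of $A$. Since every binary complete WAA is trivially a binary PWAA, and since for a complete automaton a shortest carefully synchronizing word and a shortest synchronizing word coincide, the inapproximability hypothesis transfers to the class of instances $(A',R)$ produced by the reduction.

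I would then argue that the reduction is length preserving up to an additive constant. The extra state $q_<$ is unreachable from any state in $Q$ (any transition to $q_<$ would instantly violate a pair in $R$), and with every letter $q_<$ is sent to the maximal state of $R$, which is the unique synchronizing state. Hence a word $w$ synchronizes $A'$ under $R$ if and only if it carefully synchronizes $A$, and the two optimal lengths are equal. The size of the state set grows only by one, so replacing $n$ by $n+1$ does not affect the asymptotic factor, as $(n+1)^{1/2-\epsilon}=\Theta(n^{1/2-\epsilon})$. Composing the two yields the claimed inapproximability for {\sc Short-Sync-Word-Total-$\mathit{0}$-$\ordp$} on $n$-state binary automata.

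The only delicate point — and the one I would check most carefully — is that the transferred inapproximability is actually stated (or easily extracted) for the \emph{binary} alphabet in~\cite{DBLP:journals/tcs/Ryzhikov19a}, because Theorem~\ref{O3TotalEquivPWAA} does not introduce new letters but also does not collapse the alphabet, so any alphabet bound from the source paper is inherited verbatim. Once that is verified, the rest is a mechanical composition of two reductions and a trivial rescaling of $n$.
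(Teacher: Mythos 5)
Your overall route is the same one the paper takes: this corollary sits in the appendix of ``transferred results'' with no explicit proof beyond the blanket observation that complete WAAs are a subclass of partial WAAs, that Theorem~\ref{O3TotalEquivPWAA} makes {\sc Sync-Under-Total-$\mathit{0}$-$\ordp$} interreducible with careful synchronization of PWAAs, and that the approximation factor is measured in $n=|Q|$ so the added relation $R$ does no harm. Your explicit bookkeeping --- that the backward reduction adds only the single state $q_<$, that $q_<$ is unreachable and merges into the maximal state with every letter so the optimal lengths coincide, and that $(n+1)^{1/2-\epsilon}=\Theta(n^{1/2-\epsilon})$ --- matches the paper's own remark on the length bound for PWAAs and is the right way to make the transfer gap-preserving.

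The point you should check is not the alphabet size but the exact form of the source result. The $\mathcal{O}(n^{1/2-\epsilon})$ inapproximability in~\cite{DBLP:journals/tcs/Ryzhikov19a} for $n$-state binary weakly acyclic automata is stated for the \emph{Shortest Set Sync Word} problem, i.e., the length of a shortest $w$ with $|S.w|=1$ for a \emph{given subset} $S$; for the whole-state-set version only \NP-hardness is recorded there. Your argument takes as its premise an $\mathcal{O}(n^{1/2-\epsilon})$ lower bound for the whole-set shortest synchronizing word of binary complete WAAs, which is exactly what the composition with Theorem~\ref{O3TotalEquivPWAA} needs, since {\sc Short-Sync-Word-Total-$\mathit{0}$-$\ordp$} is defined with $|Q.w|=1$. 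If the source only gives the subset form, an additional gap-preserving step is required --- either by inspecting the source construction to verify that synchronizing the designated subset is equivalent to synchronizing all states, or by exploiting partiality to neutralize the states outside $S$ --- and neither your proposal nor the paper supplies that step explicitly.
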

\begin{definition}[\textmd{{\sc Max-Sync-Set-Total-$\mathit{0}$-$\ordp$}}]
	\index{\textsc{Max-Sync-Set-Total-$\mathit{0}$-$\ordp$}}
	Given a DCA $A = (Q, \Sigma, \delta)$, and a strict and total order $R \subseteq Q^2$.
	Output a set $S\subseteq Q$ of maximum size such that $|S.w| = 1$ and $R \subseteq\, \ordp$.
\end{definition}
\begin{corollary}
	The problem {\sc Max-Sync-Set-Total-$\mathit{0}$-$\ordp$} for $n$-state automata over an alphabet of cardinality $\mathcal{O}(n)$
	cannot be approximated in polynomial time within a factor of $\mathcal{O}(n^{1-\epsilon})$ for any $\epsilon > 0$ unless \PTIME = \NP.
\end{corollary}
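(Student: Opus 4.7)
The plan is to derive this inapproximability result as a direct transfer from the corresponding result for complete weakly acyclic automata (CWAAs) proved in~\cite{DBLP:journals/tcs/Ryzhikov19a}. The key ingredient is the equivalence established in Theorem~\ref{O3TotalEquivPWAA}, which tells us that instances of {\sc Sync-Under-Total-$\mathit{0}$-$\ordp$} are, up to adding a single absorbing error state, exactly the partial weakly acyclic automata; since every CWAA is in particular a PWAA, we obtain a natural embedding of CWAA-instances into our problem.

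Concretely, given a CWAA $A=(Q,\Sigma,\delta)$ with a witnessing ordering $q_1, q_2, \dots, q_n$ of its states (i.e.\ $\delta(q_i,\sigma) = q_j$ implies $i \le j$), I would build the instance of {\sc Max-Sync-Set-Total-$\mathit{0}$-$\ordp$} consisting of the very same DCA $A$ together with the strict total order $R := \{(q_i,q_j) \mid 1 \le i < j \le n\}$. Because every transition of $A$ already respects this order, the order $\ordp[w]$ induced by \emph{any} word $w \in \Sigma^*$ on the paths of $A$ automatically satisfies $R \subseteq \ordp[w]$; consequently, for every subset $S \subseteq Q$, $S$ is synchronizable in $A$ if and only if there is a word $w$ with $|S.w|=1$ and $R \subseteq \ordp[w]$. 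Thus the optimal value of {\sc Max-Sync-Set-Total-$\mathit{0}$-$\ordp$} on $(A,R)$ equals the optimal value of {\sc Max-Sync-Set} on the CWAA $A$.

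Since the reduction preserves the state set exactly (no gadgets or padding), the parameter $n = |Q|$ is unchanged, and it also preserves the alphabet size $\mathcal{O}(n)$. Any polynomial-time approximation algorithm achieving factor $\mathcal{O}(n^{1-\epsilon})$ for {\sc Max-Sync-Set-Total-$\mathit{0}$-$\ordp$} would therefore yield the same approximation factor for {\sc Max-Sync-Set} on $n$-state CWAAs with alphabet of size $\mathcal{O}(n)$, contradicting the result of~\cite{DBLP:journals/tcs/Ryzhikov19a} unless $\PTIME = \NP$.

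The only subtlety, and the main thing to verify carefully, is that the inapproximability factor from~\cite{DBLP:journals/tcs/Ryzhikov19a} is measured in the number of states $n$ rather than in the total input size, because in our setting the input additionally contains the relation $R$ of size $\Theta(n^2)$. Since the cited bound is explicitly stated in $n$, and our reduction does not grow $n$, the factor $\mathcal{O}(n^{1-\epsilon})$ is inherited verbatim; hence the claim follows.
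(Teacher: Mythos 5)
Your proposal is correct and follows essentially the same route as the paper: the paper transfers the inapproximability result for {\sc Max Sync Set} on complete weakly acyclic automata from~\cite{DBLP:journals/tcs/Ryzhikov19a} by observing that a CWAA together with the total order induced by its witnessing state ordering is an instance of {\sc Max-Sync-Set-Total-$\mathit{0}$-$\ordp$} with identical optimum, identical $n$, and identical alphabet, and that the cited bound is measured in $n$ rather than in input size. You merely spell out the embedding and the verification that $R \subseteq\, \ordp[w]$ holds automatically in more detail than the paper does.
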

\begin{corollary}
	The problem {\sc Max-Sync-Set-Total-$\mathit{0}$-$\ordp$} for binary $n$-state automata cannot be approximated in polynomial time within a factor of $\mathcal{O}(n^{\frac{1}{3}- \epsilon})$ for any $\epsilon > 0$ unless \PTIME = \NP.
\end{corollary}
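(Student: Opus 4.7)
The plan is to transfer the $n^{1/3-\epsilon}$ inapproximability from Ryzhikov's result for the Max-Sync-Set problem on binary complete weakly acyclic automata (CWAAs) into our setting by exploiting the correspondence between \textsc{Sync-Under-Total-$\mathit{0}$-$\ordp$} and careful synchronization of PWAAs established in Theorem~\ref{O3TotalEquivPWAA}. Every binary CWAA is in particular a PWAA, so an instance $(A,S)$ of Max-Sync-Set on a binary CWAA translates into an instance of \textsc{Max-Sync-Set-Total-$\mathit{0}$-$\ordp$} by taking $R$ to be the strict total order induced by the weakly acyclic ordering of $A$; the construction of Theorem~\ref{O3TotalEquivPWAA} adds only one auxiliary state $q_<$, which does not affect the asymptotic bound in $n$. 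Hence it suffices to establish the claimed inapproximability for binary CWAAs and import it.

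The second step is to reduce from the large-alphabet version given by the preceding corollary via a standard alphabet-compression gadget for weakly acyclic automata. Given an $n$-state instance with $k = \mathcal{O}(n)$ letters, one replaces each original letter by a short binary sequence that navigates a block of routing states attached to each original state, and commits to the intended $\delta$-transition at the leaf. The routing states are inserted into the total order $R'$ between the source state and its $\delta$-image so that every new transition is forward-directed, and the entire automaton remains weakly acyclic. Each letter is simulated by a unique binary word, so the reduced automaton has the same synchronizable subsets as the original and the maximum synchronizable set size is preserved exactly. Sizing the gadget with one ``column'' per letter at each state gives a total of $n' = \Theta(n \cdot k^2) = \Theta(n^3)$ states; equivalently one may use a smaller $\Theta(nk) = \Theta(n^2)$ gadget and pad with isolated sinks ordered above $q_n$ to reach $n' = \Theta(n^3)$ without altering the optimum.

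Step three is arithmetic: a polynomial-time algorithm approximating \textsc{Max-Sync-Set-Total-$\mathit{0}$-$\ordp$} on the binary instance with ratio $(n')^{1/3 - \epsilon}$ would, via $n' = \Theta(n^3)$ and the size-preservation of the gadget, yield a ratio of $\mathcal{O}(n^{1 - 3\epsilon})$ on the original large-alphabet instance. For $\epsilon > 0$ fixed this contradicts the $\mathcal{O}(n^{1-\epsilon'})$ inapproximability of the preceding corollary with $\epsilon' := 3\epsilon$, hence no such polynomial-time approximation can exist unless $\PTIME = \NP$.

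The main obstacle is the correctness of the alphabet-compression gadget in the presence of the strict total order: the routing states must be embedded into $R'$ so that (i) no transition leads backwards, (ii) spurious synchronizations between routing blocks of different original states are ruled out, and (iii) the optimum value of Max-Sync-Set is preserved up to a constant factor so that the polynomial relation $n' = \Theta(n^3)$ cleanly translates the inapproximability exponent. Once the gadget is verified to have these properties the remainder is bookkeeping, and absorbing the constant factors by shrinking $\epsilon$ yields the stated $\mathcal{O}(n^{1/3 - \epsilon})$ bound.
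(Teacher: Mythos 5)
Your first paragraph is exactly the intended argument, but your second and third steps introduce a genuine gap where none is needed. The cited source \cite{DBLP:journals/tcs/Ryzhikov19a} already proves the $\mathcal{O}(n^{1/3-\epsilon})$ inapproximability of {\sc Max Sync Set} \emph{directly for binary complete weakly acyclic automata} (via its own gap-preserving reduction; the large-alphabet $n^{1-\epsilon}$ bound is a separate theorem there, not the source of the binary one). The paper's proof is therefore just the transfer you describe in paragraph one: a binary CWAA is in particular a PWAA, take $R$ to be the strict total order given by the weakly acyclic ordering (for a \emph{complete} WAA no transition is undefined, so not even the auxiliary state $q_<$ of Theorem~\ref{O3TotalEquivPWAA} is needed and every transition already respects $R$), and note that the approximation factor is measured in $n=|Q|$, so adding $R$ of size $|Q|^2$ to the input does not affect the bound. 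That is the whole argument.

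Your proposed re-derivation of the binary case by alphabet compression does not work as stated. The claim that the routing gadget ``has the same synchronizable subsets as the original and the maximum synchronizable set size is preserved exactly'' is precisely the hard part, and it is doubtful: the $\Theta(n^3)$ routing states are themselves states of the new automaton and can participate in synchronizable sets, so the optimum in the ``no'' case can be inflated by an amount depending on the gadget rather than on the original instance, destroying the gap; moreover a binary word in the new automaton need not decompose into complete letter simulations, so it can strand states mid-block or create merges that have no counterpart in the original automaton. There is no standard black-box compression with these preservation properties for {\sc Max Sync Set} on WAAs --- which is why the binary bound in \cite{DBLP:journals/tcs/Ryzhikov19a} is proved by a separate construction and comes out with the weaker exponent $1/3$ rather than $1$. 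You should drop steps two and three entirely and cite the binary result of \cite{DBLP:journals/tcs/Ryzhikov19a} directly.
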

Observing the reductions given in~\cite{DBLP:journals/tcs/Ryzhikov19a} to obtain the above transferred inapproximability results, we also conclude the following hardness results concerning the parameterized complexity class \We.
\begin{corollary}
	The problem {\sc Max-Sync-Set-Total-$\mathit{0}$-$\ordp$}-D is \We-hard with the parameter $k$ being the given size bound on the set $S$ in the decision variant of the problem.
\end{corollary}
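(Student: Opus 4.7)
The plan is to extract a parameterized reduction from the inapproximability proofs given in~\cite{DBLP:journals/tcs/Ryzhikov19a} for {\sc Max-Sync-Set} on complete weakly acyclic automata (CWAA) and then transport it into our framework using Theorem~\ref{O3TotalEquivPWAA}. Recall that the inapproximability for CWAA proceeds by a reduction from a canonical hard graph problem (such as \textsc{Clique} or \textsc{Independent Set}) in which the maximum synchronizable subset of the constructed CWAA is the clique/independent set size plus a fixed additive gadget contribution. Since \textsc{Clique} is the archetypal \We-hard problem under the natural parameterization by the clique size, such a reduction, read at the level of parameters rather than of approximation gaps, is already a parameterized reduction.

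The first step is to fix the reduction $\phi$ from \cite{DBLP:journals/tcs/Ryzhikov19a} and identify the affine relationship between the source parameter $\ell$ (e.g.\ size of the sought clique) and the target synchronizing set size $k$, verifying that $k = f(\ell)$ for some computable function $f$. The second step is to apply the embedding used in the proof of Theorem~\ref{O3TotalEquivPWAA}: order the states of the produced CWAA $A = (Q, \Sigma, \delta)$ according to its weakly acyclic ordering and let $R$ be precisely this strict total order. Because every transition of $A$ already respects this order, a subset $S \subseteq Q$ is synchronizable in $A$ if and only if $(A, R, S)$ is a yes-instance of {\sc Max-Sync-Set-Total-$\mathit{0}$-$\ordp$}-D with bound $|S|$. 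Hence the composition of $\phi$ with this embedding is a parameterized reduction from a \We-hard problem to our decision problem, with the parameter $k$ of the target instance being $f(\ell)$.

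The main obstacle is therefore not the construction itself but the verification that the reduction in~\cite{DBLP:journals/tcs/Ryzhikov19a} is genuinely parameter-preserving rather than merely polynomial-time. Typical inapproximability reductions for \textsc{Max-Sync-Set} produce an automaton in which the synchronizable subset corresponds to an independent set or clique augmented by a small, problem-independent gadget, so the target parameter $k$ depends only on $\ell$ (and not on the whole graph). One has to inspect the gadget carefully to confirm that its contribution to $k$ is bounded by a function of $\ell$ alone; if the gadget contributes $\Theta(n)$ additional ``always synchronizable'' states, one would need to slightly restructure it so that those states are forced out of any maximal synchronizing subset or are otherwise accounted for in the bound on $k$.

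Together, these two steps yield a parameterized reduction from a \We-hard problem to {\sc Max-Sync-Set-Total-$\mathit{0}$-$\ordp$}-D with parameter $k$, establishing the claimed \We-hardness.
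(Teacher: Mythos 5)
Your proposal matches the paper's (very terse) justification: the paper simply observes that the {\sc Max-Sync-Set} inapproximability reductions in the cited work are from \textsc{Independent Set} and are parameter-preserving, then transfers the result via the equivalence of Theorem~\ref{O3TotalEquivPWAA} (for a complete WAA one just takes $R$ to be its acyclic ordering, under which every word trivially satisfies the order constraint). Your more careful discussion of verifying the affine parameter correspondence is a sound elaboration of the same argument, not a different route.
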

\begin{definition}[\textmd{{\sc Set-Rank-Total-$\mathit{0}$-$\ordp$}}]
	\index{\textsc{Set-Rank-Total-$\mathit{0}$-$\ordp$}}
	Given a DCA $A = (Q, \Sigma, \delta)$, a subset $S\subseteq Q$ and a strict and total order $R \subseteq Q^2$.
	Output the rank of $S$ in $A$ under $\ordp$, that is the size of a smallest set $S'$ such that there exists a word $w$ with $S.w = S'$ and $R \subseteq\, \ordp$.
\end{definition}
\begin{corollary}
	The problem {\sc Set-Rank-Total-$\mathit{0}$-$\ordp$} for $n$-state  automata with alphabet of size $\mathcal{O}(\sqrt{n})$ cannot be approximated within a factor of $\mathcal{O}(n^{\frac{1}{2}-\epsilon})$ for any $\epsilon > 0$ unless \PTIME = \NP.
\end{corollary}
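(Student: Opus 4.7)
The plan is to derive this corollary via a direct approximation-preserving reduction from the analogous {\sc Set-Rank} inapproximability result on \emph{complete} weakly acyclic automata (CWAAs) established in~\cite{DBLP:journals/tcs/Ryzhikov19a}. The key observation, parallel to one direction of Theorem~\ref{O3TotalEquivPWAA}, is that every CWAA embeds verbatim into our setting: given a CWAA $A=(Q,\Sigma,\delta)$ whose states admit the acyclic ordering $q_1 < q_2 < \cdots < q_n$, together with a target subset $S\subseteq Q$, I would simply output the triple $(A, S, R)$ with $R:=\{(q_i,q_j)\mid i<j\}$, which is a strict total order on $Q$. No states are added and no letters are introduced, so both $n$ and $|\Sigma|$ are preserved exactly; in particular an $\mathcal{O}(\sqrt{n})$-size alphabet on the source side remains $\mathcal{O}(\sqrt{n})$ on the target side.

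The correctness check, which I would do next, is essentially automatic. Because $A$ is weakly acyclic with respect to the chosen ordering, no transition ever sends a state to a smaller one, so every word $w\in\Sigma^*$ trivially satisfies $R\subseteq\,\ordp$ on every single-state starting path; moreover $\delta$ is total, so every word is applicable. Hence the family of reachable images $\{S.w\mid w\in\Sigma^*\}$ in the CWAA is identical to the family of images reachable under the order constraint in the constructed {\sc Sync-Under-Total-$\mathit{0}$-$\ordp$} instance. Thus the rank of $S$ coincides in both formulations, and any $\alpha$-approximation algorithm for the ordered problem immediately yields an $\alpha$-approximation for the CWAA problem. Invoking the $\Omega(n^{1/2-\epsilon})$ inapproximability lower bound of~\cite{DBLP:journals/tcs/Ryzhikov19a} for {\sc Set-Rank} on CWAAs with $\mathcal{O}(\sqrt{n})$ alphabet then yields the claim, unless $\PTIME=\NP$.

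The hard part, to the extent there is one, is not in the reduction itself, which is essentially the identity, but in matching the parameter conventions of the source paper. Specifically, I would need to confirm that the cited bound is stated with $n$ denoting the number of states (as opposed to the total input size) and for precisely the $\mathcal{O}(\sqrt{n})$ alphabet regime. If the source bound is phrased in terms of total input size, one must rescale through $|R|\le n^2$, which only strengthens the inapproximability factor in our setting since the representation of $R$ contributes polynomially in $n$. Beyond this bookkeeping, no new structural argument is needed, and the same transfer template will simultaneously yield the preceding inapproximability corollaries in this appendix.
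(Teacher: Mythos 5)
Your proposal is correct and follows essentially the same route as the paper: the paper transfers the {\sc Set-Rank} inapproximability bound for complete weakly acyclic automata from~\cite{DBLP:journals/tcs/Ryzhikov19a} by observing that a CWAA, equipped with its acyclic state ordering as the strict total order $R$, is verbatim an instance of the ordered problem (this is the CWAA special case of the equivalence in Theorem~\ref{O3TotalEquivPWAA}, with no states or letters added), and it makes the same remark you do that the bound is measured in $n=|Q|$ so the added $|R|\le n^2$ part of the input does not affect the stated factor. Nothing further is needed.
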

\begin{corollary}
	The problem {\sc Set-Rank-Total-$\mathit{0}$-$\ordp$} for $n$-state binary automata cannot be approximated within a factor of $\mathcal{O}(n^{\frac{1}{3}-\epsilon})$ for any $\epsilon > 0$ unless \PTIME = \NP.
\end{corollary}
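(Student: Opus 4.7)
The plan is a direct reduction from the {\sc Set-Rank} problem for binary complete weakly acyclic automata (CWAAs), for which \cite{DBLP:journals/tcs/Ryzhikov19a} establishes the same $\mathcal{O}(n^{\frac{1}{3}-\epsilon})$ inapproximability threshold (this is precisely the result that gave Corollary immediately above). Given a binary CWAA $A=(Q,\Sigma,\delta)$ with $|Q|=n$ and a subset $S \subseteq Q$, I first compute a topological ordering $q_1,q_2,\dots,q_n$ of $Q$ such that $\delta(q_i,a)=q_j$ implies $i \leq j$; such an ordering is guaranteed by weak acyclicity and can be obtained in polynomial time by topologically sorting the transition graph after stripping self-loops.

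Then I set $R := \{(q_i,q_j) \mid 1 \leq i < j \leq n\}$, which is a strict total order, and output $(A,S,R)$ as the instance of {\sc Set-Rank-Total-$\mathit{0}$-$\ordp$}. The crucial observation is that $R$ is automatically respected by every word in $A$: for any $w \in \Sigma^*$ and any $r \in Q$, the induced path $r, r.w[1], r.w[1..2], \dots, r.w$ is non-decreasing w.r.t.\ the index ordering, because every single-letter transition of $A$ is non-decreasing by construction. Consequently, for a pair $(q_i,q_j) \in R$ with $i<j$ and any path starting at $r$, once $q_j$ has appeared the path only visits states $q_k$ with $k \geq j > i$, so $q_i$ cannot reappear. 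This yields $\last(q_i,w,\{r\}) < \first(q_j,w,\{r\})$ in all cases (using the conventions $\last = -1$ and $\first = |w|+1$ when a state is absent, for which $-1 < 0 \leq |w|+1$ suffices), hence $R \subseteq \ordp$ for every word $w$.

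It follows that the constraint $R \subseteq \ordp$ is vacuous on the reduced instance, so for every $S \subseteq Q$ the optimum of {\sc Set-Rank-Total-$\mathit{0}$-$\ordp$} on $(A,S,R)$ coincides with the unconstrained set rank of $S$ in the CWAA $A$. Since the reduction preserves the number of states exactly (still $n$) and keeps the alphabet binary, any polynomial-time algorithm approximating {\sc Set-Rank-Total-$\mathit{0}$-$\ordp$} on binary $n$-state inputs within a factor $\mathcal{O}(n^{\frac{1}{3}-\epsilon})$ would immediately approximate set-rank on binary CWAAs within the same factor, contradicting \cite{DBLP:journals/tcs/Ryzhikov19a} unless $\PTIME = \NP$. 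There is no genuine obstacle: the work is entirely in verifying that the witnessing order of the CWAA can be re-used as the input relation $R$ without changing the optimal rank, which the non-decreasing-path argument above settles in a few lines.
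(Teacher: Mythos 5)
Your proof is correct and takes essentially the same route as the paper: both transfer Ryzhikov's $\mathcal{O}(n^{\frac{1}{3}-\epsilon})$ inapproximability of set rank for binary complete weakly acyclic automata by observing that the topological order of such an automaton can be supplied as the strict total order $R$, under which the constraint $R \subseteq\, \ordp$ is automatically satisfied by every word, so the optimum is unchanged. Your version is marginally more direct in that, the input automaton being complete, you bypass the auxiliary state $q_<$ of the general PWAA-to-{\sc Sync-Under-Total-$\mathit{0}$-$\ordp$} reduction (Theorem~\ref{O3TotalEquivPWAA}) and argue the vacuousness of $R$ on the original $n$ states explicitly.
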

\begin{definition}[\textmd{{\sc Sync-Into-Subset-$\mathit{0}$-$\ordp$}}]
	\index{\textsc{Sync-Into-Subset-$\mathit{0}$-$\ordp$}}
	Given a DCA $A = (Q, \Sigma, \delta)$, a subset $S\subseteq Q$ and a strict and total order $R \subseteq Q^2$.
	Does there exist a word $w$ with $Q.w = S$ and $R \subseteq\, \ordp$.
\end{definition}
\begin{corollary}
	The problem {\sc Sync-Into-Subset-$\mathit{0}$-$\ordp$} is \NP-hard.
\end{corollary}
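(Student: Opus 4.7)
The plan is to derive this corollary as a direct transfer from the NP-hardness of the analogous \textsc{Sync-Into-Subset} problem for complete weakly acyclic automata (CWAAs), which is among the results established in~\cite{DBLP:journals/tcs/Ryzhikov19a} and already invoked throughout this appendix. The bridge is the same one that Theorem~\ref{O3TotalEquivPWAA} uses: when $R$ is chosen to be the natural total order on a weakly acyclic automaton, the constraint $R \subseteq \ordp$ holds for free.

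First I would fix an arbitrary instance $(A, S)$ of \textsc{Sync-Into-Subset} restricted to CWAAs, where $A = (Q, \Sigma, \delta)$ and $q_1 < q_2 < \dots < q_n$ is a weakly acyclic ordering of its states (so $\delta(q_i,\sigma) = q_j$ implies $j \geq i$). I would output the triple $(A, S, R)$ with $R := \{(q_i,q_j) \mid i<j\}$, which is trivially a strict total order. This is the entire reduction; it is clearly polynomial-time computable.

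For correctness I would argue that on any weakly acyclic automaton and any $w \in \Sigma^*$, the relation $R \subseteq \ordp[w]$ holds \emph{automatically}. Indeed, fix any starting state $r$; the induced path $r, r.w[1], r.w[1..2], \dots$ is non-decreasing in the ordering, so once the path leaves $q_j$ it never returns to $q_j$. Hence for any pair $(q_j,q_k) \in R$ with $j < k$, whenever both appear on the path we have $\last(q_j, w, \{r\}) < \first(q_k, w, \{r\})$, and the convention $\last(\cdot)=-1$, $\first(\cdot)=|w|+1$ handles the degenerate cases. Since this applies to every $r \in Q$, the order constraint is vacuous on CWAAs with this choice of $R$, and feasibility of $w$ in our instance reduces to $Q.w = S$ in $A$.

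The only real bookkeeping concerns whether the source hardness is formulated with ``$\subseteq S$'' or ``$= S$''; the main obstacle, if any, is this notational gap, since the definition in this appendix uses equality. I would close it by the standard padding trick: append one fresh letter $c$ together with a gadget that deterministically distributes the possible synchronizing targets over all of $S$ while remaining weakly acyclic (a broom of fresh maximal states in the order, reached from the candidate targets via $c$). This guarantees that the images of $Q$ under a feasible word realize $S$ exactly rather than merely a subset, at the cost of only $O(|S|)$ additional states; the added states are placed as the top of the order so that $R$ remains a strict total order and the argument above still applies. The resulting instance is a yes-instance of \textsc{Sync-Into-Subset-$\mathit{0}$-$\ordp$} iff the original CWAA instance is a yes-instance, yielding the claimed \NP-hardness.
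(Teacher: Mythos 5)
Your reduction is exactly the transfer the paper intends: it invokes the NP-hardness of exact subset reachability for complete weakly acyclic automata from Ryzhikov and observes, via the same bridge as Theorem~\ref{O3TotalEquivPWAA}, that taking $R$ to be the weakly acyclic total order makes the constraint $R \subseteq\, \ordp$ vacuous, so the instance is unchanged. Your final padding gadget for the ``$\subseteq S$ versus $=S$'' issue is unnecessary here, since the source hardness result (subset reachability, $Q.w = S$) already matches the equality in the definition of {\sc Sync-Into-Subset-$\mathit{0}$-$\ordp$}; otherwise the argument is correct and essentially identical to the paper's.
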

\section{Conclusion}
We discussed ideas how constraints for the design of assembly lines caused by the physical deformation of a part can be described in terms of synchronization problems. For that, we considered several ways how a word can imply an order of states in $Q$. We considered the complexity of synchronizing an automaton under different variants of orders and observed that the complexity of considering an order on the set of active states may differ from considering the order on each single path. 
Although we were able to get a good understanding of the complexity of synchronization under the considered orders, some questions remained open: We only know that {\sc Sync-Under-$\orep$} is contained in \NP\ but it is open whether the problem is \NP-complete or if it can be solved in polynomial time. Conversely, for {\sc Sync-Under-$\mathit{0}$-$\orzp$} the problem is \NP-hard but its precise complexity is unknown. It would be quite surprising to observe membership in~\NP\ here since it would separate the complexity of this problem from the closely related problem {\sc Sync-Under-$\mathit{1}$-$\orzp$}. 
Further, it remains open whether for the other orders
a drop in the complexity can be observed, when $R$ is strict and total, as it is the case for $\ordp$.
\index{dynamic constraints|)}
\nocite{DBLP:journals/ipl/Rystsov83}
\nocite{DBLP:journals/jcss/Savitch70}
\nocite{DBLP:conf/dagstuhl/Sandberg04}
\bibliography{sync_under_order}
\end{document}